\title{Using ciliate operations to construct chromosome phylogenies}
\author{{Jacob L. Herlin}, {Anna Nelson} and {Marion Scheepers}}
\thanks{This research was started during the Summer of 2011 when Herlin and Nelson participated in the Boise State University Mathematics R.E.U. program, funded by NSF grant DMS 1062857. We gratefully acknowledge funding by the NSF and by Boise State University. 
}
\date{January 6, 2014} 
\subjclass[2000]{05E15, 20B99, 92-08, 92D15, 92D99}
\keywords{Permutations, reversals, block interchanges, fruitfly, ciliate, phylogeny}
\newtheorem{theorem}{\bf Theorem}
\newtheorem{lemma}[theorem]{\bf Lemma}
\newtheorem{definition}{\bf Definition}
\newcommand{\integers}{{\mathbb Z}}
\begin{document}
\begin{abstract}
Whole genome sequencing has revealed several examples where genomes of different species are related by permutation. The number of certain types of rearrangements needed to transform one permuted list into another can measure the distance between such lists. Using an algorithm based on three basic DNA editing operations suggested by a model for ciliate micro nuclear decryption, this study defines the distance between two permutations to be the number of ciliate operations the algorithm performs during such a transformation. Combining well-known clustering methods with this distance function enables one to construct corresponding phylogenies. These ideas are illustrated by exploring the phylogenetic relationships among the chromosomes of eight fruitfly (drosophila) species, using the well-known UPGMA algorithm on the distance function provided by the ciliate operations. 

\end{abstract}
\maketitle

\vspace{0.1in}

Over evolutionary 
time ``local" DNA editing events such as nucleotide substitutions, deletions or insertions diversify the set of DNA sequences present in organisms. Results of whole genome sequencing suggest that also ``global" DNA editing events 
diversify these DNA sequences. 

Consider two species $S_1$ and $S_2$ with a common ancestor whose genome was organized over $n$ linear chromosomes. A gene $G$ of the ancestor was inherited as gene $G_1$ by species $S_1$ and as gene $G_2$ by species $S_2$. $G_1$ and $G_2$ are \emph{orthologous} genes, or simply \emph{orthologs}. Assume that the species $S_1$ and $S_2$ each also has $n$ chromosomes, and that for each ancestral chromosome $i$, the orthologs of any ancestral gene on chromosome $i$ are also in the descendant species $S_1$ and $S_2$ on the corresponding chromosome $i$. This assumption is known, in the context of certain fruitfly species, as the \emph{Muller hypothesis}\footnote{Named after H.J. Muller who observed in the 1940's \cite{HJM} that for the data then known for relatives of \emph{Drosophila melanogaster} this assumption is true even fo chromosome arms.}. In this paper we shall assume the Muller hypothesis for our applications. 

It may happen that the order in which orthologs on chromosome $i$ appear in species $S_1$ is different from the order in which they appear in species $S_2$. In this case chromosome $i$ in each of these two species can be partitioned into a number, say k, of \emph{synteny blocks}\footnote{This definition of a synteny block is more restrictive than the one used in \cite{BSRXSG}: The latter allows for differences in gene order up to a certain threshold, and does not allow for single gene blocks. See the section ``An application to genome phylogenetics" for more information.}: A synteny block is a maximal list of adjacent orthologous genes 
that have the same adjacencies in the two species. In this definition of a synteny block, we permit blocks consisting of single genes. An endpoint of a synteny block is also called a \emph{breakpoint}. Synteny blocks may have opposite orientation in two species. Thus the synteny blocks of chromosome $i$ of species $S_1$ is a \emph{signed} permutation of the corresponding synteny blocks of chromosome $i$ of species $S_2$.  This phenomenon is observed in several branches in the tree of life. Figure \ref{fig:HumanMouseX} illustrates the phenomenon for 11 synteny blocks of orthologous genes in the X chromosome of human and mouse.

\begin{center}
          \begin{figure}[h]
			\begin{tikzpicture}[scale = 0.7] 		
          		\tikzstyle{every node}=[]
			\node at (3.25, 3.95) {\tiny{\bf Human X}};

			\draw [fill = {rgb: white,1; red, 1; orange, 0; yellow, 0; green, 1; blue, 0; violet, 1}] (10,3) rectangle (11.5,3.5);                   
			\draw [fill = {rgb: white,1; red, 1; orange, 0; yellow, 0; green, 1; blue, 0; violet, 0}] (8.5,3) rectangle (10,3.5);                     
			\draw [fill = {rgb: white,1; red, 0; orange, 3; yellow, 0; green, 0; blue, 0; violet, 0}] (7,3) rectangle (8.5,3.5);                        
			\draw [fill = {rgb: white,1; red, 3; orange, 0; yellow, 0; green, 0; blue, 0; violet, 0}] (5.5,3) rectangle (7,3.5);                        
			\draw [fill =  {rgb: white,1; red, 0; orange, 0; yellow, 0; green, 0; blue, 0; violet, 1}] (4,3) rectangle (5.5,3.5);                        
			\draw [fill =  {rgb: white,1; red, 1; orange, 0; yellow, 0; green, 0; blue, 0; violet, 1}] (2.5,3) rectangle (4,3.5);                        
			\draw [fill =  {rgb: white,1; red, 0; orange, 0; yellow, 0; green, 0; blue, 1; violet, 0}] (1,3) rectangle (2.5,3.5);                        
			\draw [fill =  {rgb: white,1; red, 0; orange, 1; yellow, 0; green, 3; blue, 0; violet, 0}] (-0.5,3) rectangle (1,3.5);                        
			\draw [fill =  {rgb: white,1; red, 0; orange, 0; yellow, 1; green, 0; blue, 1; violet, 0}] (-2,3) rectangle (-0.5,3.5);                       
			\draw [fill = {rgb: white,1; red, 0; orange, 1; yellow, 0; green, 0; blue, 0; violet, 0}](-3.5,3) rectangle (-2,3.5);                         
			\draw [fill = {rgb: white,1; red, 1; orange, 0; yellow, 0; green, 0; blue, 0; violet, 0}] (-5,3) rectangle (-3.5,3.5);   

			\node at (10.75, 3.25) {\color{white}\tiny${\mathbf 11}$};
			\node at (9.25, 3.25) {\color{white}\tiny${\mathbf 10}$};
			\node at (7.75, 3.25) {\color{white}\tiny${\mathbf 9}$};
			\node at (6.25, 3.25) {\color{white}\tiny${\mathbf 8}$};
			\node at (4.75, 3.25) {\color{white}\tiny${\mathbf 7}$};
			\node at (3.25, 3.25) {\color{white}\tiny${\mathbf 6}$};
			\node at (1.75, 3.25) {\color{white}\tiny${\mathbf 5}$}; 
			\node at (0.25, 3.25) {\color{white}\tiny${\mathbf 4}$};
			\node at (-1.25, 3.25) {\color{white}\tiny${\mathbf 3}$};
			\node at (-2.75, 3.25) {\color{white}\tiny${\mathbf{2}}$};
			\node at (-4.25, 3.25) {\color{white}\tiny${\mathbf 1}$};
\node(p) at (11.0,2.95){} 
edge [->, left] node[above=3pt,name=e2]{} (6.25,0.75);

\node(p) at (9.70,2.85){} 
edge [->, left] node[above=3pt,name=e2]{} (0.25,0.75);

\node(p) at (8,2.90){} 
edge [->, right] node[above=3pt,name=e2]{} (1.75,0.75);

\node(p) at (6.55,2.95){} 
edge [->, right] node[above=3pt,name=e2]{} (3.25,0.75);

\node(p) at (5,2.90){} 
edge [->, right] node[above=3pt,name=e2]{} (-2.75,0.75);

\node(p) at (3.5,2.92){} 
edge [->, left] node[above=3pt,name=e2]{} (-1.25,0.75);

\node(p) at (1.55,2.85){} 
edge [->, right] node[above=3pt,name=e2]{} (9.25,0.75);

\node(p) at (0.00,2.85){} 
edge [->, right] node[above=3pt,name=e2]{} (10.5,0.75);

\node(p) at (-1.55,2.85){} 
edge [->, right] node[above=3pt,name=e2]{} (7.75,0.75);

\node(p) at (-3,2.85){} 
edge [->, right] node[above=3pt,name=e2]{} (4.5,0.75);

\node(p) at (-4.25,3.05){} 
edge [->, right] node[above=3pt,name=e1]{} (-4.25,0.75); 

			\node at (3.25, -0.5) {\tiny{\bf Mouse X}};

			\draw [fill = {rgb: white,1; red, 0; orange, 1; yellow, 0; green, 3; blue, 0; violet, 0}] (10,0) rectangle (11.5,0.5);                   
			\draw [fill = {rgb: white,1; red, 0; orange, 0; yellow, 0; green, 0; blue, 1; violet, 0}] (8.5,0) rectangle (10,0.5);                     
			\draw [fill = {rgb: white,1; red, 0; orange, 0; yellow, 1; green, 0; blue, 1; violet, 0}] (7,0) rectangle (8.5,0.5);                        
			\draw [fill = {rgb: white,1; red, 1; orange, 0; yellow, 0; green, 1; blue, 0; violet, 1}] (5.5,0) rectangle (7,0.5);                        
			\draw [fill =  {rgb: white,1; red, 0; orange, 1; yellow, 0; green, 0; blue, 0; violet, 0}] (4,0) rectangle (5.5,0.5);                        
			\draw [fill =   {rgb: white,1; red, 3; orange, 0; yellow, 0; green, 0; blue, 0; violet, 0}] (2.5,0) rectangle (4,0.5);                        
			\draw [fill =  {rgb: white,1; red, 0; orange, 3; yellow, 0; green, 0; blue, 0; violet, 0}] (1,0) rectangle (2.5,0.5);                        
			\draw [fill =  {rgb: white,1; red, 1; orange, 0; yellow, 0; green, 1; blue, 0; violet, 0}] (-0.5,0) rectangle (1,0.5);                        
			\draw [fill = {rgb: white,1; red, 1; orange, 0; yellow, 0; green, 0; blue, 1; violet, 0} ] (-2,0) rectangle (-0.5,0.5);                       
			\draw [fill = {rgb: white,1; red, 0; orange, 0; yellow, 0; green, 0; blue, 0; violet, 1}](-3.5,0) rectangle (-2,0.5);                         
			\draw [fill = {rgb: white,1; red, 1; orange, 0; yellow, 0; green, 0; blue, 0; violet, 0}] (-5,0) rectangle (-3.5,0.5);   

			\node at (10.75, 0.25) {\color{white}\tiny${\mathbf 4}$};
			\node at (9.25, 0.25) {\color{white}\tiny${\mathbf 5}$};
			\node at (7.75, 0.25) {\color{white}\tiny${\mathbf -3}$};
			\node at (6.25, 0.25) {\color{white}\tiny${\mathbf -11}$};
			\node at (4.75, 0.25) {\color{white}\tiny${\mathbf 2}$};
			\node at (3.25, 0.25) {\color{white}\tiny${\mathbf -8}$};
			\node at (1.75, 0.25) {\color{white}\tiny${\mathbf 9}$}; 
			\node at (0.25, 0.25) {\color{white}\tiny${\mathbf -10}$};
			\node at (-1.25, 0.25) {\color{white}\tiny${\mathbf 6}$};
			\node at (-2.75, 0.25) {\color{white}\tiny${\mathbf{-7}}$};
			\node at (-4.25, 0.25) {\color{white}\tiny${\mathbf 1}$};
			\end{tikzpicture}
 \caption{The permutation between 11 synteny blocks of the human and the mouse X chromosomes. A negative symbol denotes an orientation
    change by a 180$^0$ rotation  of a synteny block. The lengths of synteny blocks are not to scale. See Figure 2 of \cite{PT}.}
  \label{fig:HumanMouseX}
\end{figure}
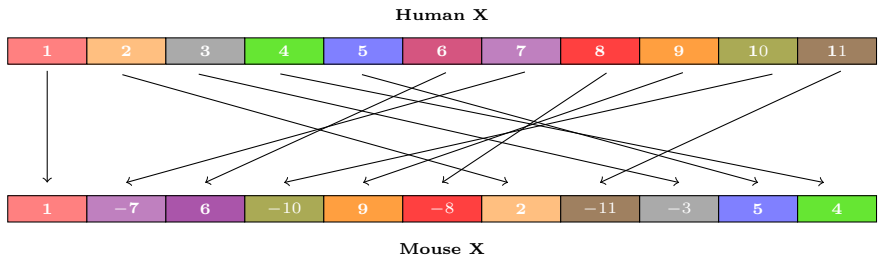
\end{center}


Since Dobshansky and Sturtevant's works \cite{DS} and \cite{SD} in the 1930's on fruitfly genomes it has been popular to use \emph{reversals}\footnote{A reversal is a rotation of a DNA segment through 180$^0$. Reversals are also called inversions.} as the primary ``global" DNA sequence editing operation to describe phylogenetic relationships among genomes. See for example \cite{BP} and \cite{HP}.

An insightful phylogenetic analysis that includes fine structural elements of reversals is given in \cite{BSRXSG}. \cite{BSRXSG} addresses the question whether a reversal can occur at arbitrary locations in the genome of an organism. Certain locations, which would disrupt the coding region of an essential gene, would not be observed in extant organisms. Similarly, locations that negatively affect the fitness of organisms would disappear over time due to ``purifying selection". Additionally, certain sequence motifs may actually promote DNA recombination that results in a genome rearrangement. For example \cite{CW} reports a correlation between \emph{breakpoints}\footnote{Referring to the Mouse X chromosome in Figure \ref{fig:HumanMouseX}, a breakpoint is a transition point between synteny blocks that are not consecutively numbered.} associated with rearrangements, and repetitive DNA. In the review \cite{DH} a similar correlation between rearrangements in bacterial genomes and repetitive DNA is discussed. These considerations suggest that genome rearrangement events that lead to the diverse genomes we observe in nature are not arbitrary, but constrained by contexts. In this paper we explore the use of \emph{context directed} DNA recombination events to analyze genome rearrangements and to construct a phylogeny based on these. 

In recent years also transpositions and block interchanges have been considered as possible ``global" DNA sequence editing operations - \cite{BP2}, \cite{CW}, \cite{MM}, \cite{YAF}. In a \emph{block interchange}, two disjoint segments of a chromosome exchange locations without changing orientation. Thus, in Figure \ref{fig:HumanMouseX}, synteny blocks 2 and 7 would have been a block interchange if synteny block 7 did not also undergo a reversal. A \emph{transposition} is a special block interchange where the two segments that exchange location are adjacent. In Figure \ref{fig:HumanMouseX} synteny blocks 4 and 5 illustrate a transposition.

On p. 1661 of \cite{BSRXSG}, in the discussion of selection of genes to which their analysis of rearrangements in fruitfly genomes apply, the authors indicate that genes deemed to have been relocated by a transposition rather than a reversal have been explicitly removed from the analysis. Thus, the analysis of \cite{BSRXSG} features reversals exclusively. On the other hand, the analysis in  \cite{CW} of rearrangements in the genomes of two nematode species includes reversals, transpositions and \emph{translocations}. A translocation occurs when segments from two different chromosomes exchange positions. In this paper we explore only reversals and block interchanges (both constrained by contexts) in the analysis of rearrangements.

Experimental results from ciliate laboratories present us with examples of DNA editing operations that routinely occur during developmental processes in these organisms. The textbook \cite{EHPPR} and the two surveys \cite{Prescott} and \cite{Prescott2} give a good starting point for information about these ``ciliate operations" and the corresponding biological background. We shall call the yet to be fully identified system in ciliates that accomplishes micro nuclear decryption\footnote{Some details regarding this process are given below in Section 1.}, the \emph{ciliate decryptome}. 

We shall illustrate how to use ``ciliate operations" to deduce potential phylogenetic relationships from genome rearrangement phenomena. Previous work, including \cite{BP}, \cite{BSRXSG} and \cite{HP}, used unconstrained reversals to deduce phylogenetic relationships. Our main ideas are to use ciliate genomic elements to model two genomes related by permutations of locations and orientations of synteny blocks,  
to apply the context directed DNA operations of the ciliate decryptome to define a distance function between the relevant permuted genomes, and to then use a classical distance based algorithm to derive phylogenies. Of the several different distance based algorithms available we selected the UPGMA algorithm\footnote{Descriptions of UPGMA can be found in the online Chapter 27 of \cite{BBEGP}, or in the textbook \cite{CB}.}. 

Then we apply these ideas to chromosomes of eight species of fruit flies (\emph{Drosophila}) to obtain a phylogeny for each of these chromosomes. 

The use of ciliate operations as the basis for deriving a distance function has the 
attractive feature that the ciliate decryptome is programmable \cite{NDL}, and the computational steps taken by the decryptome can be monitored under laboratory conditions \cite{MZC}. Thus, there are extant organisms that are poised to be employed as DNA computing devices naturally equipped to determine phylogenetic relationships among permuted genomes. 

Our paper is organized as follows: In Section 1 we briefly describe ciliate nuclear duality. This duality is the basis for modeling pairs of genomes related by permutation as genetic elements of the ciliate genome. In Section 2 we briefly describe the context directed DNA operations of the ciliate decryptome. 
In Section 3 we introduce and analyze  
the mathematical notion of a pointer list.  
In Section 4 we model relevant features of the ciliate decryptome's DNA operations by mathematical operations on pointer lists. 
In Section 5 we describe an algorithm which we call the HNS algorithm, that uses these operations on pointer lists to compute the distance between chomosomes that are related by permutation. 
In Section 6 we use data downloaded from flybase.org and the HNS- and UPGMA algorithms to construct phylogenies over eight species for each of the fruitfly chromosomes. In the closing Section 7 we discuss possible future directions related to this work.

\section{Ciliates and nuclear duality.}

A ciliate is a single cell eukaryote that hosts two types of nuclei: one type, the macro nucleus, contains the transcriptionally active somatic genome, while the other type, the micro nucleus, contains a transcriptionally silent germline-like genome. The micro nuclear genome is, in the technical sense of the word, an encrypted version of the macro nuclear genome. Special events in the ciliate life cycle predictably trigger conjugation between a pair of mating-compatible cells. Conjugation results in what amounts to a Diffie-Hellman exchange\footnote{A Diffie-Hellman exchange is a cryptographic protocol for secure exchange of a secret key in a hostile environment. The conjugants exchange a haploid copy of the germline genome, which is an encrypted version of the somatic genome.} between two conjugants, the formation of a new micro nucleus in each, and the decryption of one of more copies of the new micro nuclear genome to establish a replacement macro nuclear genome, while in each conjugant the instances of its pre-existing genome are discarded. Readers interested in a thorough survey of ciliate nuclear duality could consult \cite{Prescott}. 

\begin{center}{\bf The relationship between micro and macro nuclear DNA}\end{center}

To describe the experimentally observed relationship between the micro nuclear and macro nuclear DNA molecules, consider Figure \ref{fig:MICDiagram3}:

 \begin{center}
          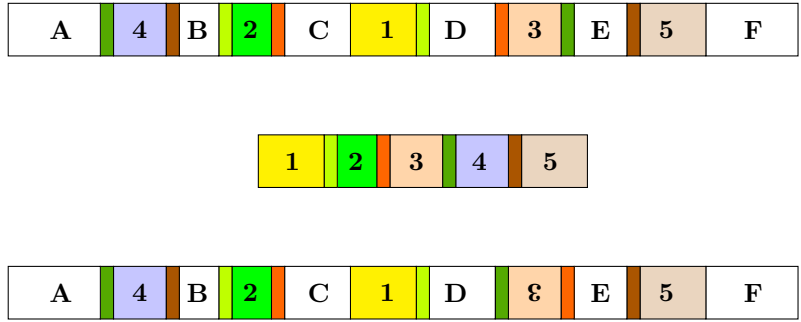
\begin{figure}[h]
			\begin{tikzpicture}[scale = 0.7] 		
          		\tikzstyle{every node}=[]
			\draw [fill = white] (8.25,5) rectangle (10.0,6);                     
			\draw [fill = {rgb: brown,1; blue,0;white,2}] (7.0,5) rectangle (8.25,6);
			\draw [fill =  {rgb: red,2; blue,0;green,1}] (6.75,5) rectangle (7.0,6);
			\draw [fill = white] (5.75,5) rectangle (6.75,6);
			\draw [fill =  {rgb: red,1; blue, 0; green, 2}] (5.5,5) rectangle (5.75,6);
			\draw [fill = {rgb: orange,1; white,2}] (4.5,5) rectangle (5.5,6);
			\draw [fill = {rgb: red,3; blue,0;yellow,2}] (4.25,5) rectangle (4.5,6);
			\draw [fill = white] (3.00,5) rectangle (4.25,6);
			\draw [fill = {rgb: yellow,3; blue,0;green,1}] (2.75,5) rectangle (3.00,6);			
			\draw [fill = yellow] (1.5,5) rectangle (2.75,6);
			\draw [fill = white] (0.25,5) rectangle (1.5,6);
			\draw [fill = {rgb: red,3; blue,0;yellow,2}] (0,5) rectangle (.25,6);			
			\draw [fill = green] (-0.75,5) rectangle (0,6);
			\draw [fill = {rgb: yellow,3; blue,0;green,1}] (-1,5) rectangle (-0.75,6);			
			\draw [fill = white] (-1.75,5) rectangle (-1,6);
			\draw [fill = {rgb: red,2; blue,0;green,1}] (-2,5) rectangle (-1.75,6);
			\draw [fill = {rgb: blue,2;white,7}] (-3,5) rectangle (-2,6);
			\draw [fill = {rgb: red,1; blue, 0; green, 2}](-3.25,5) rectangle (-3,6); 
			\draw [fill = white] (-5,5) rectangle (-3.25,6);

			\node at (9.15, 5.5) {${\mathbf F}$};
			\node at (7.5, 5.5) {${\mathbf 5}$};
			\node at (6.25, 5.5) {${\mathbf E}$};
			\node at (5.0, 5.5) {${\mathbf 3}$};
			\node at (3.5, 5.5) {${\mathbf D}$};
			\node at (2.2, 5.5) {${\mathbf 1}$};
			\node at (0.9, 5.5) {${\mathbf C}$}; 
			\node at (-0.4, 5.5) {${\mathbf 2}$};
			\node at (-1.4, 5.5) {${\mathbf B}$};
			\node at (-2.5, 5.5) {${\mathbf{4}}$};
			\node at (-4.0, 5.5) {${\mathbf A}$};
			\draw [fill = {rgb: brown,1; blue,0;white,2}] (4.75,2.5) rectangle (6.00,3.5);			
			\draw [fill = {rgb: red,2; blue,0;green,1}] (4.5,2.5) rectangle (4.75,3.5); 
			\draw [fill = {rgb: blue,2;white,7}] (3.5,2.5) rectangle (4.5,3.5); 
			\draw [fill = {rgb: red,1; blue, 0; green, 2}] (3.25,2.5) rectangle (3.5,3.5);
			\draw [fill = {rgb: orange,1; white,2}] (2.25,2.5) rectangle (3.25,3.5);
			\draw [fill = {rgb: red,3; blue,0;yellow,2}] (2,2.5) rectangle (2.25,3.5);			
			\draw [fill = green] (1.25,2.5) rectangle (2,3.5);
			\draw [fill = {rgb: yellow,3; blue,0;green,1}] (1,2.5) rectangle (1.25,3.5);
			\draw [fill = yellow] (-0.25,2.5) rectangle (1,3.5);

			\node at (5.3, 3.0) {${\mathbf 5}$};
			\node at (3.95, 3.0) {${\mathbf 4}$};
			\node at (2.75, 3.0) {${\mathbf 3}$};
			\node at (1.6, 3.0) {${\mathbf 2}$};
			\node at (0.4, 3.0) {${\mathbf{1}}$};
			\draw [fill = white] (8.25,0) rectangle (10.0,1);                     
			\draw [fill = {rgb: brown,1; blue,0;white,2}] (7.0,0) rectangle (8.25,1);
			\draw [fill =  {rgb: red,2; blue,0;green,1}] (6.75,0) rectangle (7.0,1);
			\draw [fill = white] (5.75,0) rectangle (6.75,1);
			\draw [fill = {rgb: red,3; blue,0;yellow,2}] (5.5,0) rectangle (5.75,1);
			\draw [fill = {rgb: orange,1; white,2}] (4.5,0) rectangle (5.5,1);
			\draw [fill =  {rgb: red,1; blue, 0; green, 2}] (4.25,0) rectangle (4.5,1);
			\draw [fill = white] (3.00,0) rectangle (4.25,1);
			\draw [fill = {rgb: yellow,3; blue,0;green,1}] (2.75,0) rectangle (3.00,1);			
			\draw [fill = yellow] (1.5,0) rectangle (2.75,1);
			\draw [fill = white] (0.25,0) rectangle (1.5,1);
			\draw [fill = {rgb: red,3; blue,0;yellow,2}] (0,0) rectangle (.25,1);			
			\draw [fill = green] (-0.75,0) rectangle (0,1);
			\draw [fill = {rgb: yellow,3; blue,0;green,1}] (-1,0) rectangle (-0.75,1);			
			\draw [fill = white] (-1.75,0) rectangle (-1,1);
			\draw [fill = {rgb: red,2; blue,0;green,1}] (-2,0) rectangle (-1.75,1);
			\draw [fill = {rgb: blue,2;white,7}] (-3,0) rectangle (-2,1);
			\draw [fill = {rgb: red,1; blue, 0; green, 2}](-3.25,0) rectangle (-3,1); 
			\draw [fill = white] (-5,0) rectangle (-3.25,1);

			\node at (9.15, 0.5) {${\mathbf F}$};
			\node at (7.5, 0.5) {${\mathbf 5}$};
			\node at (6.25, 0.5) {${\mathbf E}$};
			\node at (5.0, 0.5) {$\rotatebox[origin=c]{180}{\bf 3}$}; 
			\node at (3.5, 0.5) {${\mathbf D}$};
			\node at (2.2, 0.5) {${\mathbf 1}$};
			\node at (0.9, 0.5) {${\mathbf C}$}; 
			\node at (-0.4, 0.5) {${\mathbf 2}$};
			\node at (-1.4, 0.5) {${\mathbf B}$}; 
			\node at (-2.5, 0.5) {${\mathbf{4}}$};
			\node at (-4.0, 0.5) {${\mathbf A}$};

			\end{tikzpicture}
  \caption{The top diagram depicts a possible micro nuclear precursor, and the bottom diagram is another possible micro nuclear precursor of the macro nuclear gene in the middle diagram.}
  \label{fig:MICDiagram3}
\end{figure}
\end{center}

The micro nuclear DNA sequences in the top and the bottom rows of Figure \ref{fig:MICDiagram3} each has three types of regions: The white blocks, labeled with letters, are called \emph{internal eliminated sequences} (IESs). The blocks labeled with numbers are called \emph{macro nuclear destined sequences} (MDSs), while the narrow strips are called \emph{pointers}. As the micro nuclear precursors show,  
there are two copies of each pointer: For example MDS 2 has a pointer on the left flank that is identical to the pointer on the right flank of MDS 1. This pointer will be called the ``1-2 \emph{pointer}". And MDS 2 has a pointer on its right flank which is identical to the pointer on the left flank of MDS 3. This pointer is called the ``2-3 \emph{pointer}". The other pointers are named similarly. Also note that MDS 1 does not have a pointer on its left flank, and MDS 5 does not have a pointer on its right flank. As MDS3 and the pointers on its flanks show in the bottom row of Figure \ref{fig:MICDiagram3}, in the micro nuclear precursor an MDS plus its flanking pointer(s), as a unit, can be in a 180-degree rotated orientation of the corresponding components in the macro nuclear gene. 
The corresponding macro nuclear sequence in the middle row of Figure \ref{fig:MICDiagram3} contains only one of each of  
the pointers present in its micro nuclear precursor, and all the MDSs, but none of the IESs of the micro nuclear precursor. In the macro nuclear sequence these components occur in a specific order, which we call the \emph{canonical order}. 

In ``shorthand" the micro nuclear precursor in the top row of Figure \ref{fig:MICDiagram3} is
  $\lbrack 4,\, 2,\ 1,\ 3,\ 5\rbrack$
while the micro nuclear precursor in the bottom row of Figure \ref{fig:MICDiagram3} is
  $\lbrack 4,\, 2,\ 1,\ -3,\ 5\rbrack$.

\section{The ciliate DNA operations}

We now turn to the ciliate algorithm that processes micro nuclear precursors to produce their corresponding macro nuclear versions. The journal articles \cite{AJSL} and \cite{PER} propose hypotheses about biochemical processes that perform the decryption algorithm in ciliates. 
We do not examine the biochemical foundations here.

Textbook \cite{EHPPR} describes three DNA editing operations underlying this decryption process. There is  
experimental evidence 
that these three operations  
accomplish the decryption process. 
The journal article \cite{MZC} gives 
experimental data about the DNA products of intermediate steps of the ciliate algorithm.  
We henceforth 
assume 
that the three operations that produce macro nuclear molecules from their micro nuclear precursors are as proposed in \cite{EHPPR}: context directed block interchanges (swaps), context directed reversals and context directed excisions. 

{\flushleft{\bf \underline{Context directed block interchanges (swaps):}}}  The top strip in Figure \ref{fig:dladresult} represents a segment of DNA in a micro nuclear chromosome of some ciliate. The symbols ${\mathbf p}$ and ${\mathbf q}$ denote identified pointers, 
while 
$A$, $B$, $M$, $X$ and $Y$ 
represent segments of DNA. The three necessary 
conditions to swap segments $X$ and $Y$ are:
{\flushleft{\bf 1}} $X$ and $Y$ both have an occurrence of each of the pointers ${\mathbf p}$ and ${\mathbf q}$ at their flanks;  
{\flushleft{\bf 2}} The  pointer pair ${\mathbf p},\, {\mathbf q}$ appears in the (alternating) context 
$
  \cdots {\mathbf p}\cdots {\mathbf q}\cdots {\mathbf p}\cdots {\mathbf q} \cdots;
$
{\flushleft{\bf 3}} Neither occurrence of the pointer ${\mathbf p}$ or of pointer ${\mathbf q}$ is flanked by a pair of successively numbered MDSs.




 \begin{center}
          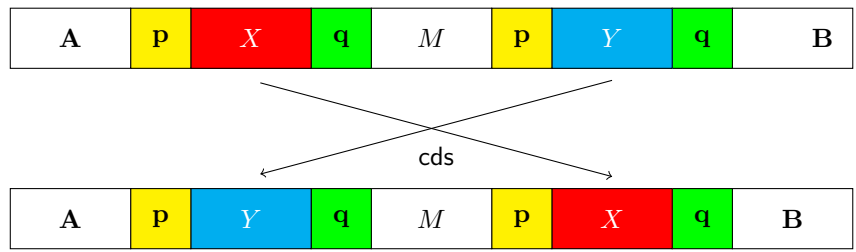
\begin{figure}[h]
			\begin{tikzpicture}[scale = 0.8] 		
          		\tikzstyle{every node}=[]
		                           
			\draw [fill = white] (7,0) rectangle (9,1);
			\draw [fill = green] (6,0) rectangle (7,1);
			\draw [fill = red] (4,0) rectangle (6,1);
			\draw [fill = yellow] (3,0) rectangle (4,1);
			\draw [fill = white] (1,0) rectangle (3,1);
			\draw [fill = green] (0,0) rectangle (1,1);
			\draw [fill = cyan] (-2,0) rectangle (0,1);
			\draw [fill = yellow] (-3,0) rectangle (-2,1);
			\draw [fill = white] (-5,0) rectangle (-3,1);

			\node at (8.0, 0.5) {${\mathbf B}$};
			\node at (6.5, 0.5) {${\mathbf q}$};
			\node at (5.0, 0.5) {${\color{white}X}$};
			\node at (3.5, 0.5) {${\mathbf p}$};
			\node at (2.0, 0.5) {$ {M}$}; 
			\node at (0.5, 0.5) {${\mathbf q}$};
			\node at (-1.0, 0.5) {${\color{white}Y}$};
			\node at (-2.5, 0.5) {${\mathbf{p}}$};
			\node at (-4.0, 0.5) {${\mathbf A}$};

\node(p) at (-1.0,1.2){}
edge [<-, right] node[above=3pt,name=e1]{} (5.0,2.8); 

\node(ps) at (-1.0,2.8){}
edge [->, left] node[below=3pt,name=e2] {{\sf cds}} (5.0,1.2);

                     
			\draw [fill = white] (7,3) rectangle (9,4);
			\draw [fill = green] (6,3) rectangle (7,4);
			\draw [fill = cyan] (4,3) rectangle (6,4);
			\draw [fill = yellow] (3,3) rectangle (4,4);
			\draw [fill = white] (1,3) rectangle (3,4);
			\draw [fill = green] (0,3) rectangle (1,4);
			\draw [fill = red] (-2,3) rectangle (0,4);
			\draw [fill = yellow] (-3,3) rectangle (-2,4);
			\draw [fill = white] (-5,3) rectangle (-3,4);

			\node at (8.5, 3.5) {${\mathbf B}$};
			\node at (6.5, 3.5) {${\mathbf q}$};
			\node at (5.0, 3.5) {${\color{white}Y}$};
			\node at (3.5, 3.5) {${\mathbf p}$};
			\node at (2.0, 3.5) {${M}$};
			\node at (0.5, 3.5) {${\mathbf q}$};
			\node at (-1.0, 3.5) {${\color{white}X}$};
			\node at (-2.5, 3.5) {${\mathbf{p}}$};
			\node at (-4.0, 3.5) {${\mathbf A}$};

			\end{tikzpicture}
\vspace{0.05in}
  \caption{Context Directed Block Swaps: The ${\mathbf p}\cdots{\mathbf q}\cdots{\mathbf p}\cdots{\mathbf q}$ pointer context permits swapping the DNA segments X and Y if M, X and Y meet requirement 3.}
  \label{fig:dladresult}
    \end{figure}
\end{center}
Only when \emph{all three} conditions are met is an interchange of the segments $X$ and $Y$ permitted. The result of this swap is depicted in the bottom strip of Figure \ref{fig:dladresult}.
The reader may check that subsequent to an application of {\sf cds} the contextual conditions 1 and 2 are still valid, but condition 3 is no longer met: Indeed, one occurrence of each of the pointers ${\mathbf p}$ and ${\mathbf q}$ is now flanked by successively numbered MDSs.
Figure \ref{fig:cdsDiagram2} gives a specific example to illustrate the last point:

 \begin{center}
          \begin{figure}[ht]
			\begin{tikzpicture}[scale = 0.7] 		
          		\tikzstyle{every node}=[]
			\draw [fill = white] (8.25,5) rectangle (9.0,6);                     
			\draw [fill = {rgb: brown,1; blue,0;white,2}] (7.5,5) rectangle (8.25,6);
			\draw [fill =  {rgb: red,2; blue,0;green,1}] (7.25,5) rectangle (7.5,6);
			\draw [fill = white] (6.5,5) rectangle (7.25,6);
			\draw [fill =  {rgb: red,1; blue, 0; green, 2}] (6.25,5) rectangle (6.5,6);
			\draw [fill = {rgb: orange,1; white,2}] (5.5,5) rectangle (6.25,6);
			\draw [fill = {rgb: red,3; blue,0;yellow,2}] (5.25,5) rectangle (5.5,6);
			\draw [fill = white] (4.50,5) rectangle (5.25,6);
			\draw [fill = {rgb: yellow,3; blue,0;green,1}] (4.25,5) rectangle (4.50,6);		
			\draw [fill = yellow] (3.5,5) rectangle (4.25,6);
			\draw [fill = white] (2.75,5) rectangle (3.5,6);
			\draw [fill = {rgb: red,2; blue,0;green,1}] (2.50,5) rectangle (2.75,6); 
			\draw [fill = {rgb: blue,2;white,7}] (1.75,5) rectangle (2.5,6); 
			\draw [fill = {rgb: red,1; blue, 0; green, 2}] (1.5,5) rectangle (1.75,6); 
			\draw [fill = white] (0.75,5) rectangle (1.5,6); 
			\draw [fill =  {rgb: red,3; blue,0;yellow,2}] (0.5,5) rectangle (0.75,6); 
			\draw [fill = green] (-0.25,5) rectangle (0.5,6); 
			\draw [fill =  {rgb: yellow,3; blue,0;green,1}](-0.5,5) rectangle (-0.25,6); 
			\draw [fill = white] (-1.25,5) rectangle (-0.5,6);
			
			\draw [thick,color=green] (-0.2,4.8) -- (1.5,4.8);
			\draw [thick,color=orange] (4.6,4.8) -- (6.2,4.8);

			\node at (8.65, 5.5) {${\mathbf F}$};
			\node at (7.8, 5.5) {${\mathbf 5}$};
			\node at (6.8, 5.5) {${\mathbf E}$};
			\node at (5.9, 5.5) {${\mathbf 3}$};
			\node at (4.9, 5.5) {${\mathbf D}$};
			\node at (3.9, 5.5) {${\mathbf 1}$};
			\node at (3.1, 5.5) {${\mathbf C}$}; 
			\node at (2.1, 5.5) {${\mathbf 4}$}; %
			\node at (1.1, 5.5) {${\mathbf B}$};
			\node at (0.1, 5.5) {${\mathbf{2}}$}; %
			\node at (-0.9, 5.5) {${\mathbf A}$};

			\draw [fill = white] (8.25,0) rectangle (9.0,1);                     
			\draw [fill = {rgb: brown,1; blue,0;white,2}] (7.5,0) rectangle (8.25,1);
			\draw [fill =  {rgb: red,2; blue,0;green,1}] (7.25,0) rectangle (7.5,1);
			\draw [fill = white] (6.5,0) rectangle (7.25,1);
			\draw [fill = {rgb: red,3; blue,0;yellow,2}] (6.25,0) rectangle (6.5,1);
			\draw [fill = white] (5.5,0) rectangle (6.25,1);
			\draw [fill =  {rgb: red,1; blue, 0; green, 2}] (5.25,0) rectangle (5.5,1);
			\draw [fill = green] (4.5,0) rectangle (5.25,1);
			\draw [fill = {rgb: yellow,3; blue,0;green,1}] (4.25,0) rectangle (4.5,1);		
			\draw [fill = yellow] (3.5,0) rectangle (4.25,1);
			\draw [fill = white] (2.75,0) rectangle (3.5,1);
			\draw [fill = {rgb: red,2; blue,0;green,1}] (2.5,0) rectangle (2.75,1);	
			\draw [fill = {rgb: blue,2;white,7}] (1.75,0) rectangle (2.5,1);
			\draw [fill = {rgb: red,1; blue, 0; green, 2}] (1.5,0) rectangle (1.75,1);		
			\draw [fill = {rgb: orange,1; white,2}] (0.75,0) rectangle (1.5,1);
			\draw [fill =  {rgb: red,3; blue,0;yellow,2}] (0.5,0) rectangle (0.75,1);			
			\draw [fill = white] (-0.25,0) rectangle (0.5,1);
			\draw [fill =  {rgb: yellow,3; blue,0;green,1}](-0.5,0) rectangle (-0.25,1); 
			\draw [fill = white] (-1.25,0) rectangle (-0.5,1);
			
			\draw [thick,color=orange] (-0.2,1.2) -- (1.5,1.2);
			\draw [thick,color=green] (4.6,1.2) -- (6.2,1.2);
			
			\node at (8.65, 0.5) {${\mathbf F}$};
			\node at (7.8, 0.5) {${\mathbf 5}$};
			\node at (6.8, 0.5) {${\mathbf E}$};
			\node at (5.9, 0.5) {${\bf B}$}; 
			\node at (4.9, 0.5) {${\mathbf 2}$};
			\node at (3.9, 0.5) {${\mathbf 1}$};
			\node at (3.1, 0.5) {${\mathbf C}$};
			\node at (2.1, 0.5) {${\mathbf 4}$};
			\node at (1.1, 0.5) {${\mathbf 3}$}; 
			\node at (0.1, 0.5) {${\mathbf{D}}$};
			\node at (-0.9, 0.5) {${\mathbf A}$};
			
			\draw [thick,->] (0.65,4.7) -- (5.4,1.3);
			\draw [thick,->] (5.4,4.7) -- (0.65,1.3);
			
			\node at (1.7,4.7){\bf q};
			\node at (-0.4,4.7){\bf p};
			\node at (6.4,4.7){\bf q};
			\node at (4.4,4.7){\bf p};

			\node at (1.7,1.2){\bf q};
			\node at (-0.4,1.2){\bf p};
			\node at (6.4,1.2){\bf q};
			\node at (4.4,1.2){\bf p};
			
			\end{tikzpicture}
  \caption{The top diagram depicts a possible micro nuclear precursor, and the bottom diagram is the result of {\sf cds} applied to the pointer pair {\bf p}=(1,2) and {\bf q}=(3,4).}
  \label{fig:cdsDiagram2}
\end{figure}
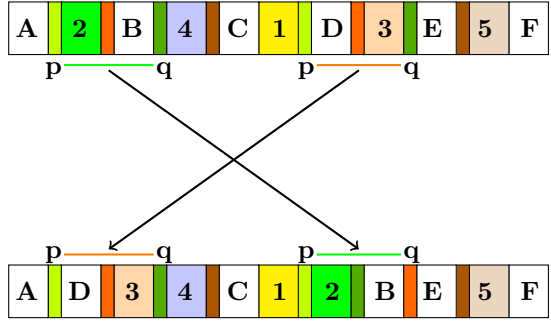
\end{center}

{\flushleft{\bf \underline{Context directed reversal:}}} To describe a context directed reversal, consider the left strip in Figure \ref{fig:hiresult}. It is a depiction of a segment of DNA appearing in the micro nucleus. To rotate the yellow segment, labeled by an upside-down $A$, by $180^{0}$, that is, to ``reverse $A$", two neccessary contextual conditions must be met: 
{\flushleft{\bf 1}} $A$ is flanked by a pointer ${\mathbf p}$ on one end, and by the 180$^0$ rotation\footnote{In text the 180$^0$ rotation of ${\mathbf p}$ will be denoted $-{\mathbf p}$.} of ${\mathbf p}$ on the other end;
{\flushleft{\bf 2}} Neither occurrence of ${\mathbf p}$ is flanked by successively numbered MDSs. 

When both of these contextual requirements are met rotation of the segment labelled $A$ through $180^{0}$ is permitted. 
The result is of this context directed reversal is depicted by the right strip in Figure \ref{fig:hiresult}. 

 \begin{center}
          \begin{figure}[h]
			\begin{tikzpicture}[scale = 0.7] 		
          		\tikzstyle{every node}=[]                           
			\draw [fill = red] (3,0) rectangle (4,1);
			\draw [fill = pink] (2,0) rectangle (3,1);
			\draw [fill = yellow] (0,0) rectangle (2,1);
			\draw [fill = pink] (-1,0) rectangle (0,1);
			\draw [fill = cyan] (-2,0) rectangle (-1,1);

			\node at (3.5, 0.5) {${\color{white}Y}$};
			\node at (2.5, 0.5) {${\mathbf p}$};
			\node at (1.0, 0.5) {$ \rotatebox[origin=c]{180}{A}$}; 
			\node at (-0.5, 0.5) {${\mathbf d}$};
			\node at (-1.5, 0.5) {${\color{white}X}$};

			\draw [line width = 1, ->]  (4.2, 0.5)  to (5.8,0.5);
			\draw node at (5, 0.9) {{\sf cdr}};


			\draw [fill = red] (11,0) rectangle (12,1);
			\draw [fill = pink] (10,0) rectangle (11,1);
			\draw [fill = yellow] (8,0) rectangle (10,1);
			\draw [fill = pink] (7,0) rectangle (8,1);
			\draw [fill = cyan] (6,0) rectangle (7,1);

			\node at (11.5, 0.5) {${\color{white}Y}$};
			\node at (10.5, 0.5) {${\mathbf p}$};
			\node at (9.0, 0.5) {${A}$};
			\node at (7.5, 0.5) {${\mathbf d}$};
			\node at (6.5, 0.5) {${\color{white}X}$};

			\end{tikzpicture}
\vspace{0.05in}
  \caption{Context Directed Reversal: The {\bf -p}...{\bf p} or {\bf p}...{\bf -p} pointer context permits rotating the segment A flanked by them through 180 degrees if condition 2 is met by X, A and Y.}
  \label{fig:hiresult}
    \end{figure}
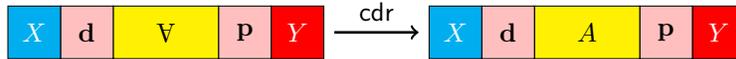
\end{center}

As the reader may check, subsequent to a context directed reversal, one of the occurrences of the pointer ${\mathbf p}$ now has successively numbered MDSs on both flanks and no further applications of {\sf cdr} are permitted to this pointer context .

{\flushleft{\bf \underline{Context directed excision:}}} To describe context directed excision consider Figure \ref{fig:cdediagram}. In it the pointer ${\mathbf p}$ flanks a DNA segment identified as an IES (the yellow segment). This context ${\mathbf p}\, {\sf IES}\,{\mathbf p}$ permits the excision of the {\sf IES} segment plus one of the pointers, provided that each occurrence of {\bf p} is flanked by an MDS. The result of {\sf cde} is the joining the DNA segments flanking the original pair of pointers, to the flanks of the remaining pointer.

\vspace{0.1in}
 \begin{center}
          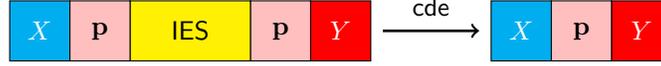
\begin{figure}[h]
			\begin{tikzpicture}[scale = 0.8] 		
          		\tikzstyle{every node}=[]  
                     \vspace{0.1in}
                           
			\draw [fill = red] (3,0) rectangle (4,1);
			\draw [fill = pink] (2,0) rectangle (3,1);
			\draw [fill = yellow] (0,0) rectangle (2,1);
			\draw [fill = pink] (-1,0) rectangle (0,1);
			\draw [fill = cyan] (-2,0) rectangle (-1,1);

			\node at (3.5, 0.5) {${\color{white}Y}$};
			\node at (2.5, 0.5) {${\mathbf p}$};
			\node at (1.0, 0.5) {${\sf IES}$};
			\node at (-0.5, 0.5) {${\mathbf p}$};
			\node at (-1.5, 0.5) {${\color{white}X}$};

			\draw [line width = 1, ->]  (4.2, 0.5)  to (5.8,0.5);
			\draw node at (5, 0.9) {{\sf cde}};

			\draw[fill = {red}](8,0) rectangle (9,1); 
			\draw[fill = {pink}](7,0) rectangle (8,1); 
			\draw[fill = {cyan}](6,0) rectangle (7,1); 
			
			\node at (8.5, 0.5) {${\color{white}Y}$};
			\node at (7.5, 0.5) {${\mathbf p}$};
			\node at (6.5, 0.5) {${\color{white}X}$};
			\end{tikzpicture}
\vspace{0.1in}
  \caption{Context Directed Excision: The IES flanked by pointer ${\mathbf p}$ on both sides is removed, along with one copy of ${\mathbf p}$.}
  \label{fig:cdediagram}
    \end{figure}
\end{center}

Observe that context directed block interchanges and context directed reversals do not decrease or increase the length of the string they operate on, and they retain all the pointers. 
But context directed excision, as illustrated in Figure \ref{fig:cdediagram}, changes the pointer contexts by deleting selected pointers and IESs.  

\section{Pointer lists}

Pointers are an essential ingredient of the three DNA editing operations. We exploit this central role of pointers by now basing our computational formalism (that mathematically models these three ciliate operations) on pointers. Towards this end we introduce the notion of a \emph{pointer list}\footnote{In anticipation of wider applicability of the notion of a pointer list we give a definition that is more general than the specific instance of it that we need.}.

\begin{definition}\label{def: PointerList} A finite sequence $P:=\lbrack x_1,\cdots,x_m \rbrack$ of integers is said to be a \emph{pointer list} if it satisfies the following six conditions:
\begin{enumerate}
  \item{$m$ is an even positive integer;} 
  \item{there is a unique $i$ with $\mu = \vert x_i\vert = \min\{\vert x_j\vert:1\le j\le m\}$;} 
  \item{there is a unique $j$ with $\lambda = \vert x_j\vert = \max\{\vert x_i\vert:1\le i\le m\}$;} 
  \item{For each $i \in \{1,\cdots, m\}$ with $\mu < \vert x_i \vert < \lambda$, there is a unique
        $j \in \{1,\cdots,n\}\setminus\{i\}$ such that $\vert x_i \vert = \vert x_j \vert$;} 
  \item{for each odd $i\in\{1,\cdots,n\}$, $x_i\le x_{i+1}$ and $x_i\cdot x_{i+1} > 0$;} 
  \item{whenever $i\in\{1,\,\cdots,\,n\}$ is odd, there is no $j$ such that $\vert x_i\vert<\vert x_j\vert< \vert x_{i+1}\vert$ or $\vert x_{i+1}\vert<\vert x_j\vert< \vert x_{i}\vert$.}
\end{enumerate}
\end{definition}

The following two mathematical facts are important in reasoning about ciliate operations on pointer lists. 
\begin{lemma}\label{appII:valueandparity} Let $\lbrack x_1,\, x_2,\,\cdots,x_{m-1},\,x_m\rbrack$ be a pointer list. If $i$ and $j$ be distinct indices for which $\vert x_i\vert = \vert x_j\vert$, then  $x_i$ and $x_j$ have the same sign if, and only if, $i$ and $j$ have distinct parity.
\end{lemma}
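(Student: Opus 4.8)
The plan is to exploit the pairing of positions forced by conditions~(5) and~(6). These conditions group the list into the consecutive pairs $(x_{2k-1},x_{2k})$ for $k=1,\dots,m/2$: by~(5) the two entries of each such pair share a sign and satisfy $x_{2k-1}\le x_{2k}$, and by~(6) their absolute values are \emph{consecutive} among all absolute values occurring in the list. The first step is to record when the hypothesis can even apply. By conditions~(2) and~(3) the values $\mu$ and $\lambda$ each occur once, so condition~(4) tells us that the indices $i\neq j$ with $\vert x_i\vert=\vert x_j\vert$ are exactly the two occurrences of some value $a$ with $\mu<a<\lambda$. Thus it suffices to treat a single such $a$.

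Next I would pass to the language of \emph{roles}: within a pair call an entry \emph{low} or \emph{high} according as its absolute value is the smaller or the larger of the two. Condition~(5) then gives a rigid dictionary between sign, parity of position, and role --- in a positive pair the low entry occupies the odd slot and the high entry the even slot, while in a negative pair this is reversed. Encoding odd/even and low/high as bits, this dictionary says precisely that an occurrence is negative exactly when its parity and its role \emph{disagree}. A short four-case check then shows that, for the two occurrences of $a$, \emph{same sign} is equivalent to \emph{opposite parity} precisely when the two occurrences have \emph{opposite roles}. Hence the entire lemma reduces to one combinatorial claim: the two occurrences of $a$ play opposite roles, one being the high entry of its pair and the other the low entry of its pair.

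To establish that claim I would run a counting (degree) argument on the sorted list of distinct absolute values $\mu=a_1<a_2<\cdots<a_v=\lambda$. Viewing each pair as an edge joining the two consecutive values $\{a_s,a_{s+1}\}$ it contains (the ``gap'' $s$), let $n_s$ count the pairs occupying gap $s$ and let $\ell_t$ count the pairs whose two entries have equal absolute value $a_t$. Reading off the occurrence multiplicities guaranteed by~(2),(3),(4), and using the convention $n_0=n_v=0$, gives $n_{t-1}+n_t+2\ell_t=1$ for $t\in\{1,v\}$ and $n_{t-1}+n_t+2\ell_t=2$ for interior $t$. A short induction upward from $t=1$ then forces $\ell_t=0$ for every $t$ and $n_s=1$ for every gap. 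Consequently each interior value $a_t$ lies in exactly one pair on the gap below it, where it is the high entry, and exactly one pair on the gap above it, where it is the low entry; so its two occurrences have opposite roles, completing the reduction.

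The sign/parity/role dictionary and the final four-case check are routine bookkeeping. The real content --- and the step I expect to be the main obstacle --- is the counting argument of the last paragraph: one must rule out both a value being ``used up'' by a single pair (some $\ell_t>0$) and a gap being occupied twice (some $n_s\ge 2$). Everything there hinges on conditions~(2)--(4) pinning the occurrence multiplicities and on condition~(6) forcing each pair onto a single gap, with the base case $n_1+2\ell_1=1$ immediately giving $\ell_1=0$ and $n_1=1$ to start the induction. I would take care to state this recurrence cleanly, since it carries the whole proof.
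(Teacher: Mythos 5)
Your proof is correct, but it follows a genuinely different route from the paper's. The paper argues by contradiction in two stages: an auxiliary lemma (Lemma \ref{parityandvalue}) shows that equal entries cannot occupy positions of the same parity, since that would create a configuration $\lbrack \cdots, A, B, \cdots, A, B, \cdots\rbrack$ with no absolute value strictly between $\vert A\vert$ and $\vert B\vert$; a parity count then shows an odd number of entries lie below $\vert A\vert$ and an odd number above $\vert B\vert$, so some consecutive pair straddles this gap and violates stipulation (6). The equivalence of the present lemma is then obtained by repeating that configuration analysis in the sign cases. You instead prove a positive structure theorem: listing the distinct absolute values as $\mu = a_1 < a_2 < \cdots < a_v = \lambda$, the occurrence multiplicities $1, 2, \ldots, 2, 1$ forced by stipulations (2)--(4), combined with stipulation (6) confining each consecutive pair either to a single value or to a single gap $\{a_s, a_{s+1}\}$, force by induction from the bottom that no pair has equal absolute values ($\ell_t = 0$) and that every gap carries exactly one pair ($n_s = 1$). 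Hence each interior value is the high entry of one pair and the low entry of another, and your sign/parity/role dictionary from stipulation (5) converts this directly into the stated equivalence. Both arguments are at bottom parity counts over occurrence multiplicities, but yours is global and constructive where the paper's is local and by contradiction; the staircase structure you extract is strictly stronger than the lemma (it re-proves Lemma \ref{parityandvalue} immediately and pins down the pair/gap incidence completely, a fact that could streamline the later sublemmas of Appendix II), whereas the paper's argument is shorter to write down for this single lemma. One ordering caveat for the write-up: present the counting argument before the dictionary, since the low/high roles inside a pair are only well defined once $\ell_t = 0$ is known.
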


\begin{lemma}\label{appII:CentralLemma}
If $\lbrack x_1,\, x_2,\,\cdots,x_{m-1},\,x_m\rbrack$ is a pointer list of length larger than $4$, then at least one of the following three statements is false:
\begin{enumerate}
\item[(a)] $(\forall i)( x_i \neq x_{i+1})$
\item[(b)] $(\forall i)(\forall j)(\mbox{If } \vert x_i\vert = \vert x_j\vert, \mbox{ then } x_i = x_j)$
\item[(c)] $(\forall i)(\forall j)(\forall k)(\forall \ell)(\mbox{If }i \neq k,\, j\neq \ell,\, i < j \mbox{ and }x_i = x_k\mbox{ and }x_j = x_{\ell}, \mbox{ then either }i < j < \ell < k \mbox{ or }i < k < j < \ell)$
\end{enumerate}
\end{lemma}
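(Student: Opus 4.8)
The plan is to prove the contrapositive: assuming (a), (b) and (c) all hold, I will show that the list has length at most $4$. Throughout I read (c) as the statement that the pairing of positions carrying equal values is \emph{non-crossing} (any two distinct such pairs are nested or disjoint), which is clearly its intended meaning; this reading matters because the literal quantifier form degenerates when the two pairs are allowed to coincide. The starting observation is that, by conditions (2)--(4), every absolute value other than $\mu$ and $\lambda$ occurs exactly twice, while $\mu$ and $\lambda$ occur once each. So apart from these two ``singletons'' the list is a perfect matching of equal-valued positions, which I picture as a family of arcs joining the two occurrences of each repeated value. Under (b), two positions with the same value have the same sign, so by Lemma \ref{appII:valueandparity} the two endpoints of every arc have \emph{distinct parity}; under (a), no arc joins two adjacent positions.

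The heart of the argument is a local analysis of an \emph{innermost} arc, i.e. one containing no other arc. If $(p,q)$ is innermost, then non-crossing forces every position strictly between $p$ and $q$ to carry a singleton value (a repeated value there would have its partner nested inside $(p,q)$, contradicting innermostness). Since there are only two singletons, $\mu$ and $\lambda$, there are at most two interior positions; distinct parity makes $q-p$ odd, and (a) rules out $q=p+1$, so $q-p=3$ exactly and the two interior positions are precisely the occurrences of $\mu$ and $\lambda$. In particular an innermost arc \emph{consumes both} singletons, so there can be only one innermost arc; a non-crossing family of arcs with a single innermost member is a single nested chain $A_1\supsetneq A_2\supsetneq\cdots\supsetneq A_r$. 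Because both singletons lie strictly inside the innermost $A_r$, no singleton occupies position $1$ or position $m$; hence positions $1$ and $m$ are the endpoints of the outermost arc $A_1$, and positions $2$ and $m-1$ are the endpoints of $A_2$. Thus $x_1=x_m$ and $x_2=x_{m-1}$, and these are two \emph{distinct} repeated values.

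Now condition (5) delivers the contradiction. Since $m$ is even, both $1$ and $m-1$ are odd indices, so (5) gives $x_1\le x_2$ (from the pair $(x_1,x_2)$) and $x_{m-1}\le x_m$, i.e. $x_2\le x_1$ (from the pair $(x_{m-1},x_m)$), forcing $x_1=x_2$ and contradicting that $A_1,A_2$ carry distinct values. This uses that $A_2$ exists, i.e. $r\ge 2$, equivalently $m\ge 6$; so no pointer list of length greater than $4$ satisfies (a), (b) and (c) together, which is the claim. I expect the main obstacle to be the passage from the local fact ``an innermost arc uses up both singletons'' to the rigid global picture (a single nested chain filling the list symmetrically with $\mu,\lambda$ at the center): verifying that the extreme positions must be arc endpoints, and that exactly the two outermost arcs occupy $\{1,2\}$ and $\{m-1,m\}$, is where the care is needed. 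Notably, the argument never invokes condition (6).
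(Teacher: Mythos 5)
Your proof is correct, and it takes a genuinely different route from the paper's. The paper, after establishing the same three starting facts you use (mates occupy opposite-parity positions by Lemma \ref{appII:valueandparity} together with (b), are never adjacent by (a), and form a non-crossing family by (c)), proceeds by an exhaustive configuration analysis: a sublemma ruling out three sequentially disjoint pairs, structural sublemmas about where the extreme values can sit, and then separate refutations of the ``one nest'' and ``two nests'' configurations --- the first via Lemma \ref{edgeslemma} (no pointer list has the form $\lbrack B,\,C,\,\cdots,\,C,\,B\rbrack$) plus a parity argument, the second via a forced palindromic structure contradicting Lemma \ref{parityandvalue}. You instead derive the global structure in one stroke: an innermost arc must have span exactly $3$ and contain both singletons $\mu,\lambda$, hence is unique; therefore the non-crossing family is a single nested chain, forcing $x_1=x_m$ and $x_2=x_{m-1}$; stipulation (5) at the odd indices $1$ and $m-1$ then gives $x_1\le x_2\le x_1$, so $x_1=x_2$, contradicting that the two outermost arcs carry distinct absolute values (or contradicting (a) outright). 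Your closing step is precisely the argument inside the paper's Lemma \ref{edgeslemma}, but the innermost-arc/singleton-counting reduction that gets you there replaces all of the paper's Sublemmas A through E; it is shorter, exposes exactly what a hypothetical counterexample would have to look like (a rainbow with $\mu,\lambda$ at its center), and treats the two-arc case $m=6$ on the same footing as all others, whereas the paper's sublemmas are phrased for triples of mates and cover that case only implicitly. The delicate points you flagged do go through: since the singletons lie strictly inside the innermost arc, positions $1$ and $m$, and then $2$ and $m-1$, must carry mates, and an arc with an endpoint at position $1$ (respectively $2$) admits no strict container (respectively only the outermost arc as strict container), which pins those arcs as $A_1$ and $A_2$. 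Finally, your caveat about reading (c) as non-crossing of \emph{distinct} pairs is exactly how the paper's own proof interprets it, so no discrepancy arises there.
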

In the interest of readability we postpone the somewhat lengthy, yet elementary, proofs of these facts to Appendix II. 

Pointer lists to which we will apply the ciliate operations come about as follows: Let $\integers$ denote the set of integers. For a set $S$ the symbol $^{<\omega}S$ denotes the set of finite sequences with entries from $S$. For an integer $z$ we define
\[
   \check{z}(1) =\begin{cases} z   & \mbox{if } z = \vert z\vert \\
                               z-1 & \mbox{otherwise } 
         \end{cases}
\] 
and  in all cases $\check{z}(2) = \check{z}(1)+1$. Then define the function $\pi:\,^{<\omega}\integers\rightarrow \,^{<\omega}\integers$ by:
\[
  \pi(\lbrack z_1,\cdots,z_k\rbrack) = \lbrack \check{z}_1(1),\check{z}_1(2),\cdots,\check{z}_k(1),\check{z}_k(2)\rbrack
\]
Thus, for example, $\pi(\lbrack -1,4,3,5,2,-9,7,10,-8,6\rbrack)$ is the sequence 
\[
  \lbrack -2,-1,4,5,3,4,5,6,2,3,-10,-9,7,8,10,11,-9,-8,6,7\rbrack.
\]
It can be verified that this sequence is indeed a pointer list.  The following lemma captures this fact.
\begin{lemma}\label{pointerassignment}
For each finite sequence $M:=\lbrack s_1,\, s_2,\, \cdots, \, s_n\rbrack$ of non-zero integers such that there is an integer $m$ for which $\{\vert s_i\vert:1\le i\le n\} = \{m+1,\, \cdots,\,m+n\}$, the sequence $\pi(M)$ is a pointer list.   
\end{lemma}
The proof consists of verifying that $\pi(M)$ meets all stipulations of Definition \ref{def: PointerList}.

\section{The ciliate operations on pointer lists}

We now introduce three special functions, {\sf cde}, {\sf cdr} and {\sf cds}, from $^{<\omega}\integers$ to $^{<\omega}\integers$, inspired by the three ciliate operations, as follows: For a given finite sequence $P:=\lbrack x_1,\cdots,x_m \rbrack$,\\

{\flushleft{\underline{\tt Context Directed Excision}:}} 
\[
  {\sf cde}(P) = \left\{
                     \begin{tabular}{ll}
                         $P$ & if there is no $i$ with $x_i = x_{i+1}$\\
                                &                                                          \\ 
                         $\lbrack x_1,\cdots,x_{i-1},x_{i+2},\cdots,x_m \rbrack$ & for $i$ mimimal with $x_i = x_{i+1}$, otherwise.
                     \end{tabular}
                \right.      
\]

{\flushleft{\underline{\tt Context Directed Reversal}:}} 
\[
  {\sf cdr}(P) = \left\{
                     \begin{tabular}{ll}
                         $P$ & if there are no $i<j$ \\
                             &  with $x_i = -x_j$  \\
                             &                             \\
                         $\lbrack x_1,\cdots,x_{i-1},x_i,{\color{red}\underline{-x_j,\cdots,-x_{i+1}}},\,x_{j+1},\cdots,x_m \rbrack$ & for the minimal $i$ with\\
                             &   $x_i = -x_j$, for a $j>i$
                     \end{tabular}
                \right.      
\]

{\flushleft{\underline{\tt Context Directed Block Swaps}:}} \\
${\sf cds}(P) = P$ if there are no $i<j<k<\ell$ with $x_i = x_k$ and $x_j = x_{\ell}$. However if there are $i<j<k<\ell$ with $x_i = x_k$ and $x_j = x_{\ell}$, then choose the least such $i$, and for it the least corresponding $j$, and define ${\sf cds}(P)$ to be 
\[
  \lbrack x_1,\cdots,x_i,{\color{blue}\underline{x_k,\cdots,x_{\ell}}},\,x_j,\cdots,x_{k-1},\,{\color{red}\underline{x_{i+1},\cdots,\,x_{j-1}}},\,x_{\ell+1},\,\cdots,\, x_m \rbrack
\]

These three operations have now been defined on arbitrary finite sequences of integers. They behave rather well on the subset ${\sf PL}=\{\sigma\in\,^{<\omega}\integers:\, \sigma \mbox{ is a pointer list}\}$ of their domain, as stated in the next two theorems. In the interest of readability we postpone their proofs to Appendix III. 

\begin{theorem}\label{appIII:domain} If $P$ is a pointer list of length larger than $4$, then at least one of the following statements is true:
\begin{enumerate}
  \item ${\sf cde}(P)\neq P$;
  \item ${\sf cdr}(P)\neq P$;
  \item ${\sf cds}(P)\neq P$.
\end{enumerate}
\end{theorem}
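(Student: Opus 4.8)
The plan is to deduce the theorem from Lemma \ref{appII:CentralLemma} by matching each ciliate operation to one of its three conditions (a), (b), (c). Reading the definitions, {\sf cde} moves $P$ exactly when some $x_i = x_{i+1}$, which is the negation of (a); {\sf cdr} enters its reversal branch exactly when some $i<j$ has $x_i = -x_j$, and since every entry of a pointer list is nonzero this is the same as $\vert x_i\vert = \vert x_j\vert$ with $x_i\neq x_j$, i.e.\ the negation of (b); and {\sf cds} enters its swap branch exactly when there are $i<j<k<\ell$ with $x_i = x_k$ and $x_j = x_\ell$, which I will identify (this is the main technical point) with the negation of (c). Granting these matchings, Lemma \ref{appII:CentralLemma} guarantees that at least one of (a), (b), (c) fails when the length exceeds $4$, so it suffices to check that the failure of each condition makes the matching operation genuinely alter $P$.

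A single observation handles the ``genuinely alters $P$'' part uniformly. If (a) fails, say $x_i = x_{i+1}$, then the second branch of {\sf cde} deletes two entries and shortens $P$, so ${\sf cde}(P)\neq P$. So assume (a) holds. For {\sf cdr}, when (b) fails take the minimal $i$ with $x_i = -x_j$ for some $j>i$; the reversal overwrites position $i+1$ with $-x_j = x_i$, and since (a) gives $x_i\neq x_{i+1}$ the entry in position $i+1$ changes, whence ${\sf cdr}(P)\neq P$. The same device works for {\sf cds}: for a quadruple $i<j<k<\ell$ with $x_i = x_k$ and $x_j = x_\ell$, the definition places the block $[x_k,\dots,x_\ell]$ immediately after position $i$, so position $i+1$ is overwritten by $x_k = x_i \neq x_{i+1}$, and ${\sf cds}(P)\neq P$. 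Thus the case analysis closes: by Lemma \ref{appII:CentralLemma} some condition fails, and either {\sf cde} moves $P$ (when (a) fails) or, with (a) in force, the operation matched to the failing condition moves $P$.

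The one matching that requires real work, and which I expect to be the main obstacle, is the equivalence between the applicability of {\sf cds} and the failure of (c). The easy direction is that an ordered crossing quadruple $i<j<k<\ell$ with $x_i=x_k$, $x_j=x_\ell$ violates the ``nested or separated'' alternative of (c). For the direction actually needed I would begin from a violation of (c): indices with $x_i = x_k$, $x_j = x_\ell$, $i<j$, $i\neq k$, $j\neq\ell$ that are neither nested ($i<j<\ell<k$) nor separated ($i<k<j<\ell$). Using conditions (2)--(4) of Definition \ref{def: PointerList}, the minimum and maximum absolute values occur once while every intermediate absolute value occurs exactly twice, so each of $\{i,k\}$ and $\{j,\ell\}$ is the complete occurrence set of a single intermediate value; invoking Lemma \ref{appII:valueandparity} to pin down signs and parities, I would exclude the degenerate readings in which the four named indices collapse onto two positions, and then reorder each pair in increasing order so that the only surviving position pattern is the crossing pattern $i'<j'<k'<\ell'$ with $x_{i'}=x_{k'}$ and $x_{j'}=x_{\ell'}$. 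Verifying that conditions (5)--(6) leave no other pattern available, and that a genuine (non-degenerate) crossing is produced, is the delicate heart of the argument; once it is in place, the case analysis above together with Lemma \ref{appII:CentralLemma} completes the proof.
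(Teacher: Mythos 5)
Your overall architecture is the same as the paper's: invoke Lemma \ref{appII:CentralLemma}, match the failure of (a), (b), (c) to ${\sf cde}$, ${\sf cdr}$, ${\sf cds}$ respectively, and check that the matched operation genuinely changes $P$. Your handling of that last part is clean and in fact sharper than the paper's: the observation that position $i+1$ gets overwritten by $-x_j = x_i$ (for ${\sf cdr}$) or by $x_k = x_i$ (for ${\sf cds}$), which differs from $x_{i+1}$ once (a) holds, settles ``$\neq P$'' claims that the paper merely asserts. The gap is exactly where you predicted it, but it is worse than ``delicate'': the statement you propose to verify is false. A violation of (c) does \emph{not} yield a crossing quadruple. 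Condition (c), read literally, is violated by essentially every pointer list satisfying (b): if an intermediate value occurs at positions $p<q$, the degenerate naming $(i,j,k,\ell)=(p,q,q,p)$ already satisfies the hypotheses of (c) and neither ordering in its conclusion. And even after you exclude degenerate collapses, nested and separated mate-pairs furnish non-degenerate witnesses of $\neg$(c) simply by naming the four indices non-monotonically. Concretely, let $P=\pi([1,4,3,2,5]) = [1,2,4,5,3,4,2,3,5,6]$, which satisfies (a) and (b). The value $2$ sits at positions $2,7$ and the value $4$ at positions $3,6$, a nested configuration. The assignment $(i,j,k,\ell)=(2,6,7,3)$ satisfies $i\neq k$, $j\neq\ell$, $i<j$, $x_i=x_k$, $x_j=x_\ell$, and neither $i<j<\ell<k$ nor $i<k<j<\ell$, so it witnesses $\neg$(c); yet no ${\sf cds}$ move exists on these two values, and reordering each pair in increasing order produces the nested pattern $2<3<6<7$ --- i.e., the reordering \emph{destroys} the witness property (the sorted naming satisfies (c)) rather than producing a crossing. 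Conditions (5)--(6) cannot rule this out, since nested mate-pairs occur in perfectly good pointer lists.

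The repair is to abandon the local, witness-massaging route altogether. What the third case of the theorem needs is: if (a) and (b) hold and $\vert P\vert>4$, then some two mate-pairs \emph{cross}, and that is precisely what the \emph{proof} of Lemma \ref{appII:CentralLemma} establishes --- inside that proof, ``by (c)'' is read as ``every two mate-pairs are nested or disjoint,'' and Sublemmas A through E show that under (a) and (b) this forces length at most $4$. Citing that stronger content (a crossing exists), your position-$(i+1)$ argument finishes the proof immediately. For what it is worth, the paper's own write-up of this case suffers from the same defect: it enumerates only three possible orderings for a witness of $\neg$(c), omitting the nested and separated relabelings, so its argument too is only correct under the crossing reading of (c). But your proposal explicitly commits the heart of the proof to deriving a crossing from an arbitrary $\neg$(c) witness, and that derivation does not exist.
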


\begin{theorem}[Pointer list preservation]\label{appIII:parityth}
Let $P = \lbrack x_1,\cdots,x_m\rbrack$ be a pointer list.
Then each of ${\sf cde}(P)$, ${\sf cdr}(P)$ and ${\sf cds}(P)$ is a pointer list.
\end{theorem}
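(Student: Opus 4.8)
The plan is to verify, for each of ${\sf cde}(P)$, ${\sf cdr}(P)$ and ${\sf cds}(P)$, that all six clauses of Definition~\ref{def: PointerList} hold, given that they hold for $P$. In each case I would first set aside the trivial alternative in which the operation returns $P$ unchanged, so that I may assume the triggering pattern genuinely occurs: $x_i = x_{i+1}$ for {\sf cde}, a matched pair $x_i = -x_j$ for {\sf cdr}, and an alternating quadruple $x_i = x_k$, $x_j = x_\ell$ for {\sf cds}. I would then split the six clauses into two groups: clauses (1)--(4), which speak only about the length and the multiset of absolute values (together with the unique minimum $\mu$ and unique maximum $\lambda$), and clauses (5)--(6), which encode the block structure obtained by pairing each odd position $2t-1$ with $2t$.

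Clauses (1)--(4) I expect to be routine. The operations {\sf cdr} and {\sf cds} only negate-and-reverse or interchange contiguous segments, so they preserve both the length and the entire multiset of absolute values; hence (1)--(4) transfer verbatim from $P$, with Lemma~\ref{appII:valueandparity} available to keep track of signs. For {\sf cde} the two excised entries $x_i = x_{i+1}$ share an absolute value which, occurring at least twice, can be neither $\mu$ nor $\lambda$; thus by clause (4) for $P$ it is an interior value occurring exactly twice, both copies are removed, the length drops by $2$ (staying even), and every other absolute value keeps its multiplicity. This yields (1)--(4) for ${\sf cde}(P)$.

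The substance is clauses (5) and (6), since each operation perturbs the odd/even pairing precisely at its cut points. For {\sf cde} I would split on the parity of the minimal $i$ with $x_i = x_{i+1}$. If $i$ is odd, a whole pair is deleted and every surviving pair is untouched, the later indices merely shifting by the even number $2$. If $i$ is even, the pairs $(x_{i-1},x_i)$ and $(x_{i+1},x_{i+2})$ merge into the single new pair $(x_{i-1},x_{i+2})$, which I would certify legal: it is same-signed and ascending because $x_{i-1}\le x_i = x_{i+1}\le x_{i+2}$ with matching signs throughout, and it has no interior absolute value because the only candidate is $\vert x_i\vert$, every occurrence of which has just been deleted. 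For {\sf cdr} and {\sf cds} I would use Lemma~\ref{appII:valueandparity} to pin down the parities at the matched indices ($x_i = -x_j$ forces $i\equiv j\pmod 2$, while $x_i = x_k$ and $x_j = x_\ell$ force $i\not\equiv k$ and $j\not\equiv\ell$ modulo $2$), and then show that, for the indices actually selected by the definitions, the reversed-and-negated block (respectively the two interchanged blocks) realigns so that its internal pairs are again same-signed ascending pairs with no interior value, and so that the entries newly made adjacent by the operation (the equal adjacencies $x_i,-x_j$ for {\sf cdr} and $x_i,x_k$ and $x_\ell,x_j$ for {\sf cds}) land at pair boundaries of the correct parity.

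The hard part will be clause (6) at exactly these new boundary pairs: I must rule out that some absolute value occurring elsewhere in the list falls strictly between the two bounds of a freshly created pair. This is genuinely delicate --- for a carelessly chosen matched pair or quadruple one can strand an intermediate absolute value between the new bounds and destroy clause (6) --- so the argument must exploit both the minimality built into the definitions of {\sf cdr} and {\sf cds} and the global structure of a pointer list (each interior absolute value occurs exactly twice, and clause (6) for $P$ makes the two absolute values inside any pair adjacent in the sorted set of occurring absolute values). Concretely I would run a case analysis on the relative nesting of the matched index pairs, appealing where needed to the structural constraints of Lemma~\ref{appII:CentralLemma}, and show that the minimal choice always produces a reconfiguration in which no occurrence of an intermediate absolute value is left between the bounds of a new pair. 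Verifying this ``no stranding'' property for the minimal indices is the step I expect to absorb most of the work.
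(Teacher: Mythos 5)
The parts of your plan that you actually carry out coincide with the paper's own proof. The paper splits the six clauses exactly as you do, disposes of (1)--(4) by noting that ${\sf cdr}$ and ${\sf cds}$ preserve the length and the multiset of absolute values while ${\sf cde}$ deletes both copies of one interior value, and handles ${\sf cde}$ for clauses (5)--(6) by precisely your argument: when the minimal $i$ with $x_i = x_{i+1}$ is even, the two old pairs merge into $(x_{i-1},x_{i+2})$, which is ascending and same-signed since $x_{i-1}\le x_i = x_{i+1}\le x_{i+2}$, and satisfies clause (6) because the only candidate value strictly between $\vert x_{i-1}\vert$ and $\vert x_{i+2}\vert$ is $\vert x_i\vert$, both occurrences of which were just deleted. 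That much is sound.

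The step you defer --- the ``no stranding'' property for the boundary pairs created by ${\sf cdr}$ and ${\sf cds}$ --- is a genuine gap, and it cannot be closed, because that property (and with it the theorem as literally stated) is false; the minimality built into the definitions does not rescue it. Take $P = \pi(\lbrack -2,\,1,\,3\rbrack) = \lbrack -3,\,-2,\,1,\,2,\,3,\,4\rbrack$, a pointer list by Lemma \ref{pointerassignment} and in fact a fixed point of both ${\sf cde}$ and ${\sf cds}$. The only possible ${\sf cdr}$ match is $x_1 = -3 = -x_5$, so the choice is forced, and ${\sf cdr}(P) = \lbrack -3,\,-3,\,-2,\,-1,\,2,\,4\rbrack$. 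Its pair in positions $(5,6)$ is $(2,4)$, while the value $3$ still occurs in positions $1$ and $2$, so clause (6) fails and ${\sf cdr}(P)$ is not a pointer list. The stranded value is no accident: it is exactly $\vert x_i\vert = \vert x_j\vert$, the pointer the operation makes adjacent-equal, and it necessarily remains in the list until ${\sf cde}$ excises it. The same failure occurs for ${\sf cds}$: for $P = \pi(\lbrack 3,1,4,2,5\rbrack)=\lbrack 3,4,1,2,4,5,2,3,5,6\rbrack$ the lexicographically least quadruple is $(1,6,8,9)$ and ${\sf cds}(P)=\lbrack 3,3,5,5,2,4,1,2,4,6\rbrack$ contains the illegal pair $(2,4)$. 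You should also know that the paper's proof stumbles at exactly the point you flagged: in the ${\sf cdr}$ case with $i$ odd it concedes that ``aside of $x_i=-x_j$'' nothing lies between $\vert x_{i+1}\vert$ and $\vert x_{j+1}\vert$, then ignores that $x_i$ and $-x_j$ are still terms of ${\sf cdr}(P)$; in the ${\sf cds}$ case it treats only $i$ even and makes the same leap. So your diagnosis of where the difficulty sits is exactly right, but no case analysis can verify it; the statement itself must be repaired --- for instance, proved for the composites ${\sf E}\circ{\sf cdr}$ and ${\sf E}\circ{\sf cds}$ (operation followed by excision to a ${\sf cde}$ fixed point), which is all that Theorem \ref{algorithmtermination} actually requires, or with clause (6) weakened so as to discount values occurring as adjacent equal entries.
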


A finite sequence $\sigma$ is a \emph{fixed point} of a function $F:\,^{<\omega}\integers\rightarrow\,^{<\omega}\integers$ if $F(\sigma) = \sigma$. 

\begin{theorem}\label{appIII:aftercdrcds} If P is a pointer list of length larger than 4 and not a fixed point of $F\in\{{\sf cdr},\, {\sf cds}\}$, then $F$(P) is not a fixed point of {\sf cde}.
\end{theorem}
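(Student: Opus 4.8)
The plan is to unwind the definition of {\sf cde} and then read the conclusion directly off the explicit rearrangement formulas for {\sf cdr} and {\sf cds}. By definition a finite sequence $Q=\lbrack y_1,\cdots,y_r\rbrack$ is a fixed point of {\sf cde} exactly when there is no index $t$ with $y_t=y_{t+1}$. Hence to prove the theorem it suffices to exhibit, in each of the sequences ${\sf cdr}(P)$ and ${\sf cds}(P)$ (under its respective non-fixed-point hypothesis), two equal entries in consecutive positions. I would stress at the outset that this reduction makes the question ``is $F(P)$ a pointer list?'' irrelevant: {\sf cde} and its fixed points are defined on all of $^{<\omega}\integers$, so no appeal to Theorem \ref{appIII:parityth} is needed.

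For $F={\sf cdr}$ I would argue as follows. Since $P$ is not a fixed point of {\sf cdr}, the identity branch of the definition is not taken, so there exist $i<j$ with $x_i=-x_j$; let $i$ be the minimal such index and $j$ the corresponding one, so that ${\sf cdr}(P)=\lbrack x_1,\cdots,x_{i-1},x_i,-x_j,\cdots,-x_{i+1},x_{j+1},\cdots,x_m\rbrack$. In this sequence position $i$ holds $x_i$ and position $i+1$ holds $-x_j$; since $x_i=-x_j$ by the choice of $i,j$, these consecutive entries are equal, so ${\sf cdr}(P)$ is not a fixed point of {\sf cde}.

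The case $F={\sf cds}$ is the same in spirit. Not being a fixed point of {\sf cds} forces the existence of indices $i<j<k<\ell$ with $x_i=x_k$ and $x_j=x_\ell$; choosing the least such $i$ and then the least corresponding $j$ gives ${\sf cds}(P)=\lbrack x_1,\cdots,x_i,x_k,\cdots,x_\ell,x_j,\cdots,x_{k-1},x_{i+1},\cdots,x_{j-1},x_{\ell+1},\cdots,x_m\rbrack$. The block $x_k,\cdots,x_\ell$ is nonempty (it contains at least $x_k$ and $x_\ell$, since $k<\ell$), so the entry immediately after $x_i$ is $x_k$; as $x_i=x_k$, positions $i$ and $i+1$ again carry the same value, and ${\sf cds}(P)$ is not a fixed point of {\sf cde}.

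There is no genuine obstacle here: the whole proof is the observation that each operation places one copy of the active pointer immediately next to its matching copy. The only point deserving a sentence of care is confirming that we are in the rearrangement branch, not the identity branch, of each piecewise definition; this is exactly what ``$P$ is not a fixed point of $F$'' supplies, since if the required indices failed to exist the definition would return $P$ unchanged. This also reconciles with the informal remarks made just after the {\sf cds} and {\sf cdr} figures, where it is noted that following either operation a pointer comes to be flanked by successively numbered MDSs --- which in the pointer-list encoding $\pi$ is precisely an adjacency $x_t=x_{t+1}$.
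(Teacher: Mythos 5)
Your proof is correct, and there is in fact no paper proof to diverge from: the sentence before Theorem \ref{appIII:domain} promises Appendix III proofs only for ``the next two theorems,'' and Appendix III indeed proves only Theorems \ref{appIII:domain} and \ref{appIII:parityth} --- Theorem \ref{appIII:aftercdrcds} is stated without proof anywhere in the paper, even though the proof of Theorem \ref{algorithmtermination} relies on it. Your argument supplies exactly the verification that was left implicit. The two steps are sound: the hypothesis that $P$ is not a fixed point of $F$ is used precisely to rule out the identity branch of the piecewise definition (if the required indices did not exist, $F(P)$ would equal $P$), and then each rearrangement formula manufactures an equal adjacent pair --- $x_i$ immediately followed by $-x_j = x_i$ for {\sf cdr}, and $x_i$ immediately followed by $x_k = x_i$ for {\sf cds} (one could equally cite $x_\ell$ followed by $x_j$) --- which, by the definition of {\sf cde}, is exactly the failure of being a {\sf cde} fixed point. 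Your side observations are also accurate and worth keeping: since {\sf cde} fixed points are characterized on all of $\,^{<\omega}\integers$ by the absence of equal adjacent entries, no appeal to pointer-list preservation (Theorem \ref{appIII:parityth}) is needed, and consequently neither the ``pointer list'' hypothesis nor the ``length larger than $4$'' hypothesis is actually used, so your argument proves a slightly stronger statement. It also formalizes the informal Section 2 remarks that after {\sf cds} or {\sf cdr} some pointer occurrence becomes flanked by successively numbered MDSs.
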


\section{The HNS algorithm}

Call a pointer list a \emph{destination} if it is one of the following: $\lbrack \mu,\lambda\rbrack$, $\lbrack -\lambda,-\mu\rbrack$, or for some integer $z$ with $\vert z\vert\not\in\{\lambda,\, \mu\}$, the pointer list is one of $\lbrack z,\,\lambda,\mu,\,z\rbrack$ or  $\lbrack z,\,-\mu,-\lambda,\,z\rbrack$. 

Let $P$ be a pointer list. Letting ${\sf cde}^i(P)$ denote the $i$-th iteration of ${\sf cde}$ on $P$, define $e(P)$ to be the minimal value of $i$ such that ${\sf cde}^{i+1}(P) = {\sf cde}^i(P)$. Then define ${\sf E}(P) = {\sf cde}^{e(P)}(P)$.

\begin{theorem}\label{algorithmtermination}  For a given pointer list $P_0$ define the sequence 
$
  P_0,\, P_1,\, \cdots,\, P_i,\, \cdots
$
so that
\[
  P_{i+1} = \left\{\begin{tabular}{ll}
                                    ${\sf E}(P_i)$ & if $P_i$ is not a {\sf cde} fixed point\\
                                                          &                                                             \\
                                    ${\sf cds}(P_i)$ & if $P_i$ is a ${\sf cde}$, but not a ${\sf cds}$ fixed point\\
                                                          &                                                             \\
                                    ${\sf cdr}(P_i)$ & if $P_i$ is a ${\sf cde}$ and a ${\sf cds}$ but not a ${\sf cdr}$ fixed point.     
                         \end{tabular}
                 \right.        
\]
Then the sequence 
$
  P_0,\, P_1,\, \cdots,\, P_i,\, \cdots
$
terminates in a destination.
\end{theorem}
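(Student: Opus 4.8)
The plan is to use the length $m$ of the pointer list as a termination measure and then to identify the list at which the process halts. Throughout, recall from Theorem~\ref{appIII:parityth} that every $P_i$ is again a pointer list, so each has even positive length; and recall from the definitions of the three operations (and the observation closing Section~2) that {\sf cde} strictly decreases the length by $2$ whenever it acts nontrivially, whereas {\sf cds} and {\sf cdr} preserve length. Since ${\sf E}(P)$ iterates {\sf cde} to a {\sf cde} fixed point, an ${\sf E}$-step strictly decreases the length precisely when $P_i$ is not a {\sf cde} fixed point.

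First I would show the length cannot remain above $4$ forever. While the current list $P_i$ has length larger than $4$, Theorem~\ref{appIII:domain} guarantees that $P_i$ is not a simultaneous fixed point of the three operations, so the recursion does define $P_{i+1}$. If $P_i$ is not a {\sf cde} fixed point, then $P_{i+1}={\sf E}(P_i)$ has strictly smaller length. If instead $P_i$ is a {\sf cde} fixed point, the recursion applies {\sf cds} or {\sf cdr} (since $P_i$ is not a fixed point of at least one of them), leaving the length unchanged; but then Theorem~\ref{appIII:aftercdrcds} forces $P_{i+1}$ to fail to be a {\sf cde} fixed point, so the next step is an ${\sf E}$-step that strictly decreases the length. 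Hence the length drops at least once in every two consecutive steps as long as it exceeds $4$, and since lengths are positive integers this cannot persist; after finitely many steps the length is $2$ or $4$.

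Next I would handle the two small lengths directly, since Theorem~\ref{appIII:aftercdrcds} is only available above length $4$. For length $2$, conditions (2), (3) and (5) of Definition~\ref{def: PointerList} force a list $[x_1,x_2]$ with $\vert x_1\vert\neq\vert x_2\vert$, $x_1\le x_2$ and $x_1 x_2>0$; this is visibly a simultaneous fixed point and equals either $[\mu,\lambda]$ or $[-\lambda,-\mu]$, both destinations. For length $4$ the crucial preliminary observation is that such a list can never fail to be a {\sf cds} fixed point: the only available {\sf cds}-pattern is $x_1=x_3$ and $x_2=x_4$, which would make some absolute value occur twice, contradicting the uniqueness of the extreme absolute values in conditions (2) and (3). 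So at length $4$ the algorithm only ever applies ${\sf E}$ or {\sf cdr}.

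The main work, and the step I expect to be the real obstacle, is the length-$4$ bookkeeping, precisely because the general Theorem~\ref{appIII:aftercdrcds} does not cover it. Using conditions (2)--(4) I would first show that the multiset of absolute values of a length-$4$ pointer list is $\{\mu,c,c,\lambda\}$ with $\mu<c<\lambda$; then conditions (5),(6) together with the {\sf cde}-fixed-point condition pin the two pointer pairs to absolute values $\{\mu,c\}$ and $\{c,\lambda\}$ (the pairing $\{\mu,\lambda\}$ is excluded by condition (6) because $c$ lies strictly between, and $\{c,c\}$ by the {\sf cde} condition). Finally Lemma~\ref{appII:valueandparity}, together with the {\sf cdr}-fixed-point condition that no entry be the negative of another, decides the signs: a genuine simultaneous fixed point must carry its two copies of $c$ with equal sign, leaving exactly $[c,\lambda,\mu,c]$ and $[-c,-\mu,-\lambda,-c]$, which are precisely the length-$4$ destinations. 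In each remaining length-$4$ list that is a {\sf cde} fixed point but not a {\sf cdr} fixed point, a single {\sf cdr} produces two equal adjacent entries, so the subsequent ${\sf E}$-step drops the length to $2$ and the process then halts. Combining these facts, the sequence terminates at a simultaneous fixed point of length $2$ or $4$, and every such list is a destination, which is the claim.
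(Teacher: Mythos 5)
Your proposal is correct and follows essentially the same route as the paper's proof: invoke Theorems \ref{appIII:parityth}, \ref{appIII:domain} and \ref{appIII:aftercdrcds} to show the length strictly decreases (with {\sf cds}/{\sf cdr} steps always followed by a length-reducing ${\sf E}$ step) until it reaches $4$ or $2$, then classify the small simultaneous fixed points as destinations using stipulations (2)--(6) of Definition \ref{def: PointerList} and Lemma \ref{appII:valueandparity}. Your only departure is that you handle the length-$4$ dynamics explicitly (noting {\sf cds} is vacuous there and that a nontrivial {\sf cdr} creates an adjacent equal pair), since Theorem \ref{appIII:aftercdrcds} only applies above length $4$; this is a point the paper's own proof passes over more quickly, so it is a refinement rather than a divergence.
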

\begin{proof} By Theorem \ref{appIII:parityth}, each term in this sequence is a pointer list. By Theorem \ref{appIII:domain}, as long as such a pointer list has more than four terms, it is not a fixed point of the ciliate operations. By Theorem \ref{appIII:aftercdrcds} the sequence does not terminate with an application of ${\sf cds}$ or of ${\sf cdr}$, but with an application of ${\sf E}$. Each application of $E$ reduces the length of a pointer list not a fixed point for ${\sf E}$ by a positive even number of terms. According to the definitions of the ciliate operations the pointers with absolute value $\lambda$ and $\mu$ are never excised, and thus present in any fixed point of a ciliate operation. Thus, a fixed point consisting of only two terms necessarily consists of the terms with absolute values $\lambda$ and $\mu$. As such a two term result is still a pointer list by Theorem \ref{appIII:parityth}, stipulation 5 of Definition \ref{def: PointerList} shows that this fixed point must be $\lbrack \mu,\, \lambda\rbrack$ or $\lbrack -\lambda,\, -\mu\rbrack$. Since applications of ${\sf cde}$ removes terms that are equal and adjacent, a four term fixed point must contain in addition to terms with absolute values $\mu$ and $\lambda$,  two terms of equal absolute value. If these two terms have opposite sign the pointer list is not a fixed point for ${\sf cdr}$. Thus, these two terms must be of the same sign.  But then, as the pointer list is a fixed point of ${\sf cde}$, these two terms are not adjacent. Moreover, their absolute value is strictly between $\mu$ and $\lambda$. Now stipulation 5 of Definition \ref{def: PointerList} implies that this pointer list is one of the two remaining claimed destinations.
\end{proof}

Thus the following algorithm, which we call the {\sf HNS} algorithm, halts:\\

{\tt
1) Input: A pointer list $P$, its length $\vert P\vert$ and integers $r$ and $s$;\\
2) Iteratively apply ${\sf cde}$ until a ${\sf cde}$ fixed point is reached. With each application, decrease $\vert P\vert$ by 2. Then proceed to 3).\\ 
3) If $P$ is a fixed point of {\sf cds}, proceed to 4). Else, apply {\sf cds}, increase $s$ by $1$, and return to 1).\\
4) If $P$ is a fixed point of {\sf cdr}, terminate the algorithm and report the current values of $P$, $r$ and $s$. Else, apply {\sf cdr}, increase $r$ by $1$, and return to 1). }\\

\tikzstyle{decision} = [diamond, draw, fill=blue!20, text width=2em, text badly centered, node distance=3cm] 
\tikzstyle{block} = [rectangle, draw, fill=blue!20, text width=6em, text centered, rounded corners, minimum height=4em, node distance=3cm]
\tikzstyle{block2} = [rectangle, draw, fill=green!20, text width=5em, text centered, rounded corners, minimum height=4em, node distance=3cm]
\tikzstyle{line} = [draw, -latex']
\tikzstyle{cloud} = [draw, ellipse,fill=red!20, node distance=3cm, minimum height=2em]
\begin{figure}    
\begin{tabular}{|c|}\hline
\\
  \begin{tikzpicture}[node distance = 2cm, auto]
    \node [cloud] (expert) {$P_0$, r=0, s=0};
    \node [block, below of=expert] (init) {Input: $P_i$\\ $\vert P_i\vert$, r, s};
    \node[left of=init]{Step 1};
    \node [decision, below of=init] (identify) {Is ${\sf cde}(P_i)=P_i$?};
    \node[left of=identify]{Step 2};
    \node [block, right of=identify] (excision) {Apply {\sf cde}.\\ i = i\\ $\vert P_i\vert = \vert P_i\vert-2$};
    \node [decision, below of=identify] (evaluate) {Is ${\sf cds}(P_i)=P_i$?};
    \node[left of=evaluate]{Step 3};
    \node [block, right of=evaluate, xshift=1.5cm ] (swap) {Apply {\sf cds}\\ $\vert P_{i+1}\vert = \vert P_i\vert$ \\ r = r+1, i = i+1};
    \node [decision, below of=evaluate] (reverse) {Is ${\sf cdr}(P_i)=P_i$?};
    \node[left of=reverse]{Step 4};
    \node [block, right of=reverse, xshift=3.0cm ] (invert) {Apply {\sf cdr}\\ $\vert P_{i+1}\vert = \vert P_i\vert$ \\ s = s+1, i=i+1};
    \node[block2, below of=reverse](output){Output: \\Destination,\\  r, s};

    \path [line] (init) -- (identify);
    \path [line] (identify) -- node[near start] {no} (excision);
    \path [line] (identify) -- node[near start] {yes} (evaluate);
    \path [line] (evaluate) -- node[near start] {no} (swap);
    \path [line] (evaluate) -- node[near start] {yes} (reverse);
    \path [line] (reverse) -- node[near start] {no} (invert);
   \path [line] (reverse) -- node[near start] {yes} (output);    \path[line,dashed] (excision) |- (init);
    \path[line,dashed] (swap) |- (init);
    \path[line,dashed] (invert) |- (init);
    \path [line,dashed] (expert) -- (init);
  \end{tikzpicture} \\
  \\ \hline
  \end{tabular}
  \caption{A flow diagram for the HNS algorithm.}\label{fig:HNSFlowDiagram}
  \end{figure}
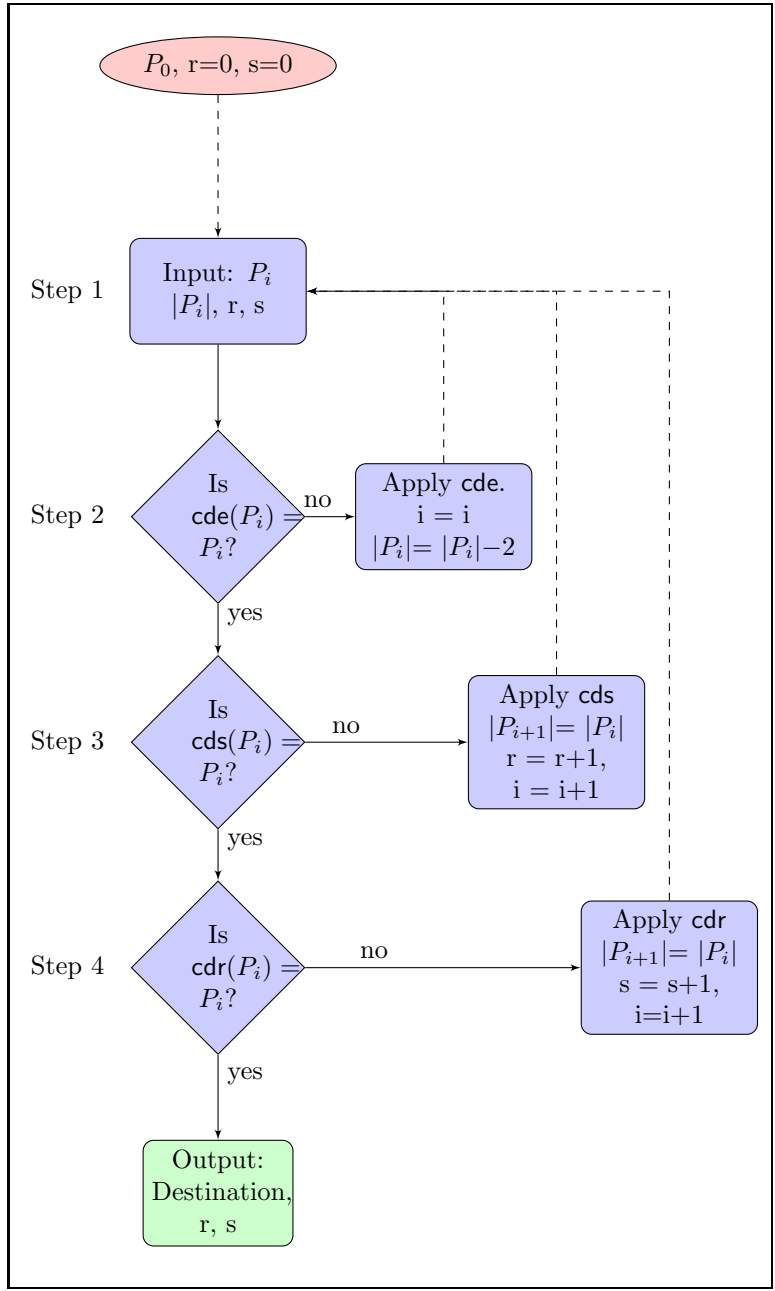

Figure \ref{fig:HNSFlowDiagram} depicts the algorithm in flow-diagram style. Let the original length of the pointer list $P$ be denoted $\vert P\vert$.\\ 
In step 2, the algorithm examines $\vert P\vert - 1$ adjacent pairs.  
If $P$ is not a {\sf cde} fixed point, then with the application of {\sf cde}, $\vert P\vert$ decreases by $2$. In this step we update the length of the resulting $P$ with each nontrivial application of {\sf cde}. \\ 
In step 3 the algorithm starts with a position $k<\vert P\vert$  
and then chooses a position $\ell>k+1$ with $x_k=x_{\ell}$ if any. 
This takes at most $(\vert P\vert - 1) + (\vert P\vert - 2) + \cdots + 2$ 
 search steps, which is $O(\vert P\vert^2)$. If this search fails, proceed to step 4. 
Else, suppose a successful $k+1<\ell<\vert P\vert$ 
 is found. Then for $k<j<\ell$ search for an $m>\ell$ with $x_m=x_{\ell}$. This would require at most $(\ell-k)*(\vert P\vert - \ell)$  
 steps. If this fails, proceed to step 4. Else, execute a {\sf cds} based on the found quadruple $(k,j,\ell,m)$, increase $s$ by 1, and return to step 1. Step 3 is completed in $O(\vert P\vert^2)$
  search steps.\\
In step 4 the algorithm starts with a position $k<\vert P\vert$  
and then scans positions $j>k$ until it finds an $x_j = -x_k$. The worst case scenario for this search is also $(\vert P\vert -1) + (\vert P\vert - 2) + \cdots + 2$, or $O(\vert P\vert^2)$. 
If the search succeeds, the result of {\sf cdr} is obtained in at most $\vert P\vert -1$ 
 search steps. Increase $r$ by 1, and return to step 2. Else, if the search fails, terminate the algorithm and report the current values of $P$, $r$ and $s$. 

In one cycle of executing steps until return to step 1, the worst case scenario employs at most $O(\vert P\vert^2)$ 
search and execution steps. For the next round an upper bound is $O((\vert P\vert  -1)^2) = O(\vert P\vert^2)$.  
This continues for at most $\frac{\vert P\vert}{2}$ 
rounds. Thus a global upper bound, in terms of the length of the initial pointer list, is $O(\vert P\vert^3)$.

The efficiency of this algorithm that produces from an initial pointer list a fixed point for the operations {\sf cde}, {\sf cds} and {\sf cdr} in  $O(\vert P\vert^3)$ 
steps can probably be 
improved. Additionally, this algorithm most likely does not minimize the number of steps taken, using {\sf cde}, {\sf cds} and {\sf cdr}, to reduce a pointer list to a fixed point. 

In our phylogenetic application below, any calibration of time span in terms of the number of operations required is based on the above {\sf HNS} algorithm as computational standard for the calibration.

\section{An application to genome phylogenetics.}

As illustrated in Figure \ref{fig:HumanMouseX}, for organisms $\text{S}_1$ and $\text{S}_2$ there may be synteny blocks of orthologous genes on corresponding 
chromosomes. Choose $S_1$ as reference and number the synteny blocks 
in their $5^{\prime}$ to 3$^{\prime}$ order of appearance on $S_1$'s chromosome as $1,\, 2,\, 3,\, \cdots,\, n$. In species $S_2$ the synteny blocks 
of these same genes may appear in a different order, and individual synteny blocks may also appear in orientation opposite from the orientation in $S_1$. Write the corresponding list of numbers in their order of appearance on $S_2$'s chromosome, making the number negative if the synteny block orientation is opposite to that in $S_1$. The result is a signed permutation of the list $1,\, 2,\, 3,\, \cdots,\, n$.

Now imagine that the list of synteny blocks for $S_1$ are the MDS's of a ciliate macro nuclear gene G, while the signed permutation that represents the corresponding list of synteny blocks for $S_2$  is the micro nuclear precursor of G. Take the number of operations the ciliate decryptome performs to convert the micro nuclear precursor to its macro nuclear version G as a measure of the evolutionary distance between the two chromosomes of $S_1$ and $S_2$. We used the {\sf HNS} algorithm to simulate the actions of the ciliate decryptome on the set of highly permuted genomes from various species of fruit flies.


The fruitfly genome is organized in
four\footnote{There are exceptions: See for example Figure 1 of \cite{SB}. None of the exceptional species is considered in our paper.} chromosomes, enumerated 1, 2, 3 and 4. These four chromosomes are traditionally divided into six so-called Muller elements. The left and right arms of chromosome 2 each is one of these Muller elements, and similarly for chromosome 3. Chromosome 1 is the X chromosome. The correspondence of chromosomal material to Muller elements is as follows:
\begin{center}
\begin{tabular}{||l|c:c:c:c:c:c|} \hline
Chromosome     & 1 = X & 2L & 2R  & 3L & 3R & 4 \\ \hline 
Muller Element   &     A   &  B  &   C  & D  &  E  & F \\ \hline
\end{tabular}
\end{center} 

The fruitfly genome has at least 13,600 confirmed genes (and counting), but is not expected to host significantly more genes. 
Recall that our definition of a ``synteny block" is more restrictive than the one used in \cite{BSRXSG}, where ``micro-inversions" are permitted. See for example Table 1 on p. 1662 of \cite{BSRXSG} for data on these more relaxed synteny blocks relative to the genome of \emph{D. melanogaster}.  
Between two species the number of synteny blocks can still be well over a thousand, as can be gleaned from Table 1 of \cite{BSRXSG}, where the more relaxed definition of ``synteny block" actually provides a lower bound on the number of synteny blocks as defined in our paper. 

According to findings of \cite{BSRXSG} 95\% of orthologous genes between two species are present on the same Muller element. For the species we are using, with one exception to be noted now, evidence suggests that all orthologous genes are present on the same Muller elements. Using data obtained from flybase.org we examined the permutation structure of these for the eight species \emph{D. melanogaster}, \emph{D. yakuba}, \emph{D. erecta}, \emph{D. sechellia}, \emph{D. mojavensis}, \emph{D. simulans}, \emph{D. grimshawi} and \emph{D. virilis}.  As illustrated in Figure 3 of \cite{BSRXSG} there is a translocation of genes between Muller elements B and C for \emph{D. erecta}, one of the species in our sample. Thus we combined Muller elements B and C into one computational unit (chromosome 2) for our application. Thus, we refer to the five units A, B/C, D, E and F in the remainder of this discussion.  

For each of the five units we computed, using in-house developed software written in Python, the number of applications of context directed swaps or context directed reversals performed by the {\sf HNS} algorithm to permute the gene order of one species to produce the corresponding gene order of another species. This was done with each species considered as reference species. Since {\sf HNS} gives preference to block interchanges the number of reversals in our derived data is low. 

Note that although we used the full gene lists from flybase.org, using pointer lists and ciliate operations automatically reduces to performing ciliate sorting operations on synteny blocks between pairs of species. 

From our data about the number of context directed swaps, $s$, and reversals, $r$, we define a corresponding distance matrix by using the formula $s+\frac{r}{2}$. As the reader would observe from examining our data, this in fact does define a metric\footnote{There are strong grounds for equating the value of two reversals with that of a single swap. As computations show, the result (given in Appendix I) is a matrix that is symmetric over its diagonal. It is also evident that the number of sorting operations to sort permutation $\alpha$ to obtain permutation $\beta$, plus the number of sorting operations to sort permutation $\beta$ to permutation $\gamma$, is no smaller than the number of sorting operations to directly sort permutation $\alpha$ to permutation $\gamma$. Thus, the triangle inequality holds.}

Then we applied the unweighted pair group method with arithmetic mean, also known as the UPGMA algorithm\footnote{This is algorithm 4.1 in \cite{CB}. A good exposition is also given in Chapter 27 of \cite{BBEGP}, available online at www.evolution-textbook.org.}, to these metrics. We used an in-house developed MAPLE implementation of UPGMA to compute these phylogenies. The corresponding phylogenetic trees were drawn using the ``newicktree" package for the LaTeX typesetting system.

Appendix I contains the data, derived distance matrices and corresponding phylogenetic trees for the five units in Figures \ref{fig:MullerA}, \ref{fig:MullerBC}, \ref{fig:MullerD}, \ref{fig:MullerE} and \ref{fig:MullerF}. An entry in the format ``r:s" in row i and column j of a table is interpreted as follows: ``r" denotes the number of context directed reversals ({\sf cdr} operations), while ``s" denotes the number of context directed block interchanges ({\sf cds} operations) executed by the {\sf HNS} algorithm to convert the permutation of the species in row i  to that of the species in column j. Thus the species in column j is the \emph{reference species}. The total for whole genomes is given in Figure \ref{fig:MullerWG}. 

We used the timeline given in figures 1 and 3 of \cite{HHH} to calibrate the time line in our phylogenetic trees\footnote{We could have used alternative timelines, such as for example the timelines given in the figure at the DroSpeGe web site {\tt http://insects.eugenes.org/DroSpeGe/}. Whichever published timeline one chooses will determine the corresponding calibration applied to our data.} 
This calibration is a rough time line: Our work describes evolutionary relationships among instances of a specific chromosome present in these eight species. The evolutionary time line for a chromosome need not agree with the evolutionary time line for speciation. According to Figures 1 and 3 of \cite{HHH} the time span from the earliest common ancestor of our species is roughly 60 million years.
\section*{Discussion}

Comparison of our results in Appendix I, and the results of \cite{BSRXSG}  Table 2, show a significant difference in the number of sorting operations, with ours typically higher. One reason for these differences lies in our definition of synteny blocks: We allow blocks consisting of a single gene, and we do not allow blocks containing different gene orders. Thus, we have a larger number of synteny blocks to be sorted, and our computations took into account all orthologous genes. This point is illustrated by comparing the number of synteny blocks for Muller element E for \emph{D. yakuba}, \emph{D. sechellia} and \emph{D. simulans} (computed relative to \emph{D. melanogaster}) reported in Table 5 of \cite{BSRXSG} with the actual number of sorting operations reported for these species (with \emph{D. melanogaster} as reference) in our Figure \ref{fig:MullerE}. 
Moreover, whereas in \cite{BSRXSG} the authors used unconstrained reversals as sorting operation, we used context directed reversals. Additionally, in \cite{BSRXSG} genes that suggest that a transposition is responsible for the rearrangement were excluded from the analysis. We included all orthologous genes since the sorting operation of context directed swaps (block interchanges) accounts also for transpositions.

Comparison of the phylogenies in Appendix I with the phylogeny in Figure 8 of \cite{BSRXSG} or with the phylogeny of sequenced species at flybase.org\footnote{{\tt http://flybase.org/static\_pages/species/sequenced_species.html}} indicate that our placement of \emph{D. sechellia} is in all cases quite different. The placement of \emph{D. mojavensis, D. virilis} and \emph{D. grimshawi} relative to each other and to the other species agrees with both of these phylogenies for all but Muller elements A and E.  

By using the UPGMA algorithm to construct phylogenies from distance matrices we assumed a uniform rate of evolution for the Muller elements. Comparing these uniform rates among the different chromosomes indicate that no two individual chromosomes undergo permutations at the same rates. Our sorting data suggests the upper bounds in Figure \ref{fig:PermRates} on the number of ciliate sorting operations (cso) since the most recent common ancestor of all the species considered. 

\begin{figure}[h]
\begin{center}
\begin{tabular}{|l|r|} \hline
Computational unit  & cso \\ \hline
A             &  266.75 \\
B/C         &  207.75 \\
D            &  247.25 \\ 
E            &  364.25 \\ 
F            &       8.5  \\ \hline
\end{tabular}\\
\caption{Ciliate sorting operations since most recent common ancestor of all considered species} 
\label{fig:PermRates}
\end{center}
\end{figure}
These numbers were computed by taking the largest ciliate sorting distance achieved between a pair of the considered species, and dividing\footnote{Using our hypothesis of uniform rate of evolution} by 2 to obtain an estimate of the number of ciliate sorting operations to each species' corresponding genomic element since their most recent common ancestor. 

The Muller F element has undergone remarkably few permutations in comparison with the other Muller elements.  Muller element E appears to be the most susceptible to permutation, while Muller element F appears the most ``resistant" to permutation. This, however, may be a biased view of susceptibility to permutation since these computational units do not harbor the same number of genes or synteny blocks. As indicated in \cite{Hochman}, Chromosome 4 (Muller element F) is generally a very small chromosome: it may contain fewer than 100 genes only (see for example the results regarding Muller element F for various species in \cite{SB}). The other Muller elements each contains well over 1000 genes each. Thus one would expect the number of rearrangements needed to sort one species' Chromosome 4 gene content to that of another species to be relatively low in comparison with the other, larger, chromosomes.

Tables 5 and 6 of \cite{BSRXSG} report rearrangement rates that are computed from the number of synteny blocks relative to \emph{D. melanogaster}, the nucleotide length of the Muller element, and the estimated divergence time for the species in question. These rates assume that arbitrary reversals cause the rearrangements and thus ignore genes deemed to have been moved by other sorting mechanisms, and use a definition of ``synteny block" that ignores certain rearrangements. In the case of our context directed sorting operations a more appropriate measure of ``susceptibility to permutation" should probably take into account additional parameters regarding nucleotide patterns in the Muller elements. Progress in this regard would address the third\footnote{``$\dots$ how do new inversions originate?" This can be expanded to include the question of how new block interchanges originate.} and fourth\footnote{``$\dots$ what is the molecular basis for gene arrangement polymorphism?"} questions raised on p. 1603 and 1604 of \cite{SB}, phrased for arbitrary reversals, and may also indicate whether context directed reversals and block interchanges are more suitable sorting operations for phylogenetic analyses based on permutations of genomic material. Such rearrangement rates may be used as ``susceptibility coefficients", measuring the susceptibility of a genomic element to rearrangement.


According to Figure 3 of \cite{BSRXSG} the F element of \emph{D. willistoni} (which is not among the species we considered) has been absorbed in the E-element of \emph{D. willistoni}. It would be interesting to ``distill" the \emph{D. willistoni} F-element from the \emph{D. willistoni} E-element, and compare its level of permutation relative to the F-element of the eight species in our study. Establishing susceptibility coefficients may enable us to obtain from the current permutation state of the distilled ``\emph{D. willistoni} F-element", and an established evolutionary time distances for the fruitfly phylogeny, an estimate of when absorption of the F-element into the E-element took place.

Similarly, by separating the treatment of the B and C elements, and calculating the corresponding susceptibility coefficients of these elements, and distilling the B-element components and the C-element components for \emph{D. ananassae}, one may be able to estimate when these transpositions occurred. Figure 3 of \cite{BSRXSG} also indicates that part of \emph{D. pseudoobscura}'s Muller A element was transposed to its Muller E element. Susceptibility coefficients may be useful in estimating when this transposition occurred. An investigation of the structural properties of the chromosomes involved in these inter chromosomal translocations may also reveal if any DNA motifs promote these translocations.

The differences in phylogenies for different chromosomal domains in the considered species suggest the possibility of inferring from Mendelian inheritance hypotheses and diploidy of the fruitfly genomes, inter breeding among ancestor species that would produce the observed chromosomal configurations. 

We relied on the UPGMA algorithm for constructing our phylogenies. Other clustering techniques such as Neighborhood joining, or several other algorithms as for example in \cite{CB}, may reveal finer details than the technique applied here.


While using ciliate operations to compute the permutation based distances between pairs of species we found permutations which are not reducible to each other by ciliate operations. In contrast to the case for unrestricted block interchanges and unrestricted reversals, not all permutations are invertible by context directed block interchanges and reversals. When our algorithm terminates with a destination of length 4 instead of 2, this indicates that the two permutations involved in the distance measure requires an additional transposition to complete the transformation. Though we have not done so in our current paper, the fact of uninvertibility by ciliate decryptome operations could be taken as an additional parameter in measuring evolutionary distance. Instead, in this paper we counted this additional transposition needed at the end as a single step towards the distance. An argument can be made that the necessity of this additional transposition should be accounted for more significantly in computing evolutionary distance. It also raises the question of determining an easily applicable characterization of permutations that are invertible by constrained block interchanges or reversals. The problem of mathematically characterizing permutations that are invertible by context directed operations has been solved in subsequent work \cite{AHMMS}.

Finally, although the HNS algorithm finds in polynomial time the data needed to construct a distance matrix, we do not propose that this algorithm finds optimal data in the following sense: When one permutation can be transformed to another by means of context directed reversals and block interchanges, what is the least number of these operations needed for such a transformation? The answer for context directed block interchanges has been obtained in \cite{AHMMS}. The minimal number of operations may depend on strategic sorting decisions made while sorting a permutation. One may inquire whether certain permutations require less strategic decision making in order to obtain a successful sorting. The permutations requiring the least number of strategic decisions for context directed block interchanges have been characterized in \cite{ASWW}, but a complete answer is currently not known.

\vspace{-1in}



\newpage

\section*{Appendix I: The distance matrices underlying the application of UPGMA to the five chromosomes of eight fruitfly species.}

\begin{figure}[h] 
  \centering
\begin{tabular}{|c||c|c|c|c|c|c|c|c|}\hline
      & D. vir & D. gri & D. sim   & D. moj. & D. mel   & D.ere   & D.yak.  & D.sec. \\ \hline \hline
D.vir &        & 32:431 &  38:463  &  31:438 &   33:403 &  40:426 &  29:414 & 35:514 \\ \hline
D.gri & 26:434 &        &  36:446  &  35:430 &   36:381 &  45:404 &  40:391 & 45:504 \\ \hline
D.sim & 36:464 & 34:447 &          &  35:460 &    8:268 &  19:311 &  21:282 & 26:505 \\ \hline
D.moj & 29:439 & 37:429 &  41:457  &         &   41:407 &  34:434 &  36:422 & 37:515 \\ \hline
D.mel & 37:401 & 40:379 &   6:269  &  45:405 &          &   1:171 &  35:93  & 19:482 \\ \hline
D.ere & 36:428 & 43:405 &  29:306  &  42:430 &    3:170 &         &  25:182 & 28:499 \\ \hline
D.yak & 43:407 & 40:391 &  31:277  &  50:415 &   11:105 &  25:182 &         & 29:481 \\ \hline
D.sec & 43:510 & 39:507 &  20:508  &  39:514 &    7:488 &  22:502 &  17:487 &        \\ \hline
\end{tabular}
\begin{center}{A: r:s denotes number of cdr: number of cds}\end{center}
\vspace{0.1in}

\begin{tabular}{|c||c|c|c|c|c|c|c|c|}\hline
      &  D. vir & D. gri & D. sim   & D. moj. & D. mel   & D.ere   & D.yak.  & D.sec. \\ \hline \hline
D.vir &  0.0    & 447.0  & 482.0    & 453.5   & 419.5    & 446.0   & 428.5   & 531.5\\ \hline
D.gri &447.0    &   0.0  & 464.0    & 447.5   & 399.0    & 426.5   & 411.0   & 526.5\\ \hline
D.sim &482.0    & 464.0  & 0.0      & 477.5   & 272.0    & 320.5   & 292.5   & 518.0\\ \hline
D.moj &453.5    & 447.5  & 477.5    & 0.0     & 427.5    & 451.0   & 440.0   & 533.5\\ \hline
D.mel &419.5    & 399.0  & 272.0    & 427.5   & 0.0      & 171.5   & 110.5   & 491.5\\ \hline
D.ere &446.0    & 426.5  & 320.5    & 451.0   & 171.5    & 0.0     & 194.5   & 513.0\\ \hline
D.yak &428.5    & 411.0  & 292.5    & 440.0   & 110.5    & 194.5   & 0.0     & 495.5\\ \hline
D.sec &531.5    & 526.5  & 518.0    & 533.5   & 491.5    & 513.0   & 495.5   & 0.0\\ \hline
\end{tabular}
\begin{center}{Distance matrix for Muller Element A}\end{center}
\vspace{0.1in}

  \centering
  \includegraphics[bb=0 0 488 412,width=2.56in,height=2.16in,keepaspectratio]{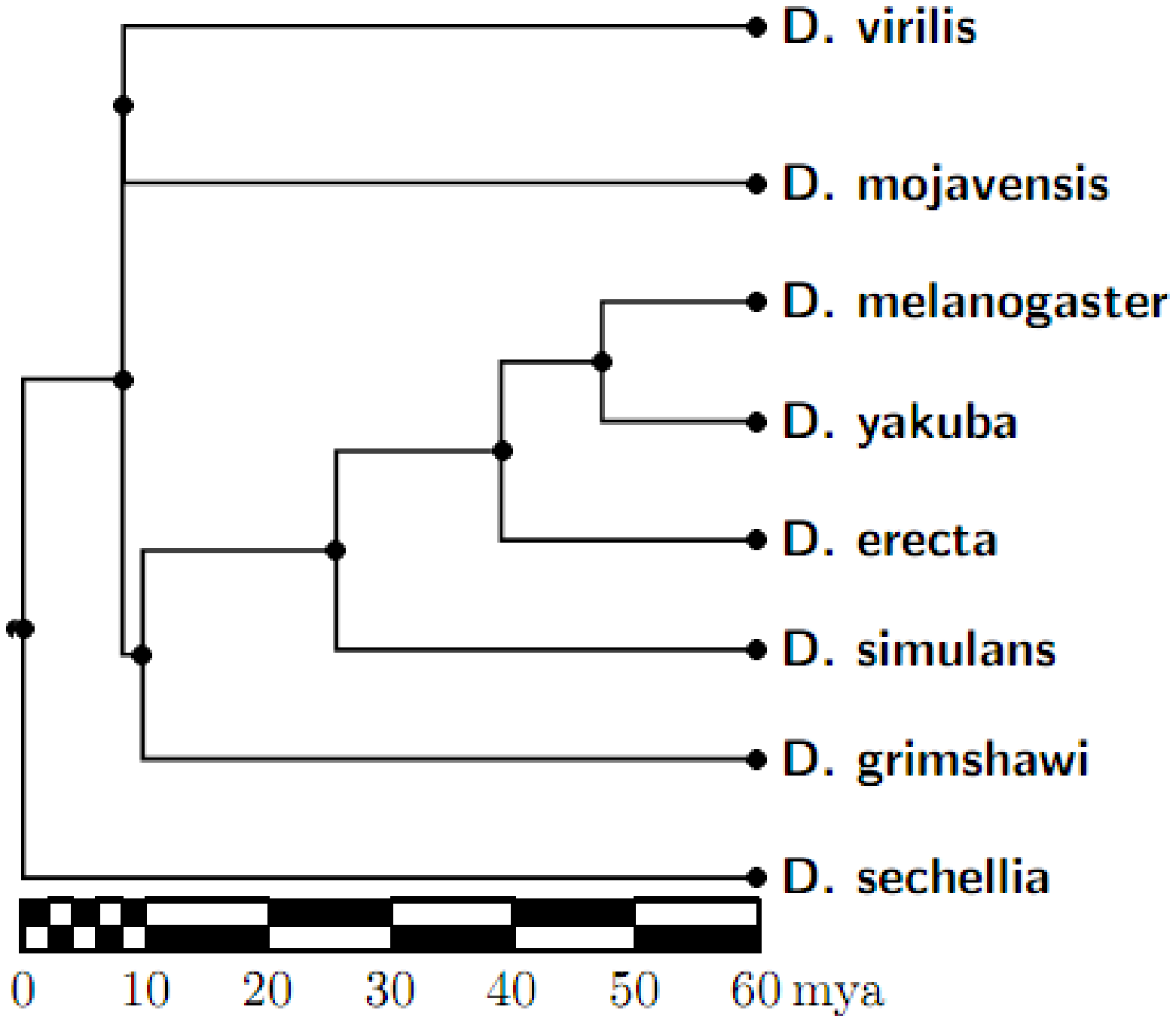}
\begin{center}{Phylogeny for the Muller A element}\end{center}
\vspace{0.1in}
  \caption{Data, distance matrix and resulting phylogeny for the Muller A-element} 
  \label{fig:MullerA} 
\end{figure}

\pagebreak

\begin{figure}[ht] 
  \centering
\begin{tabular}{|c||c|c|c|c|c|c|c|c|}\hline
      & D. vir & D. gri & D. sim   & D. moj. & D. mel   & D.ere   & D.yak.  & D.sec. \\ \hline \hline
D.vir &        & 62:255 &  44:290  &  31:161 &   41:262 &  51:324 &  63:332 & 38:375 \\ \hline
D.gri & 56:258 &        &  43:318  &  38:187 &   52:280 &  62:342 &  50:370 & 45:393 \\ \hline
D.sim & 48:288 & 49:315 &          &  53:205 &    2: 95 &  24:188 &  16:223 &  9:256 \\ \hline
D.moj & 67:143 & 32:190 &  55:204  &         &   49:173 &  51:254 &  48:285 & 47:319 \\ \hline
D.mel & 59:253 & 58:277 &   8: 92  &  49:173 &          &  19:159 & 101:145 &  3:229 \\ \hline
D.ere & 45:327 & 44:351 &  14:193  &  57:251 &   11:163 &         &   9:249 & 32:275 \\ \hline
D.yak & 49:339 & 42:374 &  14:224  &  44:287 &    7:192 &  15:246 &         & 34:286 \\ \hline
D.sec & 52:368 & 49:391 &   7:257  &  41:322 &    3:229 &  38:272 &  46:280 &        \\ \hline
\end{tabular}
\begin{center}{B/C: r:s denotes number of cdr: number of cds}\end{center}
\vspace{0.1in}

  \centering
\begin{tabular}{|c||c|c|c|c|c|c|c|c|}\hline
      &D. vir & D. gri & D. sim   & D. moj. & D. mel   & D.ere   & D.yak.  & D.sec. \\ \hline \hline
D.vir &       & 286    &  312     &  176.5  &   282.5  &  349.5  &  363.5  &  394 \\ \hline
D.gri & 286   &        &  339.5   &  206    &   306    &  373    &  395    &  415.5 \\ \hline
D.sim & 312   & 339.5  &          &  231.5  &   96     &  200    &  331    &  260.5 \\ \hline
D.moj & 176.5 & 206    &  231.5   &         &   197.5  &  279.5  &  309    &  342.5 \\ \hline
D.mel & 282.5 & 306    &   96     &  197.5  &          &  168.5  &  195.5  &  230.5 \\ \hline
D.ere & 349.5 & 373    &  200     &  279.5  &   168.5  &         &  253.5  &  291 \\ \hline
D.yak & 363.5 & 395    &  331     &  309    &   195.5  &  253.5  &         &  303 \\ \hline
D.sec & 394   & 415.5  &  260.5   &  342.5  &   230.5  &  291    &  303    &        \\ \hline
\end{tabular}
\begin{center}{Distance matrix for Muller Element B/C}\end{center}
\vspace{0.1in}

  \centering
  \includegraphics[bb=0 0 488 412,width=2.56in,height=2.16in,keepaspectratio]{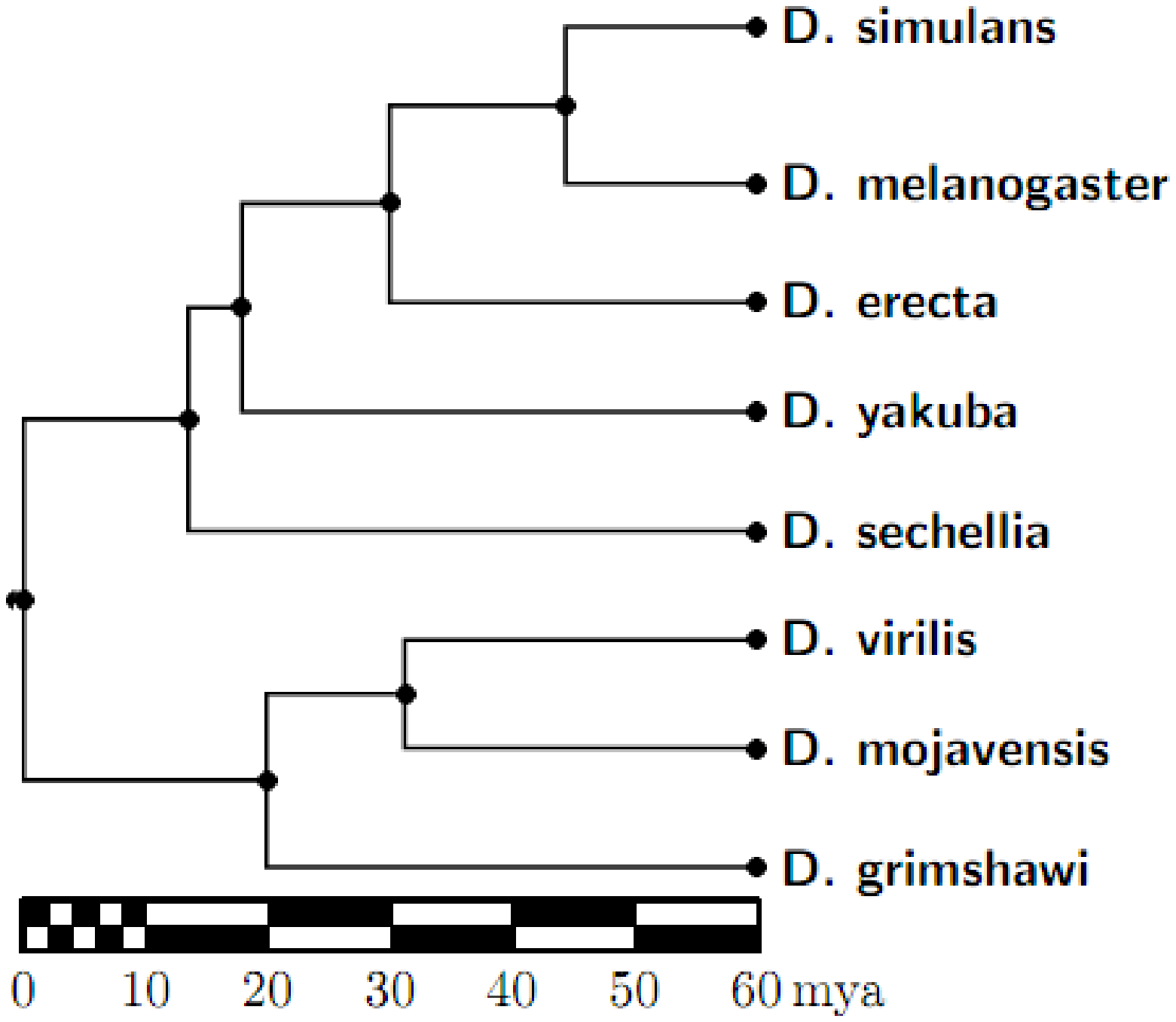}
\begin{center}{Phylogeny for the Muller B/C element}\end{center}
\vspace{0.1in}
  \caption{Data, distance matrix and resulting phylogeny for the Muller B/C-element} 
  \label{fig:MullerBC} 
\end{figure}

\pagebreak

\begin{figure}[ht]
  \centering
\begin{tabular}{|c||c|c|c|c|c|c|c|c|}\hline
      & D. vir & D. gri & D. sim   & D. moj. & D. mel   & D.ere   & D.yak.  & D.sec. \\ \hline \hline
D.vir &        & 21:124 &  60:193  &  27:113 &   69:175 &  58:160 &  53:231 & 60:450 \\ \hline
D.gri & 27:121 &        &  51:210  &  29:154 &   52:187 &  56:174 &  56:244 & 59:460 \\ \hline
D.sim & 68:189 & 59:206 &          &  65:219 &    2: 69 &   5: 56 &  10:129 &  2:390 \\ \hline
D.moj & 23:115 & 23:157 &  59:222  &         &   53:214 &  59:192 &  55:257 & 51:469 \\ \hline
D.mel & 69:175 & 62:182 &   2: 69  &  81:200 &          &   8: 35 &  10:109 &  0:388 \\ \hline
D.ere & 66:156 & 58:173 &   7: 55  &  67:188 &   10: 34 &         &  12: 79 & 90:337 \\ \hline
D.yak & 59:228 & 64:240 &  26:121  &  71:249 &   14:107 &  18: 76 &         & 12:416 \\ \hline
D.sec & 54:453 & 55:462 &   2:390  &  49:470 &    0:388 &   4:380 &  12:416 &        \\ \hline
\end{tabular}
\begin{center}{D: r:s denotes number of cdr: number of cds}\end{center}
\vspace{0.1in}

  \centering
\begin{tabular}{|c||c|c|c|c|c|c|c|c|}\hline
      & D. vir & D. gri & D. sim   & D. moj. & D. mel   & D.ere   & D.yak.  & D.sec. \\ \hline \hline
D.vir &   0    & 134.5  &  223     &  126.5  &   209.5  &  189    &  257.5  & 480    \\ \hline
D.gri & 134.5  &   0    &  235.5   &  168.5  &   213    &  202    &  272    & 489.5  \\ \hline
D.sim & 223    & 235.5  &    0     &  251.5  &    70    &   58.5  &  134    & 391    \\ \hline
D.moj & 126.5  & 168.5  &  251.5   &    0    &   240.5  &  221.5  &  284.5  & 494.5  \\ \hline
D.mel & 209.5  & 213    &   70     &  240.5  &     0    &   39    &  114    & 388    \\ \hline
D.ere & 189    & 202    &   58.5   &  221.5  &    39    &    0    &   85    & 382    \\ \hline
D.yak & 257.5  & 272    &   134    &  284.5 &    114    &   85    &    0    & 422    \\ \hline
D.sec & 480    & 489.5  &   391    &  494.5  &   388    &  382    &  422    &   0    \\ \hline
\end{tabular}
\begin{center}{Distance matrix for Muller Element D}\end{center}
\vspace{0.1in}

  \centering
  \includegraphics[bb=0 0 488 412,width=2.56in,height=2.16in,keepaspectratio]{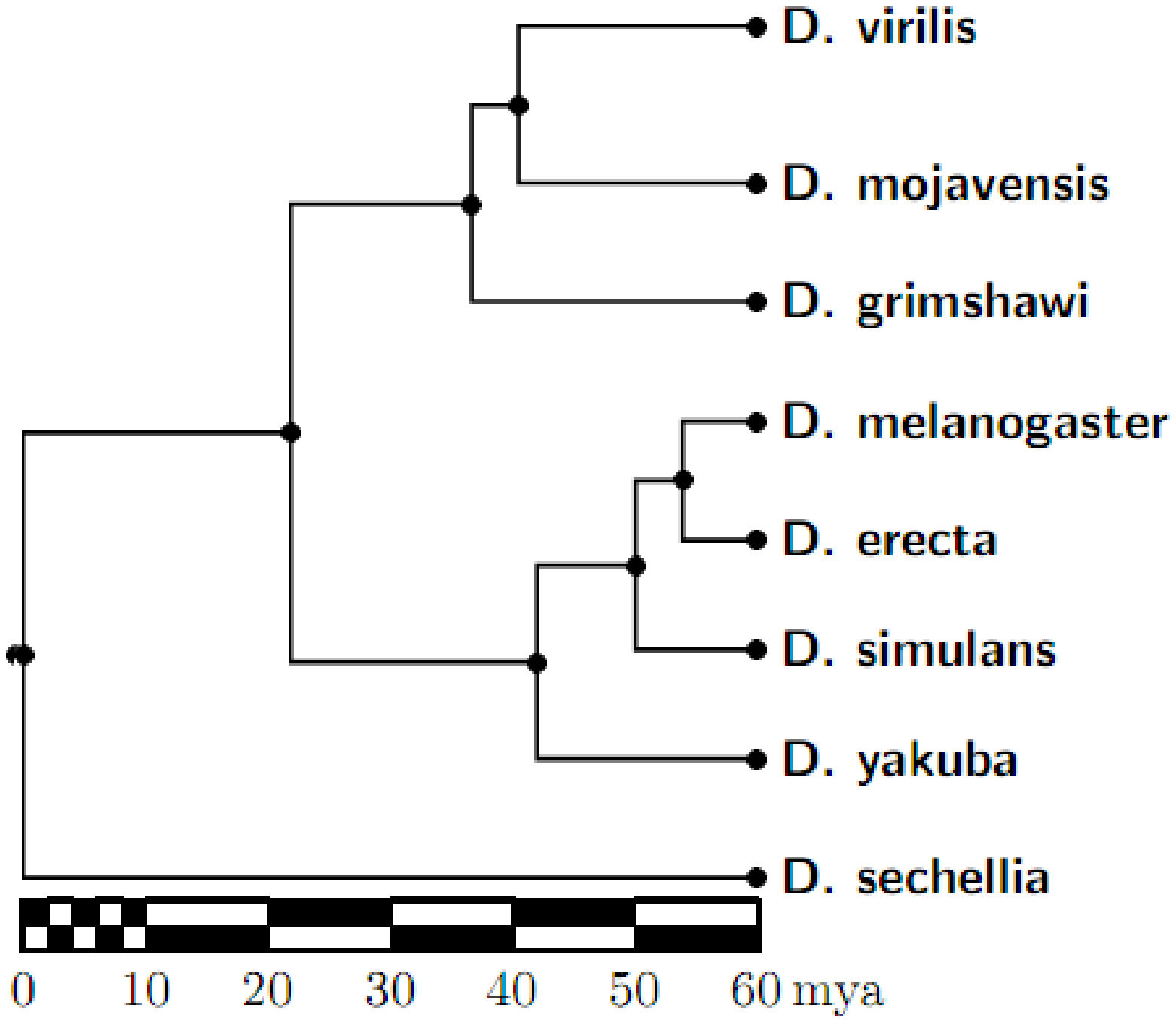}
\begin{center}{Phylogeny for the Muller D element}\end{center}
\vspace{0.1in}
  \caption{Data, distance matrix and resulting phylogeny for the Muller D-element} 
  \label{fig:MullerD} 
\end{figure}

\pagebreak

\begin{figure}[h] 
  \centering
\begin{tabular}{|c||c|c|c|c|c|c|c|c|}\hline
      & D. vir & D. gri & D. sim   & D. moj. & D. mel   & D.ere   & D.yak.  & D.sec. \\ \hline \hline
D.vir &        & 47:634 &  40:451  &  27:340 &   40:432 &  46:551 &  42:436 & 41:598 \\ \hline
D.gri & 47:634 &        &  47:616  &  25:549 &   46:602 &  55:664 &  54:603 & 47:705 \\ \hline
D.sim & 52:445 & 57:611 &          &  45:213 &    8: 71 & 142:241 &  13: 75 & 14:347 \\ \hline
D.moj & 89:309 & 53:535 &  39:216  &         &   39:185 &  43:401 &  31:194 & 45:446 \\ \hline
D.mel & 44:430 & 54:598 &   8: 71  &  39:185 &          & 196:196 &   7: 38 & 21:334 \\ \hline
D.ere & 50:549 & 55:664 &  10:307  &  39:403 &    6:291 &         &  12:291 & 38:428 \\ \hline
D.yak & 54:430 & 62:599 &  19: 72  &  43:188 &   15: 34 &   8:293 &         & 23:334 \\ \hline
D.sec & 51:593 & 53:702 &  14:347  &  39:449 &    5:342 &  38:428 &   9:341 &        \\ \hline
\end{tabular}
\begin{center}{E: r:s denotes number of cdr: number of cds}\end{center}
\vspace{0.1in}

  \centering
\begin{tabular}{|c||c|c|c|c|c|c|c|c|}\hline
      & D. vir & D. gri & D. sim   & D. moj. & D. mel   & D.ere   & D.yak.  & D.sec. \\ \hline \hline
D.vir &        & 657.5  &  471     &  353.5  &   452    &  574    &  457    & 618.5  \\ \hline
D.gri & 657.5  &        &  639.5   &  561.5  &   625    &  691.5  &  630    & 728.5  \\ \hline
D.sim & 471    & 639.5  &          &  235.5  &    75    &  312    &  81.5   & 354    \\ \hline
D.moj & 353.5  & 561.5  &  235.5   &         &   204.5  &  422.5  &  209.5  & 468.5  \\ \hline
D.mel & 452    & 625    &   75     &  204.5  &          &  294    &  41.5   & 344.5  \\ \hline
D.ere & 574    & 691.5  &  312     &  422.5  &    294   &         &  297    & 447    \\ \hline
D.yak & 457    & 630    &  81.5    &  209.5  &    41.5  &  297    &         & 345.5  \\ \hline
D.sec & 618.5  & 728.5  &  354     &  468.5  &    344.5 &  447    &  345.5  &        \\ \hline
\end{tabular}
\begin{center}{Distance matrix for Muller Element E}\end{center}
\vspace{0.1in}

  \centering
  \includegraphics[bb=0 0 488 412,width=2.56in,height=2.16in,keepaspectratio]{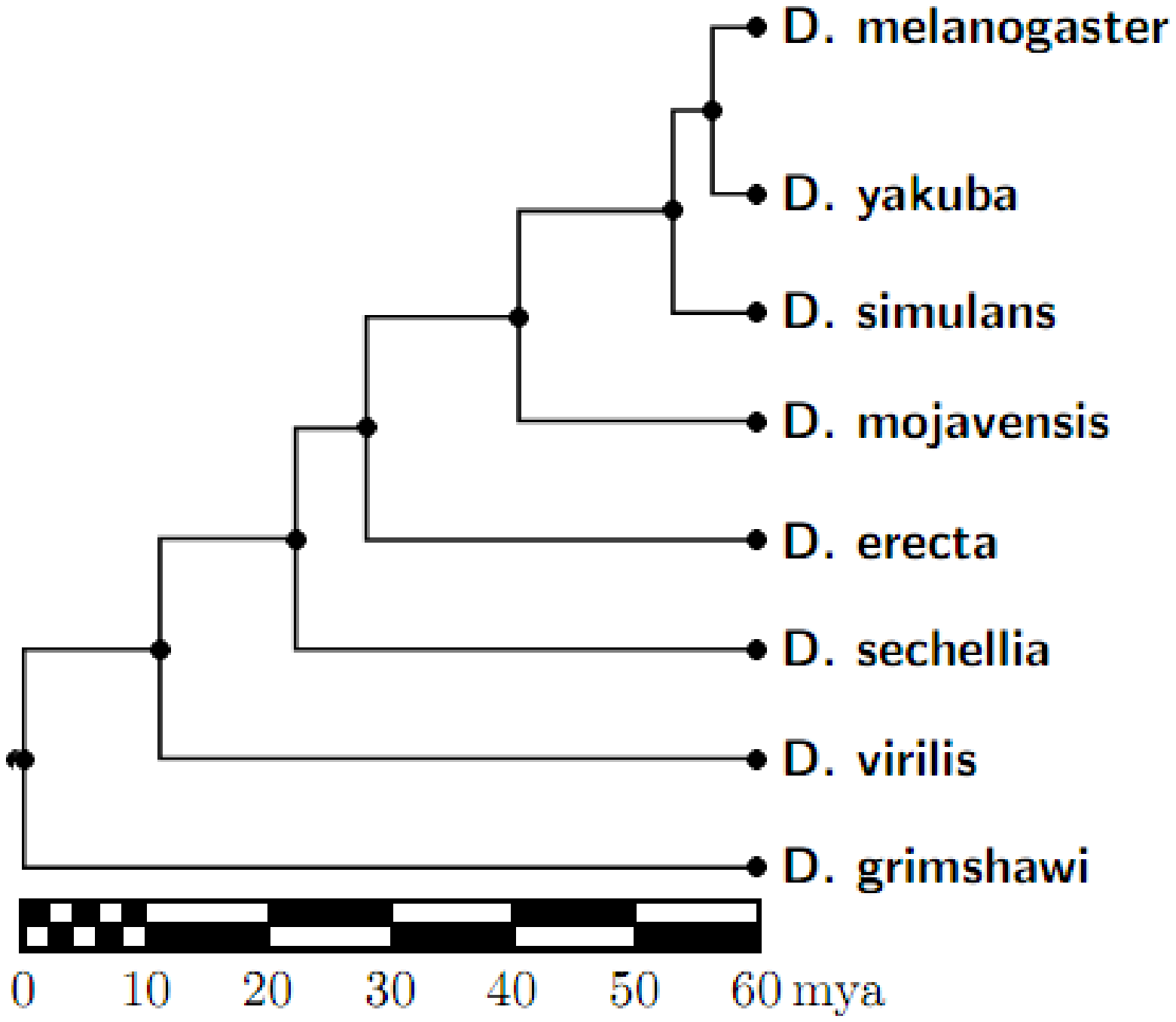}
\vspace{0.1in}
  \caption{Data, distance matrix and resulting phylogeny for the Muller E-element} 
  \label{fig:MullerE} 
\end{figure}

\pagebreak

\begin{figure}[h] 
  \centering
\begin{tabular}{|c||c|c|c|c|c|c|c|c|}\hline
      & D. vir & D. gri & D. sim      & D. moj.     & D. mel      & D.ere       & D.yak.      & D.sec.\\ \hline \hline
D.vir &        & 3:5    &   12:8      &   2:1       &   11:6      &  11:6       &  11:6       &  8:13 \\ \hline
D.gri & 3:5    &        &   10:12     &   3:4       &   11:9      &  11:9       &  11:9       & 10:12 \\ \hline
D.sim & 8:10   & 8:13   &             &  10:9       &    4:5      &   4:5       &   4:5       &  4:13 \\ \hline
D.moj & 2:1    & 3:4    &   6:11      &             &    7:7      &   7:7       &   7:7       &  9:12 \\ \hline
D.mel & 9:7    & 7:11   &   6:4       &   7:7       &             &   0:0       &   0:0       &  0:12 \\ \hline
D.ere & 9:7    & 11:9   &   6:4       &   9:6       &    0:0      &             &   0:0       &  0:12 \\ \hline
D.yak & 9:7    & 7:11   &   6:4       &   7:7       &    0:0      &   0:0       &             &  0:12 \\ \hline
D.sec & 8:13   & 8:13   &   2:14      &   9:12      &    0:12     &   0:12      &   0:12      &       \\ \hline
\end{tabular}
\begin{center}{F: r:s denotes number of cdr: number of cds}\end{center}
\vspace{0.1in}

  \centering
\begin{tabular}{|c||c|c|c|c|c|c|c|c|}\hline
      & D. vir & D. gri & D. sim      & D. moj.     & D. mel      & D.ere       & D.yak.      & D.sec.\\ \hline \hline
D.vir &        & 6.5    &   14      &    2         &   11.5      &  11.5       &  11.5       &  17 \\ \hline
D.gri &  6.5   &        &   17      &    5.5       &   14.5      &  14.5       &  14.5       &  17 \\ \hline
D.sim & 14     & 17     &           &   14         &    7        &   7         &   7         &  15 \\ \hline
D.moj &  2     & 5.5    &   14      &              &    10.5     &  10.5       &  10.5       &  16.5 \\ \hline
D.mel & 11.5   & 14.5   &   7       &   10.5       &             &     0       &     0       &  12 \\ \hline
D.ere & 11.5   & 14.5   &   7       &   10.5       &     0       &             &     0       &  12 \\ \hline
D.yak & 11.5   & 14.5   &   7       &   10.5       &     0       &     0       &             &  12 \\ \hline
D.sec & 17     & 17     &   15      &   16.5      &      12      &     12      &     12      &       \\ \hline
\end{tabular}
\begin{center}{Distance matrix for Muller Element F}\end{center}
\vspace{0.1in}

  \centering
  \includegraphics[bb=0 0 488 412,width=2.56in,height=2.16in,keepaspectratio]{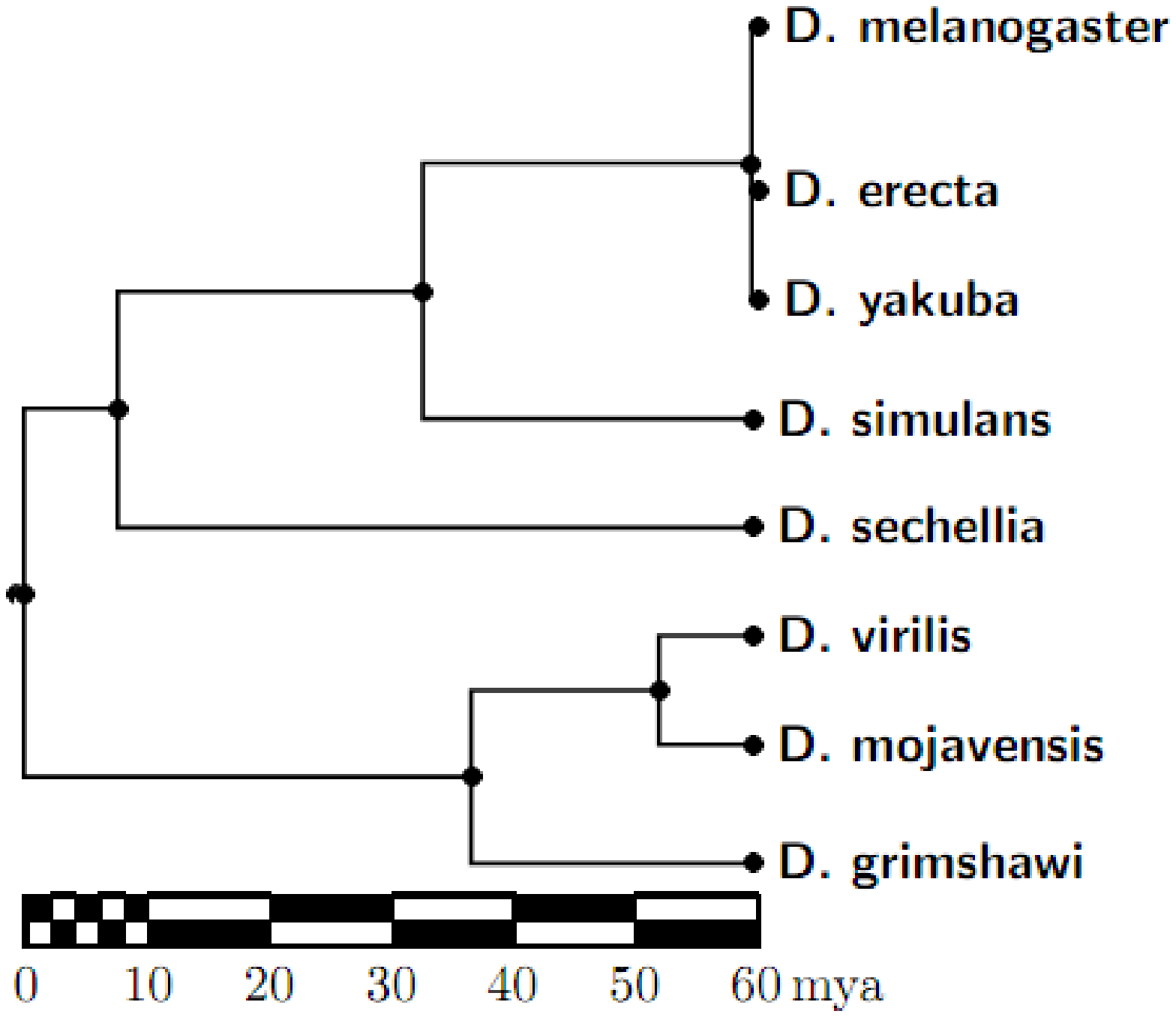}
\vspace{0.1in}
  \caption{Data, distance matrix and resulting phylogeny for the Muller F-element} 
  \label{fig:MullerF} 
\end{figure}

\pagebreak

\begin{figure}[h] 
  \centering
\begin{tabular}{|c||c|c|c|c|c|c|c|c|}\hline
      & D. vir  & D. gri  & D. sim    & D. moj.    & D. mel   & D.ere    & D.yak.     & D.sec.    \\ \hline \hline
D.vir &         & 165:1449&  194:1405 &   118:1053 &  194:1278&  206:1467&  198:1419  &  182:1950 \\ \hline
D.gri & 165:1449&         &  187:1602 &   130:1324 &  197:1459&  229:1593&  211:1617  &  206:2074 \\ \hline
D.sim & 194:1405& 187:1602&           &   208:1106 &   24:508 &  194:801 &   64:722   &   55:1511 \\ \hline
D.moj & 118:1053& 130:1324&  208:1106 &            &  189:986 &  194:1288&  177:1165  &  189:1761 \\ \hline
D.mel & 194:1278& 197:1459&   24:508  &   189:986  &          &  224:561 &  153:385   &   43:1445 \\ \hline
D.ere & 206:1467& 229:1593&  194:801  &   194:1288 &  224:561 &          &   58:801   &  188:1551 \\ \hline
D.yak & 198:1419& 211:1617&   64:722  &   177:1165 &  153:385 &   58:801 &            &   98:1529 \\ \hline
D.sec & 182:1950& 206:2074&  55:1511  &   189:1761 &   43:1445&  188:1551&   98:1529  &           \\ \hline
\end{tabular}
\begin{center}{Whole genome: r:s denotes number of cdr: number of cds}\end{center}
\vspace{0.1in}

  \centering
\begin{tabular}{|c||c|c|c|c|c|c|c|c|}\hline
      & D. vir & D. gri & D. sim      & D. moj.        & D. mel  & D.ere  & D.yak.  & D.sec.\\ \hline \hline
D.vir &    0   & 1231.5 &   1502      &    1112        &   1375  &  1570  &  1518   &  2041 \\ \hline
D.gri & 1231.5 &    0   &  1695.5     &    1389        &   1557.5&  1707.5&  1722.5 &  2177 \\ \hline
D.sim & 1502   & 1695.5 &    0        &    1210        &    520  &   898  &   746   &  1538.5 \\ \hline
D.moj & 1112   & 1389   &  1210       &    0           &   1080.5&  1385  &  1253.5 &  1655.5 \\ \hline
D.mel & 1375   & 1557.5 &   520       &   1080.5       &     0   &   673  &   461.5 &  1463.5 \\ \hline
D.ere & 1570   & 1707.5 &   898       &   1385         &    673  &     0  &   830   &  1645 \\ \hline
D.yak & 1518   & 1722.5 &   746       &   1253.5       &    461.5&    830 &      0  &  1578 \\ \hline
D.sec & 2041   & 2177   &   1538.5    &   1655.5       &   1463.5&   1645 &   1578  &  0    \\ \hline
\end{tabular}
\begin{center}{Distance matrix for whole genome}\end{center}
\vspace{0.1in}

  \centering
  \includegraphics[bb=0 0 488 412,width=2.56in,height=2.16in,keepaspectratio]{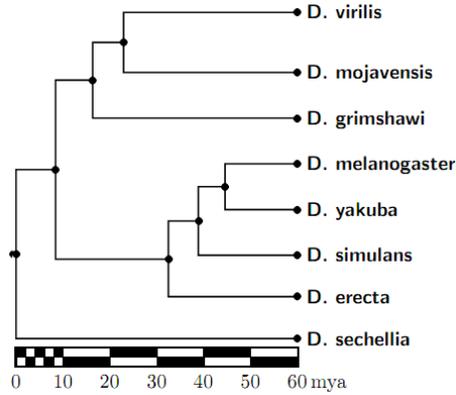}
  \caption{Data, distance matrix and resulting phylogeny for the whole genome} 
  \label{fig:MullerWG} 
\end{figure}

\newpage
\section*{Appendix II: Verification of claimed properties of pointer lists.}

In this appendix we give, for the sake of completeness, the mathematical proofs of the properties of pointer lists stated in Section 3 and used in Section 4\footnote{Some of the results of this section and Section 4 may be deducible from the work presented in Chapters 6 through 9 of \cite{EHPPR}, but we opted for a self-contained presentation of our work.}.

Stipulation (6) of Definition \ref{def: PointerList} will also be called the \emph{exclusion property} below. In our discussion below, the item $x_i$ with minimal absolute value will be called ``the minimal element" of the pointer list, while the item $x_j$ with largest absolute value will be called ``the maximal element" of the pointer list. We shall use the symbol $E$ to denote the entry with smallest absolute value, and $E^{\prime}$ to denote the entry with largest absolute value.

We will also adopt the following terminology for expository ease:
\begin{definition} Two items $x_i$ and $x_j$ in a pointer list are
\begin{enumerate}
  \item{a \emph{pair} if $\vert i-j\vert =1$, and $\min\{i,\, j\}$ is odd.} 
  \item{\emph{mates} if $\vert x_i\vert = \vert x_j\vert$.} 
\end{enumerate}
\end{definition}

\begin{lemma}\label{parityandvalue} If $x_i$ and $x_j$ are items in a pointer list and $i$ and $j$ are distinct but have the same parity, then $x_i \neq x_j$.
\end{lemma}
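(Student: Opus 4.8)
The plan is to argue by contradiction: suppose $i \neq j$ have the same parity yet $x_i = x_j$, and derive a contradiction from the six defining conditions of Definition \ref{def: PointerList}. (One could finish instantly by quoting Lemma \ref{appII:valueandparity} — equal entries have equal absolute value and equal sign, which by that lemma forces distinct parity — but since Lemma \ref{parityandvalue} is the natural stepping stone toward \ref{appII:valueandparity}, I would give a self-contained argument resting only on the definition.)

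First I would locate $x_i$ in the structure. Since $x_i = x_j$ with $i \neq j$, the absolute value $\vert x_i\vert$ is attained at two indices, so by the uniqueness clauses (2) and (3) it is neither $\mu$ nor $\lambda$; thus $\mu < \vert x_i\vert < \lambda$ and, by clause (4), $j$ is the unique mate of $i$. Let $i^{\ast}$ and $j^{\ast}$ denote the pair-partners of $i$ and $j$ (the neighbouring index completing the odd/even block of clause (5)). I would first check that $i, j, i^{\ast}, j^{\ast}$ are four distinct indices: if $i^{\ast} = j$ then $\{i,j\}$ is a pair and hence has opposite parities, contradicting our assumption; and $i^{\ast} = j^{\ast}$ would make one index the partner of both $i$ and $j$, forcing $i = j$.

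Next I would show $x_{i^{\ast}} = x_{j^{\ast}}$. Their signs agree because clause (5) makes each partner share the sign of $x_i = x_j$. Their absolute values agree because of the exclusion property (clause (6)): the partner's absolute value is the unique value adjacent to $\vert x_i\vert$ — either the next larger or the next smaller absolute value present in the list — and which side it lies on is dictated jointly by the sign of $x_i$ and by whether $i$ is the odd or the even member of its block (clause (5)). Since $i$ and $j$ have the same parity and $x_i = x_j$ has a single sign, both partners are forced to the same side, so $\vert x_{i^{\ast}}\vert = \vert x_{j^{\ast}}\vert$ and hence $x_{i^{\ast}} = x_{j^{\ast}}$. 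Consequently the two distinct blocks $\{i, i^{\ast}\}$ and $\{j, j^{\ast}\}$ carry the \emph{same} pair of values.

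The crux — and the step I expect to be the main obstacle — is to show that two distinct blocks cannot carry the same pair of values. I would prove the following structural fact: writing the distinct absolute values in increasing order as $\mu = a_0 < a_1 < \cdots < a_{k+1} = \lambda$ (each middle value occurring exactly twice by clause (4), and the extremes once), the exclusion property forces every block to consist of two values that are adjacent in this list, and then a greedy argument starting from $\mu$ forces the blocks to tile the list as the single chain $\{a_0, a_1\}, \{a_1, a_2\}, \dots, \{a_k, a_{k+1}\}$, with no block repeated and no block of the form $\{a_s, a_s\}$. (The forcing is: $\mu$ has one copy, so its partner must be $a_1$; this consumes one copy of $a_1$, whose remaining copy must then partner $a_2$; and so on up to $\lambda$.) This contradicts the two equal-content blocks produced above, completing the proof. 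The delicate points to get right are the multiplicity bookkeeping that rules out self-paired blocks, and the sign/parity case analysis of the previous paragraph that guarantees both partners land on the same side.
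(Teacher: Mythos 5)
Your proof is correct, but it finishes by a genuinely different argument than the paper's. The first half of your proposal --- using the uniqueness clauses (2)--(4) to place $\vert x_i\vert$ strictly between $\mu$ and $\lambda$, and then the sign/position analysis of clause (5) together with the exclusion property (6) to conclude that the pair-partners satisfy $x_{i^{\ast}} = x_{j^{\ast}}$ --- is essentially the paper's opening move: in each parity case the paper derives $x_{i+1}=x_{j+1}$ (resp.\ $x_{i-1}=x_{j-1}$), producing the configuration $\lbrack \cdots, A, B, \cdots, A, B, \cdots\rbrack$ with no entry of absolute value strictly between $\vert A\vert$ and $\vert B\vert$. Where you diverge is the crux. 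The paper uses a parity count: the entries of absolute value below $\vert A\vert$ are odd in number (the unique minimum plus mates occurring two at a time), likewise the entries above $\vert B\vert$; since all entries outside the two $A,B$ blocks are partitioned into consecutive pairs, some pair must mix one ``low'' and one ``high'' entry, violating clause (6). You instead prove a stronger structural fact: a greedy induction starting from the single copy of $\mu$ forces the consecutive blocks, read as pairs of absolute values, to be exactly the chain $\{a_0,a_1\},\{a_1,a_2\},\dots,\{a_k,a_{k+1}\}$, each adjacent pair occurring exactly once, which contradicts the existence of two distinct blocks carrying the same value pair. Both are sound and elementary; the paper's count is shorter and purely local, while your chain lemma is a reusable description of the whole block structure of a pointer list (it also rules out self-paired blocks and avoids the paper's odd/even case split). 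One point to make explicit in your step 3: you need $\vert x_{i^{\ast}}\vert \neq \vert x_i\vert$ before the partner can be called ``the next value up or down''; this does follow from clause (4), since the unique mate of $i$ is $j$ and your step 2 shows $i^{\ast}\neq j$, so the omission is cosmetic rather than a real gap.
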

\begin{proof}
Let $i$ and $j$ be distinct elements of $\{1,\,2,\,\cdots,, n\}$, but have the same parity. Towards deriving a contradiction, assume that contrary to the claim $x_i = x_j$.

{\flushleft\underline{Both i and j are odd.}} 
Then by stipulations (4) and (5) we find that $x_i < x_{i+1}$ and $x_j < x_{j+1}$, and all these items are of the same sign. But then stipulation (6) implies that $x_{i+1} = x_{j+1}$. Let $A$ be $x_i (= x_j)$ and let $B$ be $x_{i+1}(=x_{j+1})$. Then our pointer list is of the form 
\[
 \lbrack\cdots,\, A,\, B,\,\cdots,\,A,\, B,\,\cdots\rbrack
\]
and no pointer has absolute value strictly between the absolute values of $A$ or $B$. Since there is a unique pointer of minimal absolute value and there are two pointers of value $A$, $A$ is not the pointer of minimal absolute value. It follows that an odd number of pointers have their absolute values below that of $A$ (the minimal element and then pairs of mates). Similarly the unique pointer of maximal absolute value is not $B$, and there is an odd number of pointers with absolute values exceeding that of $B$. Since these pointers also all occur in pairs, for some pair $x_k,\, x_{k+1}$ one pointer has absolute value smaller than that of $A$, while the other has absolute value larger than that of $B$, constituting a violation of stipulation (6).

{\flushleft\underline{Both i and j are even.}} 
Then by stipulations (4) and (5) we find that $x_{i-1} < x_i$ and $x_{j-1} < x_j$, and all these items are of the same sign. But then stipulation (6) implies that $x_{i-1} = x_{j-1}$. Let $A$ be $x_{i-1} (= x_{j-1})$ and let $B$ be $x_{i}(=x_{j})$. As before our pointer list is of the form 
\[
 \lbrack\cdots,\, A,\, B,\,\cdots,\,A,\, B,\,\cdots\rbrack
\]
and no pointer has absolute value strictly between the absolute values of $A$ or $B$. By the same argument as in the case when $i$ and $j$ were odd, we now derive a violation of stipulation (6).
\end{proof}

\begin{lemma}\label{valueandparity} For all distinct $i$ and $j$ for which $\vert x_i\vert = \vert x_j\vert$, the following are equivalent:
\begin{enumerate}
  \item{$x_i$ and $x_j$ have the same sign.}
  \item{$i$ and $j$ have distinct parity.}
\end{enumerate}
\end{lemma}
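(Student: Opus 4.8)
The plan is to prove the two implications separately, obtaining one of them for free from Lemma~\ref{parityandvalue} and reducing the whole statement to a single contradiction argument for the other. First note that a value shared by two distinct indices must be a \emph{middle} value: if $\vert x_i\vert=\vert x_j\vert$ were $\mu$ or $\lambda$, stipulations (2) or (3) would force a unique such index, contradicting $i\neq j$. Hence $\mu<\vert x_i\vert<\lambda$, so $x_i,x_j\neq 0$ and exactly one of ``$x_i$ and $x_j$ have the same sign'' or ``$x_i=-x_j$'' holds, and exactly one of ``same parity'' or ``distinct parity'' holds. For $(1)\Rightarrow(2)$ I would argue by contraposition: if $i$ and $j$ had the same parity, then Lemma~\ref{parityandvalue} gives $x_i\neq x_j$, and with $\vert x_i\vert=\vert x_j\vert$ this forces $x_i=-x_j$, i.e.\ opposite signs. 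So same sign implies distinct parity, disposing of $(1)\Rightarrow(2)$ immediately. The real content is $(2)\Rightarrow(1)$.

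For $(2)\Rightarrow(1)$ I would again use contraposition, showing that $x_i=-x_j$ forces $i$ and $j$ to have the same parity; equivalently I assume distinct parity together with $x_i=-x_j$ and derive a contradiction. Since the statement is symmetric in $i$ and $j$, I may take the odd-indexed entry to be $x_i$ and the even-indexed entry to be $x_j$, and write $a=x_i$, so $x_j=-a$. Using stipulation (5) I pass to the pair-partners $x_{i+1}$ (partner of the odd-indexed $x_i$) and $x_{j-1}$ (partner of the even-indexed $x_j$): each partner has the same sign as its mate, and by stipulation (5) together with the exclusion property (6) its absolute value is forced to be the value \emph{immediately adjacent} to $\vert a\vert$ among those occurring in the list (distinct from $\vert a\vert$, since a third occurrence of $\vert a\vert$ would violate (4)). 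When $a>0$ both partners acquire the immediate \emph{successor} value $b$, giving $x_{i+1}=b$ and $x_{j-1}=-b$; when $a<0$ both acquire the immediate \emph{predecessor}. In either case the two occurrences of $\vert a\vert$ are paired off exactly with opposite-signed occurrences of one neighbouring value.

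The contradiction then comes from a parity count, and the crux I expect to be the pairing-partition claim. In the case $a>0$ I claim every entry of absolute value strictly below $\vert a\vert$ is paired with another such entry: it cannot be paired with an occurrence of $\vert a\vert$ (both of those are already paired with occurrences of $b$), and it cannot be paired with an entry of absolute value exceeding $\vert a\vert$, since then $\vert a\vert$ would lie strictly between the two absolute values of a pair, violating (6). Hence the entries of absolute value below $\vert a\vert$ split into complete pairs, so their number is \emph{even}; but by stipulations (2) and (4) this number equals $1+2k$ (the unique minimum $\mu$ plus two occurrences of each middle value below $\vert a\vert$), which is \emph{odd} --- a contradiction. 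The case $a<0$ is symmetric, using the entries of absolute value above $\vert a\vert$ and the uniqueness of the maximum $\lambda$ from stipulation (3). The main obstacle is thus this pairing-partition step, where the exclusion property (6) is doing all the work; once it is in hand the odd/even count closes the argument. (An alternative, more uniform route records each pair as an edge joining the two absolute values it contains; the exclusion property together with the above parity counts forces this graph to be a single path through $\mu=v_0,v_1,\ldots,v_t=\lambda$, after which a short case check on the signs of two consecutive edges yields ``same sign $\iff$ distinct parity'' simultaneously for every middle value.)
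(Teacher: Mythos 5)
Your proof is correct and takes essentially the same route as the paper's: (1)$\Rightarrow$(2) via the contrapositive of Lemma \ref{parityandvalue}, and (2)$\Rightarrow$(1) by contradiction, using stipulations (4), (5) and (6) to force the pair-partners of the two opposite-signed occurrences of the shared value to be the $\pm$ occurrences of the single adjacent value, followed by a parity count of the entries of smaller (respectively larger) absolute value. The only cosmetic difference is in the final step: the paper concludes that some consecutive pair must straddle the gap, violating (6), whereas you conclude that since no pair can straddle, the odd-sized set of entries below $\vert a\vert$ would have to split into complete pairs --- the same counting argument viewed from the other side.
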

\begin{proof}
{\flushleft{(1) implies (2):}} Assuming $x_i=x_j$, the contrapositive of the implication in Lemma \ref{parityandvalue} gives that $i$ and $j$ have distinct parity.\\
{\flushleft{(2) implies (1):}} We are now assuming that $\vert x_i\vert = \vert x_j\vert$, and that $i$ and $j$ have distinct parity. Suppose that, contrary to (1), $x_i$ and $x_j$ are of opposite sign. We may assume, without loss of generality, that $i$ is odd (and thus $j$ is even). By stipulations (4) and (5) we have that $x_i < x_{i+1}$ and $x_{j-1}<x_j$.
{\flushleft{Case 1: $x_i$ is positive.}} Then we have $\vert x_i\vert <\vert x_{i+1}\vert$, and $\vert x_j\vert < \vert x_{j-1}\vert$. As $\vert x_i\vert = \vert x_j\vert$, stipulation (6) implies that $\vert x_{i+1}\vert = \vert x_{j-1}\vert$. Letting $A$ denote $\vert x_i\vert (=\vert x_j\vert)$, and letting $B$ denote $\vert x_{i+1}\vert (=\vert x_{j-1}\vert)$, we find that as in the proof of Lemma \ref{parityandvalue} we have a pointer list of the form
\[
 \lbrack\cdots,\, A,\, B,\,\cdots,\,-B,\, -A,\,\cdots\rbrack
\]
where no pointer has absolute value strictly between $A$ and $B$. As before an odd number of pointers have absolute value less than $A$, and an odd number have absolute value larger than $B$, and some pointer pair includes one from each of these two categories, constituting a violation of stipulation (6).
{\flushleft{Case 2: $x_i$ is negative.}} Similar considerations show a violation of stipulation (6).
\end{proof}

\begin{lemma}\label{edgeslemma}
There are no pointer lists of the form $\lbrack B,\, C,\, \cdots,\, C,\,B\rbrack$.
\end{lemma}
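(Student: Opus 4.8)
The plan is to argue by contradiction, using the monotonicity built into stipulation (5) of Definition \ref{def: PointerList} to collapse $B$ and $C$ to a single value, and then to invoke Lemma \ref{parityandvalue} to forbid the resulting repetition.

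I would begin by supposing that some pointer list $P = [x_1,\ldots,x_m]$ has the stated form, so that $x_1 = x_m = B$ and $x_2 = x_{m-1} = C$. Because the four positions $1,\,2,\,m-1,\,m$ are displayed as distinct, $m \ge 4$; and by stipulation (1) the length $m$ is even, so both $1$ and $m-1$ are odd indices.

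The key step I would carry out is to apply stipulation (5) at these two odd indices. At index $1$ it gives $x_1 \le x_2$, i.e. $B \le C$; at index $m-1$ it gives $x_{m-1} \le x_m$, i.e. $C \le B$. These two inequalities force $B = C$, whence $x_1 = x_{m-1}$. Since $1$ and $m-1$ are distinct odd indices (distinct because $m \ge 4$), they have the same parity, and Lemma \ref{parityandvalue} then yields $x_1 \ne x_{m-1}$, a contradiction.

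I do not anticipate a real obstacle; the only point that needs care is the parity bookkeeping, namely recognizing that $m-1$ is odd precisely because $m$ is even, so that stipulation (5) governs the trailing pair $(x_{m-1},\, x_m)$ in exactly the same way it governs the leading pair $(x_1,\, x_2)$. Once both inequalities are secured the argument closes at once. The degenerate value $m=2$ can be dismissed separately: there the stated form forces $x_1 = x_2$, which already violates the uniqueness of the minimal element required by stipulation (2).
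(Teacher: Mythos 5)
Your proof is correct and takes essentially the same route as the paper's: both arguments apply stipulation (5) at the two odd indices $1$ and $m-1$ to obtain $B \le C$ and $C \le B$, hence $B = C$. The only difference is the one-line finish — the paper notes that four equal entries violate the mate-uniqueness of stipulation (4), while you invoke Lemma \ref{parityandvalue} on the two equal entries in odd positions — and these are interchangeable, so the arguments agree in substance.
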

\begin{proof}
Note that by stipulation (1) of the pointer list definition the parity of the leftmost instance of $B$ is \emph{odd}, while the parity of the rightmost instance of $B$ is \emph{even}. The reverse applies to the two instances of $C$. By stipulation (5) of the pointer list definition we would have $B\le C$ and $C\le B$, whence there are four copies of $B$ in the pointer list, violating stipulation (4) for pointer lists.  
\end{proof}

\begin{lemma}\label{CentralLemma}
If $\lbrack x_1,\, x_2,\,\cdots,x_{m-1},\,x_m\rbrack$ is a pointer list of length larger than $4$, then at least one of the following three statements is false:
\begin{enumerate}
\item[(a)] $(\forall i)( x_i \neq x_{i+1})$
\item[(b)] $(\forall i)(\forall j)(\mbox{If } \vert x_i\vert = \vert x_j\vert, \mbox{ then } x_i = x_j)$
\item[(c)] $(\forall i)(\forall j)(\forall k)(\forall \ell)(\mbox{If }i \neq k,\, j\neq \ell,\, i < j \mbox{ and }x_i = x_k\mbox{ and }x_j = x_{\ell}, \mbox{ then either }i < j < \ell < k \mbox{ or }i < k < j < \ell)$
\end{enumerate}
\end{lemma}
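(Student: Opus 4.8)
The plan is to argue by contradiction: suppose $P=[x_1,\dots,x_m]$ is a pointer list with $m>4$ for which all three of (a), (b) and (c) hold, and derive a contradiction. Under these hypotheses the structure is rigid. By stipulations (2)--(4) of Definition \ref{def: PointerList}, the entries $E$ and $E^{\prime}$ (of absolute values $\mu$ and $\lambda$) each occur once, while every other absolute value occurs exactly twice; call such an entry \emph{intermediate}. Assumption (b) says mates are in fact equal, so the intermediate entries split into equal-valued \emph{mate pairs}, and by Lemma \ref{valueandparity} the two positions of a mate pair have distinct parity, i.e.\ their index-difference is odd. Assumption (c) says any two mate pairs are \emph{non-crossing} (nested or disjoint). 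Since $m>4$ there is at least one intermediate value, hence at least one mate pair.

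First I would select an \emph{innermost} mate pair: positions $i<j$ with $x_i=x_j$ and $j-i$ minimal among all mate pairs. Assumption (a) rules out $j=i+1$, and distinct parity forces $j-i$ to be odd, so $j-i\ge 3$. Next I would use non-crossing together with minimality to show that no intermediate entry lies strictly between positions $i$ and $j$: if $x_p$ with $i<p<j$ were intermediate, its mate $x_q$ would, by (c) applied to the pairs $\{i,j\}$ and $\{p,q\}$, be forced to satisfy $i<q<j$ as well, producing a strictly shorter mate pair and contradicting minimality of $j-i$. Hence the entries strictly between $i$ and $j$ can only be $E$ and $E^{\prime}$, of which there are at most two; combined with $j-i\ge 3$ odd this forces $j-i=3$, with $\{x_{i+1},x_{i+2}\}=\{E,E^{\prime}\}$.

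The final step is a short parity case analysis on $i$, invoking the exclusion property (stipulation (6)), which forbids any entry whose absolute value lies strictly between those of a \emph{pair} $(x_{2t-1},x_{2t})$. If $i$ is even, then $(x_{i+1},x_{i+2})$ is a pair equal to $\{E,E^{\prime}\}$, so nothing may have absolute value strictly between $\mu$ and $\lambda$; this contradicts $\mu<\vert x_i\vert<\lambda$. If $i$ is odd, then $(x_i,x_{i+1})$ and $(x_{i+2},x_{i+3})$ are pairs; since $\{\vert x_{i+1}\vert,\vert x_{i+2}\vert\}=\{\mu,\lambda\}$, applying exclusion to both pairs forces the only absolute values occurring in $P$ to be $\mu$, $\vert x_i\vert$ and $\lambda$, giving $m=1+2+1=4$ and contradicting $m>4$. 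Either branch yields a contradiction, completing the proof.

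I expect the main obstacle to be the middle step: making rigorous, from condition (c) alone, that no mate pair can ``poke out'' of an innermost one. One must handle the quantifier form of (c) with care, since its two pairs are named $\{i,k\}$ and $\{j,\ell\}$ without presupposing which endpoint is the left one; I would check that each way of labelling a hypothetical escaping mate $\{p,q\}$ (whether $q<i$ or $q>j$) produces exactly the forbidden ordering $ABAB$ ruled out by (c). Once the ``nothing intermediate lies strictly between'' claim is secured, the bound $j-i=3$ and the parity dichotomy are routine, and the exclusion property closes each branch cleanly.
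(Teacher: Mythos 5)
Your proof is correct, and it takes a genuinely different --- and substantially shorter --- route than the paper's. Both arguments begin identically: by (b) mates are equal, hence by Lemma \ref{valueandparity} they occupy positions of opposite parity, by (a) they are non-adjacent, and by (c) they are non-crossing. From there the paper runs an exhaustive taxonomy of global configurations: its Sublemma A excludes three pairwise disjoint mate pairs, Sublemmas B and C pin down where the extremal entries $E$, $E^{\prime}$ and boundary mates may sit, and Sublemmas D and E eliminate the two remaining shapes (one nest, two nests) by parity contradictions, also using Lemma \ref{edgeslemma}. You instead localize: an innermost mate pair $\{i,j\}$ can contain no intermediate entry (your ABAB check against (c) is exactly right, and both escape directions $q<i$ and $q>j$ do produce the forbidden crossing), so its interior is contained in $\{E,E^{\prime}\}$, forcing $j-i=3$; then stipulation (6), applied to the odd-indexed pairs meeting positions $i,\dots,i+3$, closes both parity cases --- when $i$ is even because no entry may have absolute value strictly between $\mu$ and $\lambda$, contradicting $\mu<\vert x_i\vert<\lambda$, and when $i$ is odd because every absolute value is forced into $\{\mu,\vert x_i\vert,\lambda\}$, giving $m=4$. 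Notably, the exclusion property is never invoked directly in the paper's proof of this lemma (it enters only through the earlier parity lemmas), whereas it is the engine of your endgame. One small point to make explicit in a final write-up: the mate $x_q$ of an interior intermediate entry $x_p$ satisfies $q\notin\{i,j\}$, since stipulation (4) together with the fact that $\vert x_i\vert$ already occurs at both $i$ and $j$ forces $\vert x_p\vert\neq\vert x_i\vert$; this is what makes your trichotomy $q<i$, $i<q<j$, $q>j$ exhaustive, after which the minimality contradiction is airtight. The trade-off: your argument is leaner and self-contained, while the paper's longer sublemma structure records intermediate facts about pointer-list configurations that have some independent descriptive value.
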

\begin{proof}
Assume that contrary to the claim of the lemma, there exists a pointer list of length larger than 4 which also satisfies properties (a), (b) and (c). 

By (b) any two pointer entries that are mates are of the same sign.  Thus by Lemma \ref{valueandparity} the positions in which mates occur are of opposite parity. By (a), no mates form a pair. By (c), the only relative configurations possible between two sets of mates are 
\begin{equation}\label{disjoint}
 \lbrack \cdots A \cdots A \cdots B \cdots B \cdots \rbrack
\end{equation}
and 
\begin{equation}\label{nested}
 \lbrack \cdots A \cdots B \cdots B \cdots A \cdots \rbrack
\end{equation}

{\flushleft{\bf Sublemma A:}} Configurations of the form $\lbrack \cdots A, \cdots A, \cdots B,\cdots B, \cdots C, \cdots C,\cdots \rbrack
$ are impossible: 
{\flushleft{\bf Proof of Sublemma A: }} 
For by (a) there are $x_A$, $x_B$ and $x_C$ with $x_A$ between the two copies of $A$, $x_B$ between the two copies of $B$ and $x_C$ between the two copies of $C$. We may assume that we have selected the $A$, $B$ and $C$ for which the number of items between consecutive copies of the same symbol is minimal in each case. Thus, there is no $D$ such that both copies of $D$ are between the two copies of $A$, or of $B$ or of $C$. But at least one of $x_A$, $x_B$ or $x_C$ differs from the minimal and from the maximal element of the pointer list, and thus has a partner symbol (a mate) (of the same sign, by (b)) located in the pointer list. Assume it is $x_A$ (the argument is the same for the other cases): Then the other copy of $x_A$ does not occur between the two copies of $A$. But then the two copies of $A$ and the two copies of $x_A$ constitute a violation of (c). This completes the proof of Sublemma A.$\Box$

It follows that if we have a pointer list satisfying (a), (b) and (c), then for all distinct triples $A$, $B$ and $C$ that are not the minimal or maximal elements of the pointer list we have configurations of only the following two general forms:
\begin{equation}\label{onenest}
  \lbrack \cdots A, \cdots B, \cdots C, \cdots C, \cdots B, \cdots A, \cdots \rbrack \mbox{ or}
\end{equation}
\begin{equation}\label{twonests}
  \lbrack \cdots A, \cdots C,\cdots C,\cdots A, \cdots B,\cdots,\, B,\, \cdots \rbrack,
\end{equation}

{\flushleft{\bf Sublemma B:}} No pointer list of length larger than 4 is of any of the forms\\
(i) $\lbrack x,\,\cdots,\, y\rbrack$, \\
(ii) $\lbrack x,\, y,\,\cdots\, \rbrack$,\\
(iii) $\lbrack \cdots,\, x,\, y \rbrack$,\\
where $\{\vert x\vert,\, \vert y\vert\} = \{E,\, E^{\prime}\}$.\\
{\flushleft{\bf Proof of Sublemma B:}}
For suppose some pointer list is of one of these forms. Choose a pair of mates, say of value $A$, with the fewest possible pointers between them. Thus, we have a pointer list of one of these three forms, which contains a pattern $\lbrack\cdots,\, A,\,\cdots,\, A,\,\cdots\rbrack$ and the number of pointers between the two copies of $A$ is as small as possible. Suppose $B$ is a pointer appearing between these two $A$'s. Since $B$ is not $E$ or $E^{\prime}$ we have by property (c) two copies of $B$ appearing between these two $A$'s, contradicting the minimality condition on the number of pointers between two mates. This concludes the proof of Sublemma B. 

{\flushleft{\bf Sublemma C:}} If there is a pointer list of form $\lbrack A,\, \cdots,\, B\rbrack$ where $A$ and $B$ each has mates, then it is of the form $\lbrack A,\,\cdots,\, A,\,B,\,\cdots,\,B\rbrack$.\\
{\flushleft{\bf Proof of Sublemma C:}}
The mate of the initial $A$ must be in a position $i$ which is even, and the mate of the terminal $B$ must be in a position $j$ which is odd. By property (c) we have $i<j$. Because $i$ is even and $j$ is odd, if $j$ is not $i+1$, then there are a positive even number of pointers between the second copy of $A$, and the first copy of $B$. These pointers cannot have absolute value $E$ or $E^{\prime}$, for otherwise there would be between the two copies of $A$, or else between the two copies of $B$, a configuration of the form $\cdots,\,C,\,C,\cdots$, which is forbidden by (a). But then there is some other pointer $C$ between the $A$ in position $i$ and the $B$ in position $j$. By (c) we must have the mate of $C$ also between the $A$ in position $i$ and the $B$ in position $j$, meaning the pointer list is of the form
$\lbrack A,\,\cdots,\,A,\,\cdots,\, C,\,\cdots,\, C,\,\cdots,\, B,\,\cdots,\, B\rbrack$,
and this contradicts Sublemma A. This concludes the proof of Sublemma C.

{\flushleft{\bf Sublemma D}} Configurations as in (\ref{onenest}) are impossible. \\

{\flushleft{\bf Proof of Sublemma D: }} By Lemma \ref{edgeslemma} only the following are possibilities for (\ref{onenest}):
\[
\mbox{(a) }  \lbrack E,\, B,\, C,\,\cdots,\, A,\, E^{\prime},\, A,\, \cdots,\, C,\, B\rbrack
\]
\[
\mbox{(b) }  \lbrack B,\, C, \,\cdots,\, A,\, E^{\prime},\, A,\, \cdots,\, C,\, B,\, E\rbrack
\]
\[
\mbox{(c) }  \lbrack B,\, E,\, C,\,\cdots,\, A,\, E^{\prime},\, A,\, \cdots,\, C,\, B\rbrack
\]
\[
\mbox{(d) }  \lbrack B,\, C,\,\cdots,\, A,\, E^{\prime},\, A,\, \cdots,\,D,\, C,\, E,\, B\rbrack
\]
But then the parity of the positions of the two copies of $B$ in (a) and in (b) are the same, while the parity of the positions of $C$ in (c) and (d) are the same. This contradicts Lemma \ref{valueandparity}. This completes the proof of Sublemma D. $\Box$

{\flushleft{\bf Sublemma E}} Configurations as in (\ref{twonests}) are impossible.\\
{\flushleft{\bf Proof of Sublemma E}}
Consider configuration (\ref{twonests}): 
\[
  [\dots,\, A,\, \dots,\, C,\,\dots,\,C,\, \dots A,\, \dots,\, B,\, \dots,\, B,\, \dots].
\]
Neither $\vert x_1\vert$, nor $\vert x_n\vert$ can be a member of $\{E,\, E^{\prime}\}$, since this will allow a configuration of the form $\dots\,D,\,D,\,\dots$ occurring between the two copies of $A$ or the two copies of $B$, contradicting (a). By Sublemma C configurations as in (\ref{twonests}) must be of the form
$\lbrack A,\,\cdots,\,C,\,\cdots,\,C,\cdots,\,A,\,B,\,\cdots,\,B\rbrack$. To avoid a contradiction with premise (a), this configuration must be of the form
\[
  \lbrack A,\,\cdots,\,C,\,\cdots,x_i,\,\cdots\,C,\cdots,\,A,\,B,\,\cdots,\,x_j,\,\cdots,\,B\rbrack.
\]
where $\{\vert x_i\vert,\,\vert x_j\vert\} = \{E,\, E^{\prime}\}$. Applying premise (a) again we see that for each pointer $D$ between positions 1 and $i$, its mate is in the corresponding position between positions $i$ and the position of the mate of $A$. The same remark applies to the segment between the two $B$'s of the pointerlist. Thus the pointer list is of the form
\[
 \lbrack A_1,\, A_2,\,\cdots,\,A_k,\,x_i,\, A_k,\,\cdots,\, A_2,\, A_1,\, B_1,\, B_2,\,\cdots,\,B_t,\,x_j,\, B_t,\, \cdots,\, B_2,\, B_1\rbrack
\] 
But then both copies of $A_1$ are in odd positions, contradicting Lemma \ref{parityandvalue}.
This completes the proof of Sublemma E, and thus of Lemma \ref{CentralLemma}.
\end{proof}

\begin{center}{\bf Examples of pointer lists.}\end{center}

Let $\integers$ denote the set of integers. for an integer $z$ we define
\[
   \check{z}(1) =\begin{cases} z   & \mbox{if } z = \vert z\vert \\
                               z-1 & \mbox{otherwise } 
         \end{cases}
\] 
and  in all cases $\check{z}(2) = z(1)+1$. 

For a set $S$ the symbol $^{<\omega}S$ denotes the set of finite sequences with entries from $S$. 
Define the function $\pi:\,^{<\omega}\integers\rightarrow \,^{<\omega}\integers$ by:
\[
  \pi(\lbrack z_1,\cdots,z_k\rbrack) = \lbrack \check{z}_1(1),\check{z}_1(2),\cdots,\check{z}_k(1),\check{z}_k(2)\rbrack
\]
Thus, for example, $\pi(\lbrack -1,4,3,5,2,-9,7,10,-8,6\rbrack)$ is the sequence 
\[
  \lbrack -2,-1,4,5,3,4,5,6,2,3,-10,-9,7,8,10,11,-9,-8,6,7\rbrack.
\]

\begin{lemma}\label{pointerassignment}
For each finite sequence $M:=\lbrack s_1,\, s_2,\, \cdots, \, s_n\rbrack$ of non-zero integers such that there is an integer $m$ for which $\{\vert s_i\vert:1\le i\le n\} = \{m+1,\, \cdots,\,m+n\}$, the sequence $\pi(M)$ is a pointer list.   
\end{lemma}
\begin{proof}
The sequence $\lbrack \vert s_1\vert,\, \vert s_2\vert,\, \cdots,\,\vert s_n\vert\rbrack$ is a permutation of the numbers $m+1$ through $m+n$. Note that $m\ge 0$.

From the definition of $\pi$ we have for each $j$ that 
\[
  \check{s}_j(1), \check{s}_j(2) = \left\{\begin{tabular}{ll}
                                          m+i, m+i+1       & if $s_j$ = m+i    \\
                                          -(m+i)-1,-(m+i)  & if $s_j$ = - (m+i)\\ 
                                      \end{tabular}
                               \right. 
\]
Thus for each odd indexed entry $x_j$ in $\pi(M)$, the absolute values $\vert x_j\vert$ and $\vert x_{j+1}\vert$ are successive positive integers, meaning that stipulation (6) in the definition of a pointer list is satisfied by $\pi(M)$. It is also evident that stipulation (1) is satisfied. Also note that the smallest absolute value obtained by terms of $\pi(M)$ is $m+1$, and this is achieved by exactly one entry of $\pi(M)$. Similarly, the largest absolute value achieved is $m+n+1$, and is achieved by exactly one term in $\pi(M)$. Thus stipulations (2) and (3) are satisfied. Towards stipulation (4), consider an entry $x_i$ of $\pi(M)$ which is not of least or largest absolute value. Note that then we have $m+2\le m+t =  \vert x_i\vert\le m+n$. Choose $j$ such that $x_i = \check{s}_j(1)$, or $x_i=\check{s}_j(2)$. Find the $k$ for which $\vert s_k\vert = m+t-1$, and also find the $\ell$ for which $\vert s_{\ell}\vert = m+t+1$. Then $\vert x_i\vert$ is equal to exactly one of $\check{s}_k(1),\, \check{s}_k(2),\, \check{s}_{\ell}(1)$ or $\check{s}_{\ell}(2)$. Thus, stipulation (4) is satisfied. To see stipulation (5), observe that for any odd $i$, $\{x_i,\, x_{i+1}\} = \{\check{s}_j(1), \check{s}_{j}(2)\}$, and thus these two entries have the same sign. \end{proof}

\newpage
\section*{Appendix III: Verification of claimed properties of ciliate operations on pointer lists.}

\begin{theorem}\label{domain} If $P$ is a pointer list of length larger than $4$, then at least one of the following statements is true:
\begin{enumerate}
  \item ${\sf cde}(P)\neq P$;
  \item ${\sf cdr}(P)\neq P$;
  \item ${\sf cds}(P)\neq P$.
\end{enumerate}
\end{theorem}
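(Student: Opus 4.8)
The plan is to reduce Theorem \ref{domain} directly to the combinatorial Lemma \ref{CentralLemma}, by recognizing that the three conditions (a), (b), (c) appearing there are precisely the fixed-point conditions for the three operations ${\sf cde}$, ${\sf cdr}$ and ${\sf cds}$. Concretely, for a pointer list $P = \lbrack x_1, \cdots, x_m \rbrack$ I would first establish three equivalences: condition (a) holds if and only if ${\sf cde}(P) = P$; condition (b) holds if and only if ${\sf cdr}(P) = P$; and condition (c) holds if and only if ${\sf cds}(P) = P$. Granting these, the theorem is immediate: since $P$ has length larger than $4$, Lemma \ref{CentralLemma} guarantees that at least one of (a), (b), (c) is false, hence at least one of the three operations moves $P$, which is exactly the disjunction claimed.

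The verification of the equivalences is a direct reading of the definitions of the operations. For ${\sf cde}$, the definition returns $P$ exactly when there is no index $i$ with $x_i = x_{i+1}$, which is exactly assertion (a); so ${\sf cde}(P) = P$ iff (a). For ${\sf cdr}$, the definition returns $P$ exactly when there is no pair $i < j$ with $x_i = -x_j$. Since the entries are integers, $\vert x_i \vert = \vert x_j \vert$ together with $x_i \neq x_j$ forces $x_i = -x_j$, and conversely; choosing the smaller of the two indices supplies the required $i < j$. Hence ${\sf cdr}(P) = P$ iff condition (b) holds, and no appeal to parity is needed here. For ${\sf cds}$, the definition returns $P$ exactly when there are no indices $i < j < k < \ell$ with $x_i = x_k$ and $x_j = x_\ell$; such a quadruple is precisely an alternating (``crossing'') occurrence $A \cdots B \cdots A \cdots B$ of two pairs of mates, whose absence is what condition (c) asserts. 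Thus ${\sf cds}(P) = P$ iff (c).

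The main point requiring care is the ${\sf cds}$/(c) correspondence, since condition (c) quantifies over all labelings $i, j, k, \ell$ of the two mate-pairs and must be read as forbidding only the crossing pattern, leaving the nested pattern $i < j < \ell < k$ and the disjoint pattern $i < k < j < \ell$ as the permitted alternatives, in agreement with the interpretation used in the proof of Lemma \ref{CentralLemma}. I would make explicit that each absolute value labels at most one pair of mates, by stipulation (4) of Definition \ref{def: PointerList}, so that a witnessing quadruple for the failure of (c) really does consist of two distinct pairs in interleaved position and therefore yields a genuine ${\sf cds}$ move. Beyond this bookkeeping the argument is routine: all the genuine combinatorial work has been discharged in Lemma \ref{CentralLemma}, and Theorem \ref{domain} is in effect a dictionary translating its three syntactic conditions into the three operational statements.
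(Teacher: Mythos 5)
Your proposal is correct and takes essentially the same route as the paper: the paper's own proof also deduces the theorem from Lemma \ref{CentralLemma}, matching the failure of (a), (b), (c) respectively to a nontrivial application of ${\sf cde}$, ${\sf cdr}$, ${\sf cds}$ (for (b) via the same observation that $\vert x_i\vert=\vert x_j\vert$ with $x_i\neq x_j$ forces $x_i=-x_j$, and for (c) via the same identification of a violating quadruple with a crossing pair of mate-pairs). Your packaging as three fixed-point equivalences, with the explicit caveat that (c) must be read as forbidding only the crossing pattern and that stipulation (4) makes a witnessing quadruple consist of two genuinely distinct interleaved pairs, is just a slightly tidier statement of the paper's case analysis.
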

\begin{proof}
Let $P = \lbrack x_1,\, x_2,\,\cdots,\,x_{m-1},\, x_m\rbrack$ be a pointer list. By Lemma \ref{CentralLemma} at least one of the following three statements is false:
\begin{enumerate}
\item[(a)] $(\forall i)( x_i \neq x_{i+1})$
\item[(b)] $(\forall i)(\forall j)(\mbox{If } \vert x_i\vert = \vert x_j\vert, \mbox{ then } x_i = x_j)$
\item[(c)] $(\forall i)(\forall j)(\forall k)(\forall \ell)(\mbox{If }i \neq k,\, j\neq \ell,\, i < j \mbox{ and }x_i = x_k\mbox{ and }x_j = x_{\ell}, \mbox{ then either }\\ i < j < \ell < k \mbox{ or }i < k < j < \ell)$
\end{enumerate}

{\flushleft{\tt If statement (a) fails:}} Then for some $i$ we have $x_i = x_{i+1}$. For the minimal such $i$, ${\sf cde}(P) = \lbrack x_1,\cdots,x_{i-1},\,x_{i+2},\,\cdots,\, x_m\rbrack \neq P$.

{\flushleft{\tt If statement (b) fails:}} Fix the minimal $i $ for which there is a $j>i$ with $\vert x_i\vert = \vert x_j\vert$, but $x_i\cdot x_j<0$. Then 
\[
  {\sf cdr}(P) = \lbrack x_1,\cdots, x_{i},-x_j,\, -x_{j-1},\, \cdots,\, -x_{i+1},\, x_{j+1},\,\cdots,\, x_m\rbrack\neq P.
\] 

{\flushleft{\tt If neither (a) nor (b) fails:}} Then (c) fails. Fix $i$, $j$, $k$ and $\ell$ witnessing this failure. We may assume that $i<j$ and $x_i=x_k$, and $x_j=x_{\ell}$, and that $i\neq k$ and $j\neq \ell$. By stipulation (4) in the definition of a pointer list we have $x_i\neq x_j$, and thus $k\neq \ell$. Since (c) fails, the two configurations claimed by (c) are false for our witness. We have that $i<j<k<\ell$, or $i<\ell < k < j$, or $\ell<i<j<k$. In each case an application of cds results in a sequence ${\sf cds}(P)\neq P$. 
\end{proof}

\begin{theorem}[Pointer list preservation]\label{parityth}
Let $P = \lbrack x_1,\cdots,x_m\rbrack$ be a pointer list.
Then each of ${\sf cde}(P)$, ${\sf cdr}(P)$ and ${\sf cds}(P)$ is a pointer list.
\end{theorem}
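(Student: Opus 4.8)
The plan is to verify, one operation at a time, that the output satisfies all six stipulations of Definition~\ref{def: PointerList}. In each case the fixed-point branch returns $P$ unchanged and needs no argument, so I would treat only the genuinely acting branches. For the stipulations about length and absolute values --- (1), (2), (3), (4) --- I would first record a uniform observation: {\sf cdr} and {\sf cds} merely permute the entries of $P$ and possibly flip some signs, so they leave the multiset $\{\vert x_1\vert,\dots,\vert x_m\vert\}$ unchanged, and hence inherit (1)--(4) immediately. For {\sf cde} the deleted entries $x_i=x_{i+1}$ are mates, so by (2)/(3) their common absolute value is neither $\mu$ nor $\lambda$, and by (4) it occurs exactly twice; deleting both copies therefore removes that value outright, keeps the minimum and maximum unique, leaves every other mate pair intact, and drops the length by $2$ (still even), giving (1)--(4).

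The substance of the theorem lies in (5) and (6), the pairing and exclusion properties. The organizing idea is that each operation should act in a \emph{pair-aligned} way: the block it deletes, reverses, or interchanges ought to begin at an odd index and end at an even index, so that afterwards every consecutive (odd, even) slot of the output is either an untouched old pair or the image of an old pair under reversal-with-negation or under relocation. I would pin down the relevant parities using Lemma~\ref{valueandparity} and Lemma~\ref{parityandvalue}: for {\sf cdr} the triggering indices satisfy $x_i=-x_j$, so the mates have opposite sign and thus $i,j$ share parity; for {\sf cds} the relations $x_i=x_k$ and $x_j=x_\ell$ force $i,k$ of opposite parity and $j,\ell$ of opposite parity. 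Combining these with the minimality built into the operation definitions, I would try to show the moved segments align with pair boundaries, and then check that an old pair $(x_{2t-1},x_{2t})$ --- same-signed, ascending, with consecutive absolute values --- is carried to a pair with those same three properties: reversal-with-negation sends $(a,b)$ with $a\le b$ to $(-b,-a)$ with $-b\le -a$ and the same absolute-value set, while relocation transports a pair verbatim.

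The hard part will be stipulation (6), the exclusion property, for {\sf cdr} and {\sf cds}. Unlike (5), which is local to a single pair, (6) is global: the two absolute values in every pair must be consecutive among \emph{all} absolute values present, so one misaligned block would manufacture a pair whose two values straddle a value still occurring elsewhere. I would argue that pair-alignment, together with the fact that {\sf cdr}/{\sf cds} do not change which absolute values appear (and {\sf cde} deletes an entire value), is exactly what rescues (6): reversal and relocation carry consecutive-value pairs as rigid units, and in the merging case for {\sf cde} the deleted value is precisely the unique value lying between $\vert x_{i-1}\vert$ and $\vert x_{i+2}\vert$, which is what makes the merged pair $(x_{i-1},x_{i+2})$ legal. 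I would dispose of the boundary configurations (a block abutting position $1$ or $m$) by invoking Lemma~\ref{edgeslemma} to exclude the degenerate patterns, and use Lemma~\ref{CentralLemma} where a global split on which of (a),(b),(c) fails is convenient. I expect the parity bookkeeping required to confirm pair-alignment, rather than any single inequality, to be the genuine obstacle.
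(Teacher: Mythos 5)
Your handling of stipulations (1)--(4) is sound and essentially identical to the paper's. The genuine gap is in your organizing claim for (5) and (6): that the blocks moved by {\sf cdr} and {\sf cds} can be shown to be \emph{pair-aligned}, i.e.\ to begin at an odd index and end at an even one. The parity facts you yourself cite refute this. For {\sf cdr}, $x_i=-x_j$ forces $i$ and $j$ to have the \emph{same} parity (Lemma \ref{valueandparity}), so the reversed block $x_{i+1},\dots,x_j$ is pair-aligned exactly when $i$ is even; when $i$ is odd it begins at an even slot and ends at an odd one. For {\sf cds}, the two relocated blocks $x_{i+1},\dots,x_{j-1}$ and $x_k,\dots,x_\ell$ are both pair-aligned only in the single parity case $i$ even, $j$ odd. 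In all other cases old pairs are broken and genuinely new ``merged'' pairs appear --- for {\sf cdr} with $i$ odd these are $(x_i,-x_j)=(x_i,x_i)$ and $(-x_{i+1},x_{j+1})$ --- so no amount of minimality or parity bookkeeping can deliver the alignment you plan to prove; it is simply false. (The paper's own proof, by contrast, accepts the misalignment and analyzes the merged pairs case by case; your plan differs precisely in trying to avoid that analysis.)

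Moreover, the misaligned cases are not just harder: stipulation (6) can actually fail there, so the merged pairs cannot be verified at all. Take $P=\pi([2,-1,3])=[2,3,-2,-1,3,4]$, a pointer list. Then {\sf cdr} acts with $i=1$, $j=3$ and returns $[2,2,-3,-1,3,4]$, whose pair $(-3,-1)$ has the still-present value $2$ strictly between $1$ and $3$. Similarly $P=\pi([2,4,1,3,5])=[2,3,4,5,1,2,3,4,5,6]$ is a pointer list, {\sf cds} acts with $(i,j,k,\ell)=(1,2,6,7)$, and the output $[2,2,3,3,4,5,1,4,5,6]$ contains the pair $(1,4)$ while $2$ and $3$ are still present. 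So the statement as given is false for {\sf cdr} and {\sf cds}; the paper's proof stumbles at exactly this point (in the $i$-odd case for {\sf cdr} it observes that ``aside of $x_i=-x_j$'' nothing lies between the new pair's values, and then concludes (6) anyway, and for {\sf cds} the $i$-odd cases are not treated at all). Note that in each example the offending values occupy adjacent equal positions, so they are removed by the very next application of {\sf cde}: what is true, and what Theorem \ref{appIII:aftercdrcds} and the HNS algorithm actually rely on, is preservation by {\sf cde} itself (where your merging analysis is correct) and by the composites ${\sf E}\circ{\sf cdr}$ and ${\sf E}\circ{\sf cds}$. Any successful write-up along your lines would have to be recast in that form rather than for the bare operations.
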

\begin{proof}
We first verify that the equivalence of Lemma \ref{valueandparity} is preserved by an application of {\sf cde}, {\sf cdr} or {\sf cds} to a pointer list.

{\flushleft{\underline{Operation {\sf cde}: }}} We need to consider only the case when ${\sf cde}(P)\neq P$. Fix the smallest $i$ such that $x_i = x_{i+1}$. Then we have 
\[
   {\sf cde}(P) = \lbrack x_1,\, \cdots,\,x_{i-1},x_{i+2 },\cdots,x_m\rbrack.
\]
The parity of the position of each surviving term is the same as before, since the position number changed by 0 or by 2. Since {\sf cde} does not affect the signs of the terms in the original pointer list, the equivalence of Lemma \ref{valueandparity} still holds for ${\sf cde}(P)$.

{\flushleft{\underline{Operation {\sf cdr}: }}} Now assume that ${\sf cde}(P)=P$ and ${\sf cdr}(P)\neq P$. Choose the least $i$ such that for a $j>i$ we have $x_i = -x_j$. By Lemma \ref{valueandparity}, $i+j$ is even. Now 
\[
{\sf cdr}(P) = \lbrack x_1,\cdots,x_i, -x_j,\, -x_{j-1},\, \cdots,\, -x_{i+1},\, x_{j+1},\,\cdots,\, x_m\rbrack,
\]
and either $i$ and $j$ are even, or else $i$ and $j$ are odd. Thus there are an even number of terms moved and signs changed in this application of ${\sf cdr}$. Each of these terms also is moved to a position whose position number is of opposite parity of the original position number. Thus, the equivalence of Lemma \ref{valueandparity} still holds of ${\sf cdr}(P)$.

{\flushleft{\underline{Operation \sf cds: }}}  We may assume that ${\sf cde}(P) = P$. Suppose that ${\sf cds}(P) \neq P$, and choose the lexicographically least $(i,j,k,\ell)$ such that $i<j<k<\ell$ and $x_i = x_k$ and $x_j = x_{\ell}$.
By Lemma \ref{valueandparity} $i$ and $k$ have opposite parity, and $j$ and $\ell$ have opposite parity. Then {\sf cds}(P) is equal to
\[
\lbrack x_1,\cdots,x_{i-1},\,{\color{red}x_i},\,{\color{blue}x_k,\,\cdots,\,x_{\ell}},\,{\color{red}x_j},\, x_{j+1},\,\cdots,\,x_{k-1},\,{\color{red}x_{i+1},\,\cdots,\,x_{j-1}},\, x_{\ell+1},\,\cdots,x_m\rbrack.
\]
{\flushleft{\bf Case 1: } $i$ is even and $j$ is odd.} By Lemma \ref{valueandparity} we then also have that $k$ is odd and $\ell$ is even. Thus, an even number of blue terms are swapped with an even number of red terms, and the parities of the positions of all terms remain the same. Since no signs are changed during an application of {\sf cds}, the equivalence of Lemma \ref{valueandparity} remains true of ${\sf cds}(P)$.
{\flushleft{\bf Case 2:} $i$ is even and $j$ is even.} By Lemma \ref{valueandparity} we have that $k$ and $\ell$ are both odd. In this case an odd number of blue terms are swapped with an odd number of red terms, and no signs are changed. Since the terms's positions have the same parities as before, it follows the equivalence of Lemma \ref{valueandparity} still holds of ${\sf cds}(P)$.

The cases when $i$ is odd and $j$ even, or when $i$ is odd and $j$ is odd, use similar arguments.

What remains to be proved is that the result of applying any of {\sf cde}, {\sf cdr} or {\sf cds} to the pointer list $P$ is again a pointer list. 

Since neither of ${\sf cdr}$ or ${\sf cds}$ changes the number of terms of the list, and since ${\sf cde}$ deletes exactly two terms or none, the result has an even number of terms. Thus stipulation (1) in the definition of a pointer list is preserved. Since the terms least and largest in absolute value are unique, and only {\sf cde} removes consecutive terms that are equal, these two terms survive all applications of {\sf cde}, {\sf cdr} or {\sf cds}. Thus stipulations (2) and (3) in the definition of pointer lists is preserved by these operations.
 
Since only {\sf cde} removes terms that are adjacent and equal, and since none of the operations {\sf cde}, {\sf cdr} or {\sf cds} affects the absolute value of any term, also stipulation (4) in the definition of pointer lists is preserved by these operations.

We must verify stipulations (5) and (6).\\ 
\underline{Consider ${\sf cde}(P)$:} Suppose $i$ is minimal with $x_i = x_{i+1}$.\\
Case 1: i is odd:\\
Then ${\sf cde}(P) = \lbrack x_1,\,\,x_2,\,\,x_{i-1},\,\,x_{i+2},\,\,\cdots x_{m-1},\,\, x_m\rbrack$. Stipulation (6) remains true since the removal of the two consecutive terms do not change the parity of the remaining indices, and thus does not affect the truth of stipulation (6) for the remaining terms. The same reason shows that stipulation (5) remains true for ${\sf cde}(P)$.\\
Case 2: i is even:\\
Now by stipulation (5) we see that $x_{i-1}\le x_i = x_{i+1}\le x_{i+2}$, and these terms have the same sign. Upon applying ${\sf cde}$, we have $\lbrack x_1,\,\, \cdots,\,\, x_{i-1},\,\, x_{i+2},\,\,\cdots,\,\, x_m\rbrack$ and all stipulations of the definition of pointer list are still satisfied. We verify stipulation (6): By the exclusion property there are no $x_j$ with absolute value between the absolute values of $x_{i-1}$ and $x_i$, and no $x_j$ with absolute value between the absolute values of $x_{i+1}$ and $x_{i+2}$. Thus upon the removal of $x_i = x_{i+1}$, there is no $x_j$ with absolute value between the absolute values of $x_{i-1}$ and $x_{i+2}$. It follows that ${\sf cde}(P)$ still satisfies stipulation (6).

{\flushleft{\underline{Consider ${\sf cdr}(P)$:}}} Suppose $i$ is minimal such that for a $j>i$ we have $x_i = -x_{j}$. Application of ${\sf cdr}$ to $P$ yields
\[
{\sf cdr}(P) = \lbrack x_1,\cdots,x_i, {\color{blue}-x_j,\, -x_{j-1},\, \cdots,\, -x_{i+1}},\, x_{j+1},\,\cdots,\, x_m\rbrack.
\]
If $i$ is even then by Lemma \ref{valueandparity} $j$ is even and so $x_{j-1} < x_j$, implying that $-x_j< -x_{j-1}$, and by stipulation (6) there is no term from ${\sf cdr}(P)$ with absolute value between the absolute values of $-x_j$ and $-x_{j-1}$. Since $i$ is even we similarly have $x_{i-1}<x_i$ and there are no terms in ${\sf cdr}(P)$ with absolute value between the absolute values of $x_{i-1}$ and $x_i$. Also, as $i$ is even $i+1$ is odd, and so $-x_{i+1}$ is in an even parity position, and there still are no terms of ${\sf cdr}(P)$ with absolute value between the absolute values of $-x_{i+2}$ and $-x_{i+1}$.\\
If $i$ is odd, then by Lemma \ref{valueandparity} $j$ is odd. Thus $-x_{i+1}\le -x_i = x_j\le x_{j+1}$ and by stipulation (6) there are no terms of $P$ in absolute value between $\vert x_i\vert$ and $\vert x_{i+1}\vert$. Similarly there are none with absolute value in the interval $\vert x_j\vert$ and $\vert x_{j+1}\vert$. But then, aside of $x_i=-x_j$ there are no terms of ${\sf cdr}(P)$  with absolute values between $\vert x_{i+1}\vert$ and $\vert x_{j+1}\vert$. Since atipulation (6) for the other indices is not affected by ${\sf cdr}$ if follows that ${\sf cdr}(P)$ has still satisfies stipulation (6). Parity and sign arguments show that ${\sf cdr}(P)$ still meets stipulation (5) of the pointer list definition.

{\flushleft{\underline{Consider ${\sf cds}(P)$:}}} Choose the lexicographically least $(i,j,k,\ell)$ such that $i<j<k<\ell$ and $x_i = x_k$ and $x_j = x_{\ell}$. Then {\sf cds}(P) is
\[
\lbrack x_1,\cdots,x_{i-1},\,{\color{red}x_i},\,{\color{blue}x_k,\,\cdots,\,x_{\ell}},\,{\color{red}x_j},\, x_{j+1},\,\cdots,\,x_{k-1},\,{\color{red}x_{i+1},\,\cdots,\,x_{j-1}},\, x_{\ell+1},\,\cdots,x_m\rbrack.
\]
To verify stipulations (5) and (6) for ${\sf cds}(P)$, given $P$ satisfies stipulations (5) and (6), we argue as follows:

{\flushleft{\bf Case 1: } $i$ is even and $j$ is odd.} By Lemma \ref{valueandparity} $k$ is odd and $\ell$ is even. Thus, an even number of blue terms are swapped with an even number of red terms, and the parities of the positions of all terms remain the same and no signs are changed during an application of {\sf cds}. Since the parities are preserved, stipulation (5) is preserved. To see that stipulation (6) is preserved observe that no pairs of the form $x_t,\, x_{t+1}$ with $t$ of odd parity are disrupted by this instance of {\sf cds}. 

{\flushleft{\bf Case 2:} $i$ is even and $j$ is even.} By Lemma \ref{valueandparity} $k$ and $\ell$ are both odd. In this case an odd number of blue terms are swapped with an odd number of red terms, and no signs are changed. The only pair of the form $x_t,\, x_{t+1}$ with $t$ odd that is disrupted is the case when $t=\ell$. Since $\ell$ is odd and $j$ is even, $j-1$ is odd and we have $x_{j-1}\le x_j = x_{\ell}\le x_{\ell+1}$. It follows that stipulation (5) is still true, and that stipulation (6) still holds of ${\sf cds}(P)$.
\end{proof}


\begin{thebibliography}{}

\bibitem{AHMMS} K. Adamyk, E. Holmes, G. Mayfield, D.J. Moritz and M. Scheepers, \emph{Games, genomes and graphs}, in preparation.

\bibitem{AJSL}{} A. Angeleska, N. Jonoska, M. Saito and L. Landweber, \emph{RNA guided DNA assembly}, {\bf Journal of Theoretical Biology} 248 (2007), 706 - 720

\bibitem{ASWW} C. Anderson, M. Scheepers, M. Warner and H. Wauck, \emph{On permutations optimized for sorting by ciliate operations}, in preparation.

\bibitem{BP} V. Bafna and P.A. Pevzner, \emph{Sorting by reversals: Genome rearrangements in plant organelles and evolutionary history of the X chromosome}, {\bf Molecular Biology and Evolution} 12:2 (1995), 239 - 246.

\bibitem{BP2} V. Bafna and P.A. Pevzner, \emph{Sorting by transpositions}, {\bf Siam Journal of Discrete Mathematics} 11:2 (1998), 224 - 240.

\bibitem{BBEGP} N.H. Barton et al., \emph{Evolution}, {\bf Cold Spring Harbor Laboratory Press}, 2007 

\bibitem{BSRXSG} A. Bhutkar et al., \emph{Chromosomal rearrangement inferred from comparisons of 12 Drosophila genomes}, {\bf Genetics} 179 (2008), 1657 - 1680.

\bibitem{CB} P. Clote and R. Backofen, \emph{Computational Molecular Biology: An Introduction}, {\bf John Wiley \& Sons}, 2000.

\bibitem{CW} A. Coghran and K.H. Wolfe, \emph{Fourfold faster rate of genome rearrangement in nematodes than in drosophila}, {\bf Genome Research} 16 (2002), 857 - 867.

\bibitem{DS} Th. Dobzhansky and A.H. Sturtevant, \emph{Inversions in the chromosomes of Drosophila pseudoobscura}, {\bf Genetics}23 (1938), 28 - 64

\bibitem{EHPPR} A. Ehrenfeucht, T.Harju, I. Petre, D.M. Prescott and G. Rozenberg, \emph{Computation in Living Cells: Gene Assembly in Ciliates}, {\bf Springer-Verlag} 2004.

\bibitem{FLRTV} G. Fertin, A. Labarre, I. Rusu, E. Tannier and S. Vialette, \emph{Combinatorics of genome rearrangements}, {\bf MIT Press}, 2009.

\bibitem{HHH} M.W. Hahn, M.V. Han and S-G Han, \emph{Gene family evolution across 12 Drosophila genomes}, {\bf Plos  Genetics} 3:11 (2007), 2135 - 2146.

\bibitem{HP} S.Hannenhalli and P.A. Pevzner, \emph{Transforming cabbage into turnip: Polynomial algorithm for sorting signed permutations by reversals}, {\bf Journal of the ACM} 46:1 (1999), 1 - 27.
\bibitem{Hochman} B. Hochman, \emph{Analysis of Chromosome 4 in Drosophila Melanogaster II: Ethyl Methanesulfonate Induced Lethals}, {\bf Genetics} 67(2) (1971), 235 - 252.

\bibitem{DH} D. Hughes, \emph{Evaluating genome dynamics: the constraints on rearrangemetns within bacterial genomes}, {\bf Genome Biology} 1:6 (2000), 1-8. 

\bibitem{Jensen} R.A. Jensen, \emph{Orthologs and paralogs - we need to get it right}, {\bf Genome Biology} 2:8 (2001), 1002.1-1002.3

\bibitem{LSCW} W. Leung \emph{et al.}, \emph{Evolution of a distinct genomic domain in Drosophila: Comparative analysis of the dot chromosome in Drosophila melanogaster and Drosophila virilis}, {\bf Genetics} 185 (2010), 1519-1534.

\bibitem{Linetal} M.F. Lin {\emph et al.}, \emph{Revisiting the protein-coding gene catalog of Drosophila melanogaster using 12 fly genomes}, {\bf Genome Research} 17 (2007), 1823 - 1826.

\bibitem{MM} C. Mira and J. Meidanis, \emph{Sorting by block-interchanges and signed reversals}, {\bf  Proceedings of the Fourth International Conference on Information Technology: New Generations}\footnote{ISBN: 0-7695-2776-0
INSPEC Accession Number: 9465396, Digital Object Identifier :  10.1109/ITNG.2007.184} (IEEE Computer Society, Los Alamitos, USA) (2007) p. 670-676. 

\bibitem{MZC} M. M\"ollenbeck1, Y. Zhou, A.R.O. Cavalcanti, F. J\"onsson, B.P. Higgins, W.J. Chang, S. Juranek, T.G. Doak, G. Rozenberg, H.J. Lipps and L.F. Landweber, \emph{The pathway to detangle a scrambled gene}, {\bf Plos One} 3:6 (2008), 1 - 10.

\bibitem{HJM} H.J. Muller, \emph{Bearings of the ÔDrosophilaÕ work on systematics},  in {\bf The New Systematics}, edited by J. Huxley. Clarendon Press, Oxford 1940, pp. 185Ð268.

\bibitem{NDL} M. Nowacki, V. Vijayan, Y. Zhou, K. Schotanus, T.G. Doak and L.F. Landweber, \emph{RNA-mediated epigenetic programming of a genome-rearrangement pathway}, {\bf Nature} 451:10 (2008), 153 - 259

\bibitem{PT} P.A. Pevzner and G. Tessler, \emph{Genome rearrangements in mammalian evolution: Lessons from human and mouse genomes}, {\bf Genome Research} 13 (2003), 37-45.

\bibitem{Prescott} D.M. Prescott, \emph{The DNA of Ciliated Protozoa}, {\bf Microbiological Reviews} 58:2 (1994), p. 233-267.

\bibitem{Prescott2} D.M. Prescott, \emph{Genome gymnastics: Unique modes of DNA evolution and processing in ciliates}, {\bf Nature Reviews} 1 (2000), 191 - 198.

\bibitem{PER} D.M. Prescott, A. Ehrenfeucht and G. Rozenberg, \emph{Template guided recombination for IES elimination and unscrambling of genes in stichotrichous ciliates}, {\bf Journal of Theoretical Biology} 222 (2003), 3232 - 330

\bibitem{SB} S.W. Schaeffer \emph{et al.}, \emph{Polytene chromosomal maps of 11 Drosophila species: The order of genomic scaffolds inferred from genetic and physical maps}, \bf{Genetics} 179 (2008), 1601 - 1655.

\bibitem{SD} A.H. Sturtevant and Th. Dobzhansky, \emph{Inversions in the third chromosome of wild races of Drosophila pseudoobscura, and their use in the study of the history of the species}, {\bf Proceedings of the National Academy of Sciences}
22 (1936) 448–450.

\bibitem{YAF} S. Yancopoulos, O. Attie and R. Friedberg, \emph{Efficient sorting of genomic permutations by translocation, inversion and block interchange}, {\bf Bioinformatics} 21:16 (2005), 3340 - 3346.
\vspace{0.1in}

{\flushleft
\begin{tabular}{ll}
\author{J.L. Herlin}                                    & \hspace{0.75in}  \author{A. Nelson and M. Scheepers}          \\
School of Mathematical Sciences,                        & \hspace{0.75in}   Department of Mathematics,                   \\ 
University of Northern Colorado,                        & \hspace{0.75in}   Boise State University,                      \\
Greeley, CO 80639                                       & \hspace{0.75in}   Boise, ID 83725                              \\ 
\email{e-mail: herl1523@bears.unco.edu}                 & \hspace{0.75in}   \email{e-mail: annanelson1@u.boisestate.edu} \\
                                                        & \hspace{0.75in}   \email{e-mail: mscheepe@boisestate.edu (Communicating author)} \\
\end{tabular}}

\end{thebibliography}
\end{document}